%
%
%
%

\documentclass[a4paper,twoside,12pt]{article}
\usepackage{pict2e}
\usepackage{curve2e}
\usepackage{graphicx}
\usepackage{color}
\usepackage{dsfont}
\usepackage{ifthen}
\usepackage{enumitem}
\usepackage{amssymb}


\setlength{\textwidth}{160mm}
\setlength{\textheight}{221mm} 
\setlength{\oddsidemargin}{0.5\paperwidth}     %
\setlength{\oddsidemargin}{0.5238\paperwidth}  
\addtolength{\oddsidemargin}{-0.5\textwidth}   %
\setlength{\evensidemargin}{\paperwidth}       %
\addtolength{\evensidemargin}{-\oddsidemargin} %
\addtolength{\evensidemargin}{-\textwidth}     %
\addtolength{\oddsidemargin}{-1in}             %
\addtolength{\evensidemargin}{-1in}            %
\setlength{\marginparsep}{0.2\evensidemargin}  %
\setlength{\marginparwidth}{0.6\evensidemargin}%
\setlength{\headsep}{0.2\paperheight}          %
\addtolength{\headsep}{-0.2\textheight}        %
\setlength{\footskip}{\headsep}                %
\setlength{\headheight}{4mm}                   %
\setlength{\topmargin}{1.8\headsep}            %
\addtolength{\topmargin}{-\headheight}         %
\addtolength{\topmargin}{-1in}                 %


\setlength{\parskip}{1ex}
\setlength{\parindent}{2em}
\setlength{\topsep}{1ex}
\setlength{\parsep}{0mm}
\setlength{\itemsep}{0mm}


\makeatletter
\@addtoreset{equation}{section}
\@addtoreset{figure}{section}
\makeatother

\newtheorem{prop}{\bf Proposition}[section]
\newtheorem{lem}[prop]{\bf Lemma}
\newtheorem{cor}[prop]{\bf Corollary}

\newtheorem{thm}[prop]{\bf Theorem}
\newtheorem{rem}[prop]{\bf Remark}
\newtheorem{cond}[prop]{\bf Conditions}
\newenvironment{proof}
  {\begin{trivlist}\item[]{\bf Proof.}}
  {\hspace*{\fill}{$\bowtie$}\end{trivlist}}


\newcommand{\setR}{\mathds{R}}

\newcommand{\setN}{\mathds{N}}

\newcommand{\supIn}{\mathrm{in}}
\newcommand{\supOut}{\mathrm{out}}
\newcommand{\supLocal}{\mathrm{loc}}
\newcommand{\supGlobal}{\mathrm{glob}}
\newcommand{\supReturn}{\mathrm{return}}
\newcommand{\subSS}{\mathrm{ss}}
\newcommand{\subS}[1]{\mathrm{s#1}}
\newcommand{\subU}{\mathrm{u}}
\newcommand{\subC}{\mathrm{c}}
\newcommand{\subSSS}{{\mathrm{ss},\mathrm{s}}}
\newcommand{\subUS}{{\mathrm{u},\mathrm{s}}}
\newcommand{\SectionIn}{\Sigma^\supIn}
\newcommand{\SectionOut}{\Sigma^\supOut}
\newcommand{\SectionInInterior}{\Sigma^\supIn_{*}}
\newcommand{\SectionOutInterior}{\Sigma^\supOut_{*}}
\newcommand{\MapLocal}{\Psi^\supLocal}
\newcommand{\MapGlobal}{\Psi^\supGlobal}
\newcommand{\MapCombined}{\Psi}
\newcommand{\tloc}{t^\supLocal}
\newcommand{\muss}{\mu_\subSS}
\newcommand{\muu}{\mu_\subU}
\newcommand{\mus}[1]{\mu_\subS{#1}}
\newcommand{\xc}{x_\subC}
\newcommand{\xss}{x_\subSS}
\newcommand{\xu}{x_\subU}
\newcommand{\xs}[1]{x_\subS{#1}}
\newcommand{\xsss}{x_\subSSS}
\newcommand{\xus}{x_\subUS}
\newcommand{\xIn}{x^\supIn}
\newcommand{\xcIn}{x_\subC^\supIn}
\newcommand{\xssIn}{x_\subSS^\supIn}
\newcommand{\xuIn}{x_\subU^\supIn}
\newcommand{\xsIn}[1]{x_\subS{#1}^\supIn}
\newcommand{\xusIn}{x_\subUS^\supIn}
\newcommand{\xOut}{x^\supOut}
\newcommand{\xcOut}{x_\subC^\supOut}
\newcommand{\xssOut}{x_\subSS^\supOut}

\newcommand{\xsOut}[1]{x_\subS{#1}^\supOut}
\newcommand{\xsssOut}{x_\subSSS^\supOut}
\newcommand{\dx}{\delta}
\newcommand{\dxc}{\delta_\subC}
\newcommand{\dxss}{\delta_\subSS}
\newcommand{\dxu}{\delta_\subU}
\newcommand{\dxs}[1]{\delta_\subS{#1}}
\newcommand{\dxIn}{\delta^\supIn}
\newcommand{\dxcIn}{\delta_\subC^\supIn}
\newcommand{\dxssIn}{\delta_\subSS^\supIn}
\newcommand{\dxuIn}{\delta_\subU^\supIn}
\newcommand{\dxsIn}[1]{\delta_\subS{#1}^\supIn}

\newcommand{\dxcOut}{\delta_\subC^\supOut}
\newcommand{\dxssOut}{\delta_\subSS^\supOut}

\newcommand{\dxsOut}[1]{\delta_\subS{#1}^\supOut}
\newcommand{\gcss}{g_\mathrm{css}}
\newcommand{\gcs}[1]{g_\mathrm{cs#1}}
\newcommand{\gss}{g_\subSS}
\newcommand{\gs}[1]{g_\subS{#1}}
\newcommand{\gu}{g_\subU}
\newcommand{\neighborhood}{\mathcal{U}}
\newcommand{\diff}{\mathrm{\,d}}
\newcommand{\dist}{\mathrm{dist}}

\begin{document}
\pagestyle{empty}
\markboth
  {S.~Liebscher, A.D.~Rendall, S.B.~Tchapnda}
  {Oscillatory singularities in Bianchi models with magnetic fields}
\pagenumbering{arabic}

\vspace*{\fill}

\begin{center}
\LARGE\bfseries
Oscillatory singularities in \\
Bianchi models with magnetic fields
\end{center}

\vspace*{\fill}

\begin{center}
   \textbf{\large Stefan Liebscher}
\\ Freie Universit\"at Berlin, Institut f\"ur Mathematik
\\ Arnimallee 3, 14195 Berlin, Germany
\\ \texttt{stefan.liebscher@fu-berlin.de}
\\ ~
\\ \textbf{\large Alan D. Rendall}
\\ Max-Planck-Institut f\"ur Gravitationsphysik (Albert-Einstein-Institut)
\\ Am M\"uhlenberg 1, 14476 Potsdam, Germany
\\ \texttt{rendall@aei.mpg.de}
\\ ~
\\ \textbf{\large Sophonie Blaise Tchapnda}
\\ Department of Mathematics, University of Yaounde I
\\ P.O. Box 812, Yaounde, Cameroon
\\ \texttt{sophonieblaise@yahoo.com}
\end{center}

\vspace*{\fill}

\begin{center}
Preprint
\\
July 11, 2012 
\end{center}

\vspace*{\fill}

\clearpage

\vspace*{\fill}

\begin{abstract}\noindent
An idea which has been around in general relativity for more than forty years
is that in the approach to a big bang singularity solutions of the Einstein
equations can be approximated by the Kasner map, which
describes a succession of Kasner epochs. This is already a highly non-trivial
statement in the spatially homogeneous case. There the Einstein equations
reduce to ordinary differential equations and it becomes a statement that the 
solutions of the Einstein equations can be approximated by heteroclinic chains 
of the corresponding dynamical system. For a long time progress on 
proving a statement of this kind rigorously was very slow but recently there 
has been new progress in this area, particularly in the case of the vacuum 
Einstein equations. In this paper we generalize some of these results to the 
Einstein-Maxwell equations. It turns out that this requires new techniques 
since certain eigenvalues are in a less favourable configuration in the case 
with a magnetic field. The difficulties which arise in that case are
overcome by using the fact that the dynamical system of interest is of
geometrical origin and thus has useful invariant manifolds.   
\end{abstract}

\vspace*{\fill}

\cleardoublepage
\setcounter{page}{1}
\pagestyle{myheadings}

\section{Introduction}
\label{secIntroduction}

The fundamental equations of general relativity are the Einstein equations,
possibly coupled to other equations describing the dynamics of the matter which
generates the gravitational field. With a suitable choice of physical units 
the equations are
\begin{equation}
R_{\alpha\beta}-\frac12 Rg_{\alpha\beta}=T_{\alpha\beta}.
\end{equation}
The unknowns in these equations are the spacetime metric $g_{\alpha\beta}$ and 
the matter fields. $R_{\alpha\beta}$ is the Ricci tensor of the Lorentzian
metric $g_{\alpha\beta}$ and $R$ its trace. $T_{\alpha\beta}$ is the 
energy-momentum tensor. In this paper we are mainly concerned with the 
Einstein vacuum equations, where $T_{\alpha\beta}=0$, and the Einstein-Maxwell
equations. In the latter case the source of the gravitational field is
an electromagnetic field $F_{\alpha\beta}$ and the energy-momentum tensor is
given by
\begin{equation}
T_{\alpha\beta} \;=\;
  F_\alpha{}^\gamma F_{\beta\gamma}
  - \frac14 (F^{\gamma\delta}F_{\gamma\delta})g_{\alpha\beta}.
\end{equation}
The electromagnetic field tensor is antisymmetric 
($F_{\alpha\beta}=-F_{\beta\alpha}$) and satisfies the 
source-free Maxwell equations
\begin{equation}
\nabla^\alpha F_{\alpha\beta} \;=\; 0, \qquad
\nabla_\alpha F_{\beta\gamma} + \nabla_\gamma F_{\alpha\beta} 
  + \nabla_\beta F_{\gamma\alpha} \;=\; 0.
\end{equation}

It is well known that solutions of the Einstein equations generally develop
singularities. In particular, there are solutions relevant to cosmology in
which the singularity corresponds to the big bang. Belinskii, Khalatnikov
and Lifshitz (hereafter abbreviated to BKL) developed a heuristic picture
of the singularities in cosmological solutions of the Einstein equations. 
In this context they introduced a map of the circle to itself which we
refer to as the Kasner map. It will be defined precisely below. They
suggested that it provides a model for oscillations of the geometry in the
approach to the singularity. For the original work 
see \cite{BelinskiiKhalatnikovLifshitz1970-OscillatoryApproach} and 
\cite{BelinskiiKhalatnikovLifshitz1982-GeneralSolution}. 
A modern discussion of these ideas can be found in 
\cite{HeinzleUggla2009-Mixmaster}. 
An important idea in the BKL work is that spatially
inhomogeneous solutions of the Einstein equations can be approximated by
spatially homogeneous solutions near the singularity. Since from a 
mathematical point of view the dynamics of spatially homogeneous solutions
is still far from understood it is natural at the present time to concentrate
on understanding classes of spatially homogeneous solutions. This is the 
strategy we adopt in what follows.   

A long-standing question in mathematical cosmology is to relate the Kasner
map to the dynamics of actual solutions of the Einstein equations,
possibly with matter. An important recent advance in this field is the
paper \cite{LiebscherHaerterichWebsterGeorgi2011-BianchiA} 
where a relation of this kind was established 
in a special case. These results concern solutions of the vacuum 
Einstein equations of Bianchi types VIII and IX. They complement
earlier results of Ringstr\"om \cite{Ringstroem2000-CurvatureBlowup}, 
\cite{Ringstroem2001-BianchiIX} by 
providing a more detailed description of the dynamics of the approach to the 
singularity in certain cases. The work of Ringstr\"om on vacuum spacetimes was 
preceded by results of Weaver \cite{Weaver2000-MagneticBianchi}
on solutions of the 
Einstein-Maxwell equations of Bianchi type VI${}_0$ using a dynamical system
introduced in \cite{LeBlancKerrWainwright1995-MagneticBianchiVI}. 
The aim of this paper is to extend the results 
of \cite{LiebscherHaerterichWebsterGeorgi2011-BianchiA} 
to this case of the Einstein-Maxwell equations. 
There is other recent work on this question in the vacuum case 
\cite{Beguin2010-BianchiAsymptotics}, \cite{ReitererTrubowitz2010-BKL} 
but these papers use very different techniques from those 
which we will apply to the Einstein-Maxwell case and for this reason they will 
not be discussed further here.

In the next section the necessary background and the fundamental equations 
needed in the paper are introduced. The most important similarities and 
differences between the models with magnetic fields considered in what follows 
and the vacuum models which had previously been analysed are explained. The 
third section contains the main theorem and an exposition of the strategy of 
its proof. The central result is the existence of unstable manifolds of 
codimension one for some heteroclinic chains. To prove this it is necessary to
obtain estimates for a solution during its passages close to the Kasner
circle and for its behaviour between passages. This is done in sections 
\ref{secLocalPassage} and \ref{secReturnMap}, respectively. A central idea of 
the paper and one which is a major step
beyond what was achieved in the vacuum case is the use of a specially
constructed Riemannian metric to measure the distance between the heteroclinic
chains and the approximating smooth solutions. The last section discusses
future extensions of this research and interesting open problems.


\section{The basic set-up}
\label{secBasicSetup}

Spatially homogeneous spacetimes are those solutions of the Einstein-matter
equations where there is an action of a Lie group $G$ by 
isometries of $g_{\alpha\beta}$ with three-dimensional spacelike orbits which
leaves the matter fields invariant. The cases where the isotropy group is
discrete can be classified according to the Lie algebra of $G$. It is common 
in general relativity to use the terminology due to Bianchi, who 
introduced types I to IX. It is also common to distinguish between two
subsets of these types known as Class A and Class B. In what follows we
will only be concerned with Class A models. More information on this  
subject can be found in \cite{WainwrightEllis1997-Cosmology} or 
\cite{Rendall2008-Book}.

The analyses of vacuum spacetimes mentioned above are based on the well-known
Wainwright-Hsu system \cite{WainwrightHsu1989-BianchiA}. 
This is a system of ordinary
differential equations for five variables $(\Sigma_+,\Sigma_-,N_1,N_2,N_3)$
which are subject to one constraint. It includes all the Bianchi models of
Class A (i.e. types I, II, VI${}_0$, VII${}_0$,VIII and IX). The system is 
defined on a smooth hypersurface in $\setR^5$. An analogous system for Bianchi
spacetimes of type VI${}_0$ with a magnetic field was introduced in 
\cite{LeBlancKerrWainwright1995-MagneticBianchiVI}. 
It is also defined on a smooth hypersurface in $\setR^5$
and it includes solutions of types I and II with a magnetic field. The
variables are called $\Sigma_+,\Sigma_-,N_+,N_-,H$. The first two variables 
can be identified with the variables of the same name in the vacuum case
since they have the same geometrical meaning in both cases. The variables
$N_+$ and $N_-$ correspond in a similar way to certain linear combinations of
$N_2$ and $N_3$. More specifically, $N_+=\frac32 (N_2+N_3)$ and 
$N_-=\frac{\sqrt{3}}{2}(N_2-N_3)$. The variable $H$ corresponds to the 
magnetic field.

The dynamical system is
\begin{equation}\label{eqBianchiIIMagnetic}
\begin{array}{rcl}
\Sigma_+'&=&-2N_-^2(1+\Sigma_+)+\frac32 H^2(2-\Sigma_+),\\
\Sigma_-'&=&-(2N_-^2+\frac32 H^2)\Sigma_--2N_+N_-,\\
N_+'&=&(2\Sigma_+(1+\Sigma_+)+2\Sigma_-^2+\frac32 H^2)N_++6\Sigma_-N_-,\\
N_-'&=&(2\Sigma_+(1+\Sigma_+)+2\Sigma_-^2+\frac32 H^2)N_-+2\Sigma_-N_+,\\
H'&=&-(\Sigma_+(2-\Sigma_+)-\Sigma_-^2+N_-^2)H.
\end{array}
\end{equation}
The prime denotes a derivative with respect to a time variable $\tau$ which
tends to $-\infty$ as the singularity is approached.
These equations are taken from \cite{Weaver2000-MagneticBianchi}. 
They arise as a special case of the equations for models 
with a magnetic field and a perfect fluid derived in 
\cite{LeBlancKerrWainwright1995-MagneticBianchiVI} 
by setting the fluid density to zero. 
(Here a magnetic field means an electromagnetic field satisfying 
the condition that $F_{\alpha\beta}n^\beta=0$, where $n^\alpha$ 
is the unit normal vector to the group orbits.) 
Solutions are considered which satisfy the condition
\begin{equation}\label{eqBianchiIIMagneticConstraint}
\Sigma_+^2+\Sigma_-^2+N_-^2+\frac32 H^2 \;=\; 1.
\end{equation}
This condition follows from the Einstein equations and is preserved by the 
evolution equations for $(\Sigma_+,\Sigma_-,N_+,N_-,H)$ just defined. The 
inequalities $N_->0$, 
$N_+^2<3N_-^2$ and $H>0$ are assumed. These are also preserved by the evolution
and define the region which corresponds to Bianchi type VI${}_0$ solutions 
with non-zero magnetic field. Setting $H=0$ while maintaining the other 
two inequalities gives a representation of the vacuum solutions of Bianchi 
type VI${}_0$. Setting $N_-=0$ or $N_+=\sqrt{3}N_-$ gives two different
representations of solutions of type II with a magnetic field. Setting other
combinations to zero leads to vacuum solutions of type II, solutions of type I
with a magnetic field and vacuum solutions of type I (the Kasner solutions).
Note the invariant subspaces $\{N_2=0\}$ and $\{N_3=0\}$ which also 
appear in the Bianchi system with perfect fluid.
In fact the invariant subspaces $\{N_2=0\}$, $\{N_3=0\}$, $\{H=0\}$ 
will play a crucial role in our analysis, see (\ref{eqCondLocSubspaces}).

The circle defined by $\Sigma_+^2+\Sigma_-^2=1$ consists of stationary points.
Each one of them corresponds to a Kasner solution and so this set is called the
Kasner circle. There are three families of heteroclinic orbits between points on
the Kasner circle whose projections to the $(\Sigma_+,\Sigma_-)$-plane are 
straight lines. Two of these families correspond to vacuum solutions of 
Bianchi type II and occur in both the vacuum case and the case with magnetic 
field. In the vacuum case there is a third family related to these two by 
symmetries of the system. In the case where a magnetic field is included the 
two sets of Bianchi type II vacuum solutions are complemented by a family of 
Bianchi type I solutions with magnetic field. The projections of the latter 
to the  $(\Sigma_+,\Sigma_-)$-plane are identical to those of the third family 
of Bianchi type II solutions in the vacuum case. There is thus a natural 
correspondence between heteroclinic chains consisting of Bianchi type II
solutions in the vacuum case and heteroclinic chains in the case with a 
magnetic field which include orbits corresponding to both solutions 
of the vacuum Einstein equations of Bianchi type II and solutions of the  
Einstein-Maxwell equations of Bianchi type I. In the vacuum case there is a 
heteroclinic cycle consisting of three orbits and it is the central example
considered in \cite{LiebscherHaerterichWebsterGeorgi2011-BianchiA}. 
The projections of the orbits making up this 
cycle to the $(\Sigma_+,\Sigma_-)$-plane are related by rotations by multiples 
of $\frac{2\pi}{3}$. By what has already been said, there is a corresponding 
heteroclinic cycle in the system of 
\cite{LeBlancKerrWainwright1995-MagneticBianchiVI}. 
See also figure \ref{fig3cycle}.

\begin{figure}
\centering
\noindent%
\setlength{\unitlength}{0.2\textwidth}%
\begin{picture}(3.6,3.6)(-1.3,-1.8)
\put(0.5,0){\makebox(0,0){\includegraphics[width=3.6\unitlength]{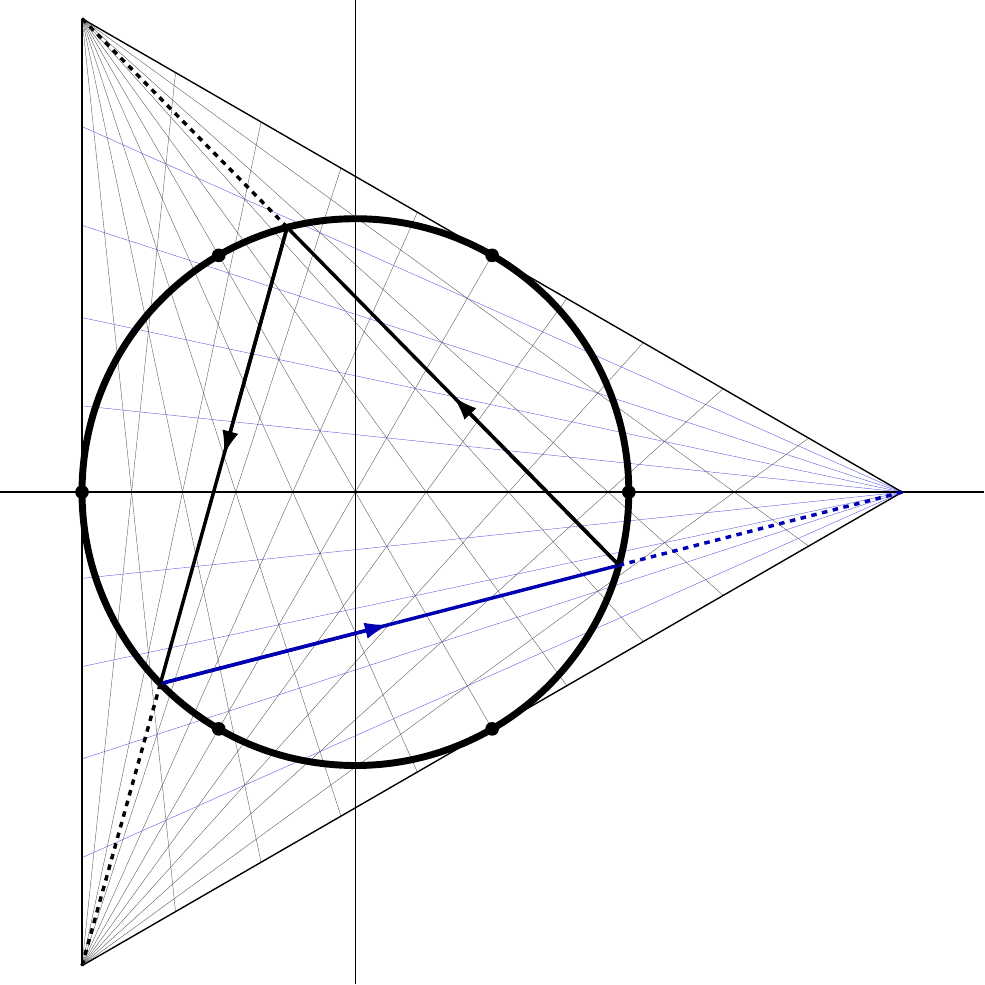}}}
\put( 2.3 ,-0.03){\makebox(0,0)[rt]{$\Sigma_+$}}
\put( 0.03, 1.8){\makebox(0,0)[lt]{$\Sigma_-$}}
\put( 0.91, 0.45){\makebox(0,0)[lb]{$\mathcal{K}_1$}}
\put( 0.0 , 1.03){\makebox(0,0)[b ]{$\mathcal{K}_2$}}
\put(-0.91, 0.45){\makebox(0,0)[rb]{$\mathcal{K}_3$}}
\put(-0.91,-0.45){\makebox(0,0)[rt]{$\mathcal{K}_4$}}
\put( 0.0 ,-1.03){\makebox(0,0)[t ]{$\mathcal{K}_5$}}
\put( 0.91,-0.45){\makebox(0,0)[lt]{$\mathcal{K}_6$}}
\put(-1.03,-0.03){\makebox(0,0)[rt]{$T_1$}}
\put( 0.53, 0.88){\makebox(0,0)[lb]{$T_2$}}
\put( 0.53,-0.88){\makebox(0,0)[lt]{$T_3$}}
\put( 1.03, 0.03){\makebox(0,0)[lb]{$Q_1$}}
\put(-0.53,-0.88){\makebox(0,0)[rt]{$Q_2$}}
\put(-0.53, 0.88){\makebox(0,0)[rb]{$Q_3$}}
\end{picture}
\caption{\label{fig3cycle}
  Kasner circle and period 3 cycle;
  Bianchi type II vacuum families in black, 
  Bianchi type I family with magnetic field in blue;
  Kasner intervals $\mathcal{K}_i$ bounded by 
  Taub points $T_i$ of tangencies and antipodal points $Q_i$.}
\end{figure}

The Kasner solutions can be written in the explicit form
\begin{equation}
-dt^2+t^{2p_1}dx^2+t^{2p_2}dy^2+t^{2p_3}dz^2.
\end{equation}
With a suitable choice of ordering the Kasner exponents $p_1$, $p_2$, $p_3$ 
are related to the variables $\Sigma_+$ and $\Sigma_-$  by
\begin{equation}
\begin{array}{rcl}
p_1&=&\frac13 (1-2\Sigma_+),\\
p_2&=&\frac13 (1+\Sigma_++\sqrt{3}\Sigma_-),\\
p_3&=&\frac13 (1+\Sigma_+-\sqrt{3}\Sigma_-).
\end{array}
\end{equation}
The eigenvalues of the linearisation of the Wainwright-Hsu system at a Kasner 
solution are $(6p_1,6p_2,6p_3)$ (\cite{WainwrightHsu1989-BianchiA}, p. 1425).
For the system describing solutions of the Einstein-Maxwell equations 
of Bianchi type VI${}_0$ introduced in 
\cite{LeBlancKerrWainwright1995-MagneticBianchiVI} 
the eigenvalues are $(3p_1,6p_2,6p_3)$ 
(\cite{LeBlancKerrWainwright1995-MagneticBianchiVI}, (4.1)). 
The Taub points of the Kasner circle are defined by 
the condition that the Kasner exponents are $(1,0,0)$ or a permutation thereof.
All points of the Kasner circle other than the Taub points have a 
one-dimensional stable manifold. For $p_1<0$ the stable manifold is defined by 
a solution with non-vanishing magnetic field and the corresponding eigenvalue 
is $3p_1<0$. For $p_2<0$ or $p_3<0$ the stable manifold is defined by a vacuum 
solution.

In order to have an overview of the relative sizes of the different eigenvalues
it is useful to introduce the Kasner parameter $u\in [1,\infty]$ which is 
defined implicitly by the relation (cf. \cite{HeinzleUggla2009-Mixmaster})
\begin{equation}
p_1p_2p_3=\frac{-u^2(1+u)^2}{(1+u+u^2)^3}.
\end{equation}
Then the Kasner exponents arranged in ascending order are given by
\begin{equation}
\begin{array}{rcl}
\tilde p_1&=&\frac{-u}{1+u+u^2},\\
\tilde p_2&=&\frac{1+u}{1+u+u^2},\\
\tilde p_3&=&\frac{u(1+u)}{1+u+u^2}.
\end{array}
\end{equation}
Note that $\tilde p_2\ge -\tilde p_1$ with equality only when $u=\infty$. On 
the other hand $\frac{\tilde p_2}{2}\le -\tilde p_1$ and 
$\frac{\tilde p_3}{2}\ge -\tilde p_1$. The Kasner
map is defined by $u\mapsto u-1$ for $u\ge 2$ and $u\mapsto (u-1)^{-1}$ 
for $u\le 2$. 
The central example in \cite{LiebscherHaerterichWebsterGeorgi2011-BianchiA}
is a set of heteroclinic 
orbits which form a cycle of order three. In fact the value of $u$ 
corresponding to that example is invariant under the Kasner map. It 
solves the 
equation $u^2-u-1=0$ and is the golden ratio $\frac12(1+\sqrt{5})$. The values 
of the Kasner exponents at the vertex of this cycle where $p_1<p_2<p_3$ are
\begin{equation}\label{eq3cyclePvalues}
\begin{array}{rcl}
p_1&=&\frac14 (1-\sqrt{5}),\\
p_2&=&\frac12,\\
p_3&=&\frac14 (1+\sqrt{5}).
\end{array}
\end{equation}

For each value of the Kasner parameter $u$ in the interval $(1,\infty)$ there
are six points on the Kasner circle where $u$ takes that value. Removing
the Taub points $T_i$ and their antipodal points $Q_i$ from the Kasner 
circle leaves a union of six intervals $K_i$, $1\le i\le 6$. They will be
numbered as follows. Let $K_1$ be the region where $p_1<0$ and $p_2>p_3$.
Then number the others consecutively while moving anticlockwise along the 
Kasner circle, see figure \ref{fig3cycle}.

In order to assess the stability of a heteroclinic cycle it is important
to examine the eigenvalues of the linearisation of the system at the 
vertices. In both the vacuum case and the case with magnetic field there is 
one negative eigenvalue $-\mu_1$ and two positive eigenvalues $\mu_2$, 
$\mu_3$. Without loss of generality the labelling can be chosen so that
$\mu_2\le\mu_3$. Then from what has been stated above it can be seen that in 
the vacuum case the inequalities $-\mu_1<\mu_2<\mu_3$ hold at any point of the 
Kasner circle except $T_i$ and $Q_i$. Call this the first linearisation 
condition. That this is true is one of the most important hypotheses of 
the main theorem of \cite{LiebscherHaerterichWebsterGeorgi2011-BianchiA}. 
On the other hand this condition can fail in the case with magnetic field.
It fails precisely when the Kasner exponent $p_1$ is intermediate in size
between $p_2$ and $p_3$, i.e. when the point of the Kasner circle lies in the
one of the sets $K_2$ and $K_5$. For then the eigenvalue $3p_1$ is smaller in
magnitude than that of the negative eigenvalue, while the other positive 
eigenvalue is not. Then we have the situation that $\mu_2<-\mu_1<\mu_3$. 
The situation that the eigenvalue $3p_1$ does not correspond to the 
eigendirection tangent to the heteroclinic orbit incoming towards the past is 
covered by the theorems in this paper. Call this the second linearisation 
condition. What is common to the first and second linearisation conditions is
that the eigenvalue corresponding to the heteroclinic orbit incoming towards
the past is larger in modulus that that corresponding to the heteroclinic
orbit outgoing towards the past. In the example of the 3-cycle at least one of 
the two linearisation conditions just introduced holds at each of the the 
vertices. 
See figure \ref{fig3cycle}, the second eigenvalue condition holds 
in the intervals $\mathcal{K}_2$ and $\mathcal{K}_5$, whereas 
the first eigenvalue condition holds in the remaining intervals.

The linearisation conditions are not in themselves enough to make the 
theorems in this paper work. Additional geometrical information is required. 
This is the existence of a certain invariant manifold. It is tangent to the 
space spanned by the vectors tangent to the stable manifold and the centre 
manifold and the eigenvector corresponding to the largest eigenvalue. For a 
general dynamical system there is no reason why a manifold of this kind should 
exist. In the example of a Bianchi model of type VI${}_0$ with magnetic field 
a manifold of this kind is defined by the vacuum solutions of type VI${}_0$
or the solutions of type II with magnetic field.


\section{Main result and sketch of proof}
\label{secMainResult}

We shall prove the following result on the dynamics 
of the Bianchi model of type VI${}_0$ 
(\ref{eqBianchiIIMagnetic}, \ref{eqBianchiIIMagneticConstraint})
with magnetic field.

\begin{thm}\label{thPeriod3Manifold}
The period 3 heteroclinic cycle given by (\ref{eq3cyclePvalues}) possesses a 
local codimen\-sion-one unstable manifold.
In other words, 
system (\ref{eqBianchiIIMagnetic}, \ref{eqBianchiIIMagneticConstraint}) 
admits a codimension-one manifold, locally close to the heteroclinic cycle,
of initial conditions whose backward trajectories converge 
to the heteroclinic cycle.
The manifold is locally Lipschitz continuous in the open complement of
the boundaries ${N_1=0}$, ${N_2=0}$, ${H=0}$. 
It is Lipschitz continuous in every closed cone 
intersecting these boundaries only in the heteroclinic cycle.
\end{thm}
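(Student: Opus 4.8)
The strategy is the one pioneered for the vacuum case in \cite{LiebscherHaerterichWebsterGeorgi2011-BianchiA}: construct the unstable manifold of the heteroclinic cycle as a fixed point of a graph-transform (Hadamard-style) argument applied to the return map obtained by composing local passages near the six relevant Kasner points with the global excursions along the heteroclinic connections. The novelty forced by the magnetic case is that at the vertices lying in $\mathcal{K}_2$ and $\mathcal{K}_5$ the first linearisation condition fails and only the second holds, so the naive Euclidean Lipschitz cone estimate for the local passage map is not strong enough. The remedy, announced in the introduction, is to measure distances not in the Euclidean metric but in a specially adapted Riemannian metric which is anisotropically weighted so that the troublesome eigendirection (the one with eigenvalue $3p_1$, tangent to the invariant manifold formed by the vacuum type $\mathrm{VI}_0$ / magnetic type II solutions) is rescaled to behave as if it were subdominant.

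First I would set up, at each vertex $\mathcal{K}_i$ of the cycle, cross-sections $\SectionIn$ and $\SectionOut$ transverse to the heteroclinic orbit, with coordinates $(\xc,\xss,\xu)$ or their magnetic analogues adapted to the centre (Kasner-circle), strong-stable, and unstable directions, together with an additional splitting of the unstable directions into the ``mild'' direction along the invariant manifold and the genuinely strong one. The local passage map $\MapLocal\colon\SectionIn\to\SectionOut$ is then analysed by linearisation plus the usual variation-of-constants estimates; the key quantitative outputs are: (i) contraction rates in the stable direction, (ii) expansion rates in the unstable directions, and (iii) the size of the ``shear'' coupling the centre direction into the unstable ones. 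Crucially one uses the invariant subspaces $\{N_1=0\}$, $\{N_2=0\}$, $\{H=0\}$ (as flagged after (\ref{eqBianchiIIMagneticConstraint})) to confine part of the dynamics: on these subspaces the flow is exactly the lower-dimensional one, so the candidate manifold is forced to contain the corresponding sub-cycles, and this is what yields the Lipschitz regularity statement in closed cones meeting the boundary faces only at the cycle. The global map $\MapGlobal\colon\SectionOut\to\SectionIn$ (next vertex) is a near-identity diffeomorphism along the heteroclinic connection, estimated by smooth dependence on initial conditions; I would track how it distorts the adapted metric, since composing six local and six global maps must still produce a uniform contraction of the relevant cone field.

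The heart of the argument is then the cone/graph-transform step: one shows that the composed return map $\MapCombined$ maps a suitable family of Lipschitz graphs over the unstable directions into itself and is a contraction there in the adapted metric, so it has a unique fixed graph; that graph, saturated by the backward flow, is the desired codimension-one unstable manifold, and convergence of backward trajectories to the cycle follows from the strict contraction. The Lipschitz (rather than smooth) conclusion, and its breakdown at the boundary faces, is exactly the expected outcome of a Hadamard graph transform with non-resonant but only borderline-favourable eigenvalue ratios. \textbf{The main obstacle} is precisely verifying that the second linearisation condition, combined with the geometric input of the invariant manifold, suffices to make the cone condition $\mu_2 < -\mu_1 < \mu_3$ compatible with a single global contraction: one must choose the weights in the Riemannian metric so that along the $\mathcal{K}_2$ and $\mathcal{K}_5$ passages the expansion in the mild unstable direction, though weaker than the stable contraction, is still dominated after accounting for the centre-direction shear, and one must check this choice is consistent simultaneously at all six vertices of the 3-cycle. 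Getting a single metric that works globally along the whole cycle — rather than six incompatible local ones — is where the geometric origin of the system, via its invariant manifolds, has to be exploited in an essential way.
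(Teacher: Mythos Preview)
Your overall strategy matches the paper's: decompose the return map into local passages $\MapLocal_k$ and global excursions $\MapGlobal_k$, establish cone conditions in an adapted metric, and run a Hadamard graph transform on sequences of Lipschitz graphs $\xc=\zeta_k(\xus)$. Two corrections are worth making.

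First, a minor one: the 3-cycle has three vertices, not six; you compose three local passages and three global excursions per period.

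Second, and more substantively, your description of the adapted metric misses the point. It is \emph{not} a constant anisotropic reweighting of the ``mild'' eigendirection. The metric the paper uses on $\SectionIn$ is position-dependent,
\[
\diff s^2_* \;=\; \frac{\|\xus\|_2^2}{\xu^2}\,\diff\xu^2 + \sum_\ell \frac{\|\xus\|_2^2}{\xs{\ell}^2}\,\diff\xs{\ell}^2 + \diff\xc^2,
\]
and it becomes \emph{singular} on each invariant hyperplane $\{\xu=0\}$, $\{\xs{\ell}=0\}$. This singularity is precisely what produces the regularity statement in the theorem: the manifold is Lipschitz with respect to $\dist_*$, hence Euclidean-Lipschitz on every closed cone bounded away from the invariant faces (where $\dist_*$ and the Euclidean distance are uniformly equivalent), but one gets no control on the faces themselves except at the cycle. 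So the regularity clause is an output of the metric, not an input via sub-cycles on the boundary.

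Relatedly, the invariant subspaces (loc-v) are not used to make metrics compatible around the cycle --- the \emph{same} form of metric is used at every section, so there is no gluing problem. Their role is to force the vector field into the factorised form $\dot\xs{\ell}=(\cdots)\xs{\ell}$, $\dot\xu=(\cdots)\xu$, which is what makes the local passage Lipschitz in $\dist_*$ even when $\mus{\ell}<\muu$ (Theorem~\ref{thLocalMapLipschitz}); they are then used once more to show the global excursion has bounded derivative in $\dist_*$ (Lemma~\ref{thGlobalBoundsInNewMetric}). The genuine obstacle is therefore the single-passage Lipschitz estimate in this singular metric, not the global compatibility you flag.
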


The proof will only use certain properties of the particular structure of the
Bianchi system (\ref{eqBianchiIIMagnetic}) and can be sketched as follows.

\textbf{Step 1: local passage}, section \ref{secLocalPassage}.
In a neighbourhood of the equilibria of the heteroclinic cycle, 
i.e. close to the Kasner circle, the Bianchi system 
(\ref{eqBianchiIIMagnetic}, \ref{eqBianchiIIMagneticConstraint}) 
with reversed time direction can be smoothly transformed 
to a vector field 
\[
\dot{x}=f(x), \quad
x=(\xu,\xss,\xs{},\xc)
  \in\setR\times\setR\times\setR^N\times\setR,
\]
that satisfies the following properties:

\begin{cond}\label{condLocalPassage}
\begin{description}[itemsep=0ex]
\item[(loc-i)] 
There is a straight line of equilibria,
\begin{equation}\label{eqCondLocEquilibria}
f(0,0,0,\xc) \equiv 0.
\end{equation}
\item[(loc-ii)] 
The heteroclinic orbits of the original system 
correspond to the $\xss$- and $\xu$-axes.
\item[(loc-iii)]
The linearisation at the origin has the almost diagonal form
\begin{equation}\label{eqCondLocLinearization}
Df(0) = \left(\begin{array}{cccccc}
  \muu\\
  &-\muss\\
  &&-\mus{1}\\
  &&&\ddots\\
  &&&&-\mus{N}\\
  {}*&*&*&\cdots&*&0
\end{array}\right),
\end{equation}
with $\muu, \muss, \mus{1}, \ldots, \mus{N} > 0$.
\item[(loc-iv)] 
The eigenvalue corresponding to the incoming direction 
is stronger than the eigenvalue corresponding to the outgoing direction, 
\begin{equation}\label{eqCondLocEigenvalues}
\muu / \muss < 1.
\end{equation}
\item[(loc-v)] 
The codimension-one subspaces
\begin{equation}\label{eqCondLocSubspaces}
\{\xu = 0\}, \quad \{\xss = 0\}, \quad \{\xs{1} = 0\}, \ldots, \{\xs{N} = 0\}
\end{equation}
are invariant under the flow.
\end{description}
\end{cond}

Note that, for the model
(\ref{eqBianchiIIMagnetic}, \ref{eqBianchiIIMagneticConstraint}),
we have $N=1$.
However the generalization $N>1$ is needed for applications discussed at the 
end of this section and in section \ref{secDiscussion}.

The last two properties are the most crucial for the proof. 
The invariance of the codimension-one subspaces is non-generic 
for systems admitting (loc-i)--(loc-iv) and is a very strong constraint
on the system.

Under the above assumptions the local passage map, 
$\MapLocal:\SectionIn\to\SectionOut$, 
from an in-section $\SectionIn=\{\xss=\varepsilon\}$
to an out-section $\SectionOut=\{\xu=\varepsilon\}$,
for sufficiently small $\varepsilon$, is a Lipschitz-continuous map 
with arbitrarily little change of the $\xc$-component and arbitrarily 
strong contraction transverse to the $\xc$-component.
Unfortunately, this only holds true with respect to a non-Euclidean metric
and represents one of the main difficulties of our investigation.
Indeed, for $\mus{1}<\muu$ even a linear vector field $f$ never gives rise 
to a Lipschitz-continuous map $\MapLocal$ with respect 
to the Euclidean metric.  
The case $\muu < \muss,\mus{1},\ldots\mus{N}$, on the other hand, 
does yield a Lipschitz-continuous map $\MapLocal$ and has been treated in 
\cite{LiebscherHaerterichWebsterGeorgi2011-BianchiA}.

\textbf{Step 2: global excursion}, section \ref{secReturnMap}.
Close to the heteroclinic chain, 
by smooth dependence on initial conditions,
the trajectories follow the heteroclinic orbit from the out-section of a
local passage to the in-section of the next local passage.
This map, $\MapGlobal_k:\SectionOut_k\to\SectionIn_{k+1}$,
is a uniformly bounded diffeomorphism.
In particular, any deformation imposed by $\MapGlobal$ in directions transverse 
to $\xc$ will turn out to be dominated by the strong contraction of the local 
passage map $\MapLocal$.
In $\xc$-direction, however, we gain an expansion given by the Kasner map.
Thus, the global excursion $\MapGlobal_k:\SectionOut_k\to\SectionIn_{k+1}$
given by the Bianchi system 
(\ref{eqBianchiIIMagnetic}, \ref{eqBianchiIIMagneticConstraint}) 
with reversed time direction satisfies

\begin{cond}\label{condGlobalExcursion}
\begin{description}[itemsep=0ex]
\item[(glob-i)] 
$\MapGlobal_k$ maps the origin of $\SectionOut_k$ to the origin of 
$\SectionIn_{k+1}$ and (local neighbourhoods of 0 of) the invariant subspaces
$\{\xss = 0\}$, $\{\xs{1} = 0\}, \ldots, \{\xs{N} = 0\}$ onto 
$\{\xu = 0\}$, $\{\xs{1} = 0\}, \ldots, \{\xs{N} = 0\}$ (in arbitrary order).
\item[(glob-ii)]
$\MapGlobal_k$ is a $\mathcal{C}^2$ Diffeomorphism. The bounds
$\|D\MapGlobal_k\|$, $\|D(\MapGlobal_k)^{-1}\|$, $\|D^2\MapGlobal_k\|$, 
$\|D^2(\MapGlobal_k)^{-1}\| < M$ are independent of $k$.
\item[(glob-iii)]
It uniformly expands in $\xc$-direction at the boundary. 
In other words,
$\MapGlobal_k:\{\xssOut = 0, \xsOut{} = 0\} \to \{\xuIn = 0, \xsIn{} = 0\}$
is Lipschitz continuous and its inverse 
$(\MapGlobal_k|_{\{\xssOut = 0, \xsOut{} = 0\}})^{-1}$ has Lipschitz constant 
less than $L<1$, independent of $k$.
\end{description}
\end{cond}

\textbf{Step 3: graph transform}, section \ref{secReturnMap}.
Combining local passage and global excursion yields maps 
from each in-section to the next,
$\MapCombined=\MapGlobal_k\circ\MapLocal_k:\SectionIn_k\to\SectionIn_{k+1}$,
with uniform cone conditions.
A standard graph-transform technique now yields the claimed invariant manifold 
as a fixed point in the space of Lipschitz-continuous graphs
$\xc^k=\xc^k(\xu^k, \xs{}^k)$ in $\SectionIn_k$.
For completeness of presentation, we will give the necessary 
arguments in section \ref{secReturnMap}.

In fact, steps 1--3 prove a much more general theorem 
than \ref{thPeriod3Manifold}, that is:

\begin{thm}\label{thAbstractSequenceManifold}
Let a $\mathcal{C}^4$ vector field and a chain of heteroclinic orbits $h^k(t)$,
\[
\lim_{t\to\infty} h^{k-1}(t) = p^k = \lim_{t\to-\infty} h^k(t), \qquad k\in\setN.
\]
be given. Assume that locally near $p^k$ assumptions (loc-i)--(loc-v) hold and 
that along each $h^k$ assumptions (glob-i)--(glob-iii) hold, 
with constants $\alpha, L, M$ independent of $k$.

Then there exists a local codimension-one stable manifold 
to the heteroclinic chain, 
i.e. a codimension-one manifold of initial conditions following 
the heteroclinic chain and converging to it.

The heteroclinic chain itself is contained in the boundary of this manifold.
The manifold is locally Lipschitz continuous in the open complement of the
invariant subspaces (loc-v). 
The manifold is uniformly Lipschitz continuous in every closed cone 
intersecting the invariant subspaces only in the heteroclinic chain itself.
\end{thm}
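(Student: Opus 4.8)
The plan is to prove Theorem~\ref{thAbstractSequenceManifold} by constructing the invariant manifold section-by-section as a graph over the $(\xu,\xs{})$-variables, using a graph-transform (Hadamard–Perron) argument adapted to the non-hyperbolic, non-Euclidean setting forced by (loc-iv). The backbone of the argument is the composition $\MapCombined_k = \MapGlobal_k\circ\MapLocal_k:\SectionIn_k\to\SectionIn_{k+1}$, so the first and most substantial task is a careful analysis of the local passage map $\MapLocal_k:\SectionIn_k\to\SectionOut_k$, which is the content of Step~1 and section~\ref{secLocalPassage}. Here I would first use (loc-iii), (loc-iv) and (loc-v) to build an adapted Riemannian metric near each $p^k$ in which $\MapLocal_k$ becomes Lipschitz: the key point is that because the codimension-one subspaces $\{\xs{j}=0\}$ and $\{\xss=0\}$ are invariant, the ratio $\xs{j}/\xu^{\mus{j}/\muu}$ (and similarly $\xss/\xu^{\muss/\muu}$) is controlled along orbits even when $\mus{j}<\muu$; one then rescales the transverse directions by appropriate powers of the (essentially monotone) $\xu$-coordinate to absorb the bad exponent ratios. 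The upshot I want is: in this metric $\MapLocal_k$ changes the $\xc$-component by an arbitrarily small amount (as $\varepsilon\to 0$) and contracts arbitrarily strongly in the directions transverse to $\xc$, uniformly in $k$.

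With the local passage understood, Step~2 is comparatively soft: (glob-i)–(glob-iii) package exactly what regular dependence on initial conditions plus the geometric structure of the heteroclinic connection give, namely that $\MapGlobal_k$ is a uniformly $\mathcal{C}^2$-bounded diffeomorphism fixing the origin, permuting the relevant invariant subspaces, and (crucially) expanding the $\xc$-direction on the boundary with inverse Lipschitz constant $L<1$ — this is where the Kasner-map expansion enters. Composing, I obtain for $\MapCombined_k$ a uniform cone condition: there is a cone field $\mathcal{C}^u$ around the $(\xu,\xs{})$-subspace in each $\SectionIn_k$ that is forward-invariant under $D\MapCombined_k$ (because the strong transverse contraction of $\MapLocal_k$ beats the bounded distortion of $\MapGlobal_k$), while the $\xc$-direction is expanded by a factor $\ge L^{-1}>1$. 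The verification amounts to tracking how the metric rescaling in $\SectionIn_k$ relates to that in $\SectionIn_{k+1}$ and checking the constants $\alpha,L,M$ compose to something still contracting/expanding in the right senses — routine once the local estimates are in hand.

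Step~3 is then the standard graph transform. I work in the space of sequences of Lipschitz graphs $\{\xc^k = \phi_k(\xu^k,\xs{}^k)\}$ with Lipschitz constant bounded by the cone-opening, equipped with the sup-distance in the adapted metrics; the cone condition makes the induced action $\phi \mapsto \mathcal{G}(\phi)$, where $\mathcal{G}(\phi)_k$ is the graph whose preimage under $\MapCombined_{k-1}$ is $\phi_{k-1}$, well-defined and a contraction (the contraction rate being governed by the product of the transverse-contraction rate of $\MapLocal$ and the $\xc$-expansion $L^{-1}$, which is $<1$ after the expansion is divided out). The unique fixed point is the desired codimension-one manifold; its backward trajectories converge to the chain because each application of $\MapCombined_k^{-1}$ strictly contracts transversally. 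Finally, the regularity claims: Lipschitz continuity in the open complement of the invariant subspaces and uniform Lipschitz continuity in closed cones touching those subspaces only along the chain follow from the same estimates, because the metric rescaling degenerates precisely on $\{\xu=0\}\cup\{\xs{j}=0\}$, so one gets uniform bounds only after staying away from these — or, in a closed cone, only because the cone forces $\xs{j}/\xu$ to stay bounded.

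The main obstacle, and the genuinely new difficulty relative to \cite{LiebscherHaerterichWebsterGeorgi2011-BianchiA}, is Step~1: making $\MapLocal_k$ Lipschitz when $\mus{1}<\muu$. In the Euclidean metric this simply fails, so the whole argument hinges on exhibiting the right adapted (non-Euclidean, orbit-dependent) metric and proving the local passage estimates in it — in particular that the distortion introduced by passing between the Euclidean structure used for $\MapGlobal$ and the adapted structure used for $\MapLocal$ is itself uniformly controlled. This is where the invariance of the codimension-one subspaces (loc-v), which is non-generic, is indispensable: it is what lets one write the transverse coordinates as bounded multiples of powers of $\xu$ and thereby tame the unfavourable eigenvalue ratio. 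Everything downstream (cone conditions, graph transform, regularity) is then a careful but essentially standard bookkeeping exercise.
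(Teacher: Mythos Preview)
Your three-step architecture (local passage in an adapted metric, global excursion, graph transform) is exactly the paper's strategy, and you have correctly identified that the heart of the matter is Step~1 together with the indispensability of (loc-v). Two points deserve correction, however.

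First, your graph transform runs the wrong way. You define $\mathcal{G}(\phi)_k$ as the \emph{image} of $\phi_{k-1}$ under $\MapCombined_{k-1}$ (``the graph whose preimage under $\MapCombined_{k-1}$ is $\phi_{k-1}$''). But the graphs are of the form $\xc=\phi(\xus)$, with fibre in the $\xc$-direction, and $\xc$ is \emph{expanded} by $\MapCombined_k$ (this is precisely the Kasner expansion (glob-iii)). Pushing forward therefore expands the sup-distance between graphs by a factor $\sim L^{-1}>1$; your phrase ``$<1$ after the expansion is divided out'' has no operational meaning here. The paper instead pulls back: one defines $G_k\zeta_{k+1}$ by $\mathrm{graph}(G_k\zeta_{k+1})=\MapCombined_k^{-1}(\mathrm{graph}(\zeta_{k+1}))$, and the contraction constant is $1/(K_\subC(1-\sigma^2))<1$, coming from the $\xc$-expansion of $\MapCombined_k$ read backwards. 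The same orientation error appears in your cone statement: the cone around the $(\xu,\xs{})$-subspace is \emph{backward}-invariant under $\MapCombined_k$, not forward-invariant; the forward-invariant cone is the one around the $\xc$-axis. Once you reverse the direction, your Step~3 becomes correct and agrees with the paper.

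Second, the adapted metric the paper actually uses is not a rescaling of the transverse directions by powers of $\xu$. It is
\[
\diff s_*^2 \;=\; \frac{\|\xus\|_2^2}{\xu^2}\,\diff\xu^2
\;+\;\sum_{\ell}\frac{\|\xus\|_2^2}{\xs{\ell}^2}\,\diff\xs{\ell}^2
\;+\;\diff\xc^2
\]
on $\SectionIn$ (and the analogous expression on $\SectionOut$), i.e.\ each transverse direction is weighted by $\|\xus\|_2/|\xs{\ell}|$, not by a power of $\xu$ alone. This choice is what makes the Lipschitz estimate in Theorem~\ref{thLocalMapLipschitz} go through with exponent $0<\beta<\min\{\muss/\muu-1,\mus{\ell}/\muu\}$, and it is also what explains the regularity statement: the metric degenerates exactly on the invariant hyperplanes $\{\xs{\ell}=0\}$, $\{\xu=0\}$ (not merely on $\{\xu=0\}$), and is uniformly equivalent to the Euclidean metric on any closed cone bounded away from all of them --- which is precisely the Lipschitz claim in the theorem. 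Your ``powers of $\xu$'' heuristic captures the right intuition about why (loc-v) helps, but you should expect to arrive at this symmetric logarithmic-type metric rather than a $\xu$-weighted one when you carry out the estimates.
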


This theorem covers not only the period 3 cycle of the Bianchi VI${}_0$ system
with magnetic field. 
In fact, it applies to every heteroclinic chain in the Bianchi VI${}_0$ system 
with magnetic field that does not accumulate at any Taub point 
(required for uniformity of bounds) and such that the chain does not contain
heteroclinic orbits of the magnetic family to points in the 
domain $K_2\cup K_5$. See also figure \ref{fig3cycle}.
It also applies to every heteroclinic chain of the Bianchi A (VIII and IX) 
system without magnetic field but with ideal fluid 
as investigated in \cite{LiebscherHaerterichWebsterGeorgi2011-BianchiA},
as long as it does not accumulate at any Taub point. 
The proof given here completes the arguments sketched in the discussion 
section of \cite{LiebscherHaerterichWebsterGeorgi2011-BianchiA} 
and relaxes the constraint on the matter model required there. 
All matter models that yield positive eigenvalues of the linearisation at 
the Kasner circle in the non-vacuum direction are included, 
due to the relaxed eigenvalue condition.


\section{Local passage near a line of equilibria}
\label{secLocalPassage}

In this section we study the passage of trajectories under a general flow 
near a line of equilibria with eigenvalue constraint (\ref{eqCondLocEigenvalues})
and invariant subspaces (\ref{eqCondLocSubspaces}) 
consistent with the Kasner circle 
in the Bianchi VI${}_0$ system with magnetic field.
We will collect estimates on expansion and contraction rates 
to establish Lipschitz properties of the local map between sections 
to a reference orbit given by the passage near the line of equilibria, 
see theorem \ref{thLocalMapLipschitz} at the end of this section.
Compared to \cite{LiebscherHaerterichWebsterGeorgi2011-BianchiA}(section 3),
we assume the relaxed eigenvalue condition (\ref{eqCondLocEigenvalues})
without any constraint on $\mus{1},\ldots,\mus{N}$. 
This requires the use of a non-Euclidean
metric (\ref{eqMetricIn}, \ref{eqMetricOut}).

Consider a $\mathcal{C}^k$ vector field, $k\ge 4$,
\begin{equation}\label{eqGeneralLocalVectorField}
\dot{x} \;=\; f(x), \quad
x \;=\; (\xu,\xss,\xs{},\xc)
  \;\in\; \setR\times\setR\times\setR^N\times\setR,
\end{equation}
that satisfies conditions \ref{condLocalPassage} 
in a neighbourhood of the origin. 
Due to the invariant subspaces (\ref{eqCondLocSubspaces}), 
the form of the linearisation (\ref{eqCondLocLinearization}) holds 
locally all along the line of equilibria,
\begin{equation}\label{eqGeneralCrudeLinearization}
Df(0,0,0,\xc) \;=\; \left(\begin{array}{cccccc}
  \muu(\xc)\\
  &-\muss(\xc)\\
  &&-\mus{1(\xc)}\\
  &&&\ddots\\
  &&&&-\mus{N}(\xc)\\
  {}*&*&*&\cdots&*&0
\end{array}\right).
\end{equation}

The stable and unstable manifolds as well as 
the strong stable foliation of the stable manifold 
are $\mathcal{C}^k$ and can be flattened, 
see e.g. \cite{ShilnikovTuraevChua1998-MethodsNonlinearDynamicsI}, Theorem 5.8.
By a $\mathcal{C}^k$ change of coordinates 
the stable / strong stable / unstable manifolds to the 
equilibria locally coincide with the respective eigenspaces.
In particular, and in addition to (\ref{eqCondLocSubspaces}), 
the following stable and unstable fibres become invariant:
\begin{equation}\label{eqGeneralInvManifolds}
\begin{array}{rcl}
W^\mathrm{u}(\xc) &=& \{\xss = 0, \xs{}=0, \xc \textrm{ fixed} \}, \\
W^\mathrm{s}(\xc) &=& \{\xu = 0, \xc \textrm{ fixed} \}.
\end{array}
\end{equation}
Note that in the Bianchi system, $W^\mathrm{u}(\xc)$ coincides with the 
outgoing heteroclinic orbit attached to the equilibrium $(0,0,0,\xc)$.

Due to (\ref{eqGeneralInvManifolds}), the linearisation becomes diagonal,
\begin{equation}\label{eqGeneralDiagonalLinearization}
Df(0,0,0,\xc) \;=\; \left(\begin{array}{cccccc}
  \muu(\xc)\\
  &-\muss(\xc)\\
  &&-\mus{1(\xc)}\\
  &&&\ddots\\
  &&&&-\mus{N}(\xc)\\
  0&0&0&\cdots&0&0
\end{array}\right).
\end{equation}

Our aim is to study a local map from an in-section 
$\SectionIn=\{\xss=\varepsilon\}$ to an 
out-section $\SectionOut=\{\xu=\varepsilon\}$ for $\xs{}, \xc \approx 0$, 
see figure \ref{figLocalMap}.
This corresponds to the passage near the Kasner circle in the Bianchi system 
in backwards time direction. 
(We reversed the time direction to obtain a well defined local map.)

\begin{figure}
\setlength{\unitlength}{0.01125\textwidth}
\centering
\begin{picture}(43,30)(-8,-5)
\put(-8,-5){\makebox(0,0)[bl]{\includegraphics[width=43\unitlength]{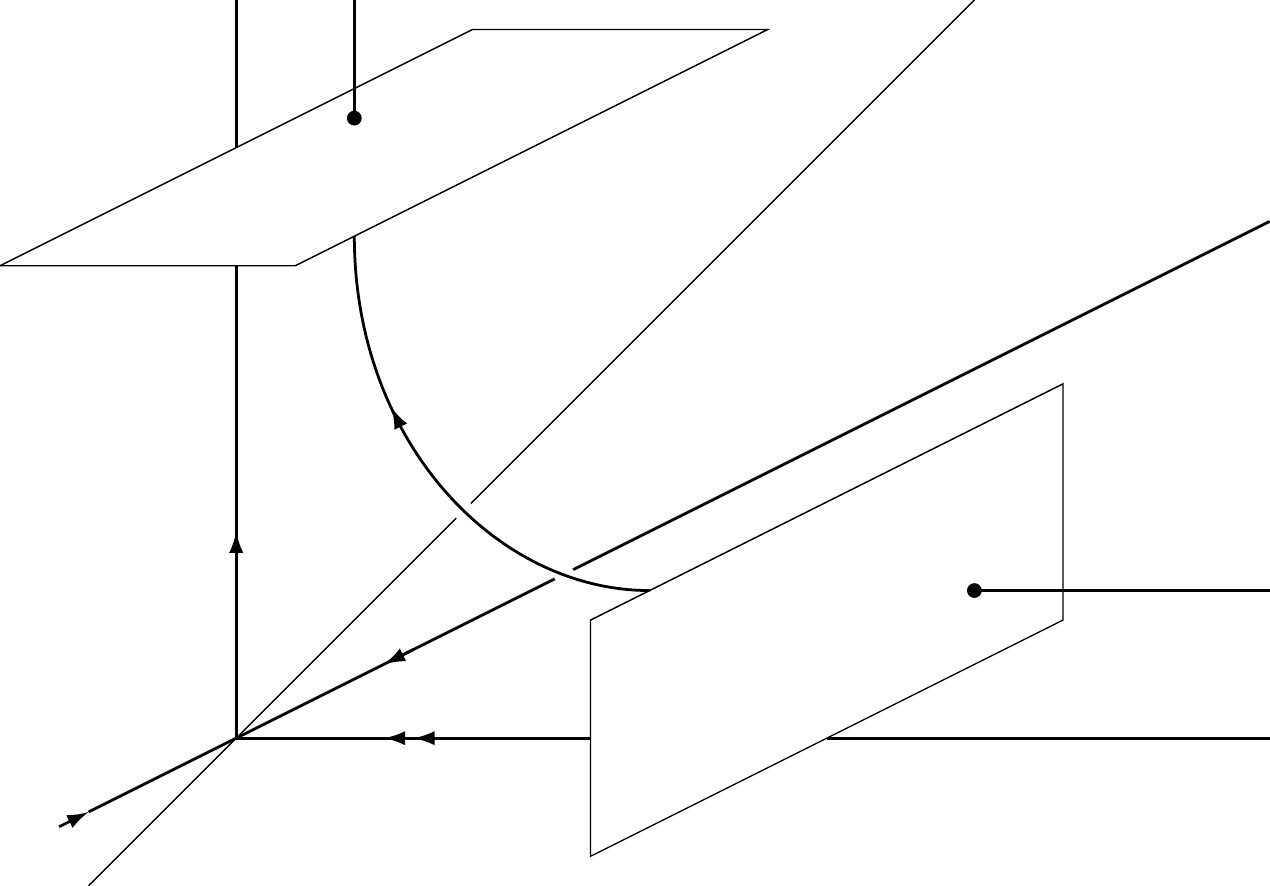}}}
\put(35,-0.5){\makebox(0,0)[rt]{$\xss$}}
\put(-0.5,25){\makebox(0,0)[rt]{$\xu$}}
\put(35,17.5){\makebox(0,0)[rb]{$\xs{}$}}
\put(25,25){\makebox(0,0)[lt]{$\xc$}}
\put(28.5, 8){\makebox(0,0)[l]{$\SectionIn$}}
\put(13,24.5){\makebox(0,0)[b]{$\SectionOut$}}
\put(25,5){\makebox(0,0)[br]{$\xIn$}}
\put(4,21){\makebox(0,0)[tl]{$\xOut$}}
\end{picture}
\caption{\label{figLocalMap}Local passage $\MapLocal:\SectionIn\to\SectionOut$.}
\end{figure}

We rescale the system to
\begin{equation}\label{eqGeneralLocalVectorFieldLinNonlin}
\dot{x} \;=\; A(\xc)x + \varepsilon g(x),
\end{equation}
with $\varepsilon$ arbitrarily fixed and $g$ at least quadratic 
in $(\xu, \xss, \xs{})$.
Due to the invariant subspaces (\ref{eqCondLocSubspaces}) and 
(\ref{eqGeneralInvManifolds}), the vector field takes the form
\begin{equation}\label{eqGeneralLocalVectorFieldRescaled}
\begin{array}{lcrcll}
\dot\xu       &=& \muu(\xc)\xu\,            &+& \varepsilon \gu(x)\xu, \\
\dot\xss      &=& -\muss(\xc)\xss           &+& \varepsilon \gss(x)\xss, \\
\dot\xs{\ell} &=& -\mus{\ell}(\xc)\xs{\ell} &+& \varepsilon \gs{\ell}(x)\xs{\ell}, 
  & \ell=1,\ldots N, \\
\dot\xc    &=& && \varepsilon \left( 
  \gcss(x)\xss + \sum_{\ell=1}^N \gcs{\ell}(x)\xs{\ell} \right) \xu,
\end{array}
\end{equation}
with $\mathcal{C}^{k-1}$-functions $\gu$, $\gss, \gs{\ell}$, 
vanishing along the line of equilibria, and 
$\mathcal{C}^{k-2}$-functions $\gcss, \gcs{\ell}$.
In particular
\begin{equation}\label{eqGeneralHotBounds}
\begin{array}{rcl}
|\gu(x)|, |\gss(x)|, |\gs{1}(x)|, \ldots, |\gs{N}(x)|
  & < & C \max(|\xu|,|\xss|,|\xs{1}|,\ldots,|\xs{N}|), \\
|\gcss(x)|, |\gcs{1}(x)|, \ldots, |\gcs{N}(x)| & < & C,
\end{array}
\end{equation}
for some constant $C>0$ independent of $\varepsilon$ and $x\in\neighborhood$, 
where $\neighborhood$ is some local neighbourhood of the origin. 
Similarly, all derivatives of $\gu,\gss, \gs{\ell}, \gcss, \gcs{\ell}$ 
are bounded by $C$ for $x\in\neighborhood$.
We choose
\begin{equation}\label{eqLocalNeighborhood}
\neighborhood=(-2,2)^{N+3}.
\end{equation}
All further estimates will use this rescaled system 
(\ref{eqGeneralLocalVectorFieldLinNonlin}) 
with flattened invariant manifolds (\ref{eqGeneralInvManifolds}) 
in the local neighbourhood $\neighborhood$.
They will be valid for all $\varepsilon<\varepsilon_0$ 
and suitably chosen $\varepsilon_0$. 
In the original system (\ref{eqGeneralLocalVectorField}),
$\varepsilon_0$ bounds the size of the neighbourhood 
of the origin in which this local analysis is valid.

\begin{prop}\label{thEigenvalueBounds}
Let
\[
\muu := \muu(0), \quad -\muss := -\muss(0), \quad 
-\mus{\ell} := -\mus{\ell}(0), \qquad \ell = 1,\ldots,N,
\]
be the eigenvalues of (\ref{eqGeneralDiagonalLinearization}) at the origin.
Then for all $0<\alpha<1$ there exists an $\varepsilon_0>0$ such that 
for all $\varepsilon<\varepsilon_0$ in 
(\ref{eqGeneralLocalVectorFieldLinNonlin}) and $x\in\neighborhood$
\begin{equation}\label{eqEigenvalueBounds}
\alpha \;\le\; 
\frac{\muu(\xc)}{\mu_u}, \frac{\muss(\xc)}{\muss}, 
  \frac{\mus{1}(\xc)}{\mus{1}}, \ldots, \frac{\mus{N}(\xc)}{\mus{N}} 
\;\le\; \alpha^{-1}.
\end{equation}
\end{prop}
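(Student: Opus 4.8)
The plan is to reduce the statement to an elementary continuity argument using only two ingredients: first, along the line of equilibria the eigenvalue functions $\xc\mapsto\muu(\xc),\muss(\xc),\mus{1}(\xc),\ldots,\mus{N}(\xc)$ are $\mathcal{C}^{k-1}$ --- in particular continuous --- because $f$ is $\mathcal{C}^k$ and, after the flattening of the invariant manifolds (\ref{eqGeneralInvManifolds}), these are exactly the diagonal entries of the $\mathcal{C}^{k-1}$ matrix-valued map $\xc\mapsto Df(0,0,0,\xc)$ in (\ref{eqGeneralDiagonalLinearization}); second, at $\xc=0$ their values are the strictly positive numbers $\muu,\muss,\mus{1},\ldots,\mus{N}$ by (loc-iii).

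First I would record the effect of the rescaling leading to (\ref{eqGeneralLocalVectorFieldLinNonlin}): on the fixed box $\neighborhood=(-2,2)^{N+3}$ the parameter $\xc$ entering $A(\xc)$ corresponds, in the original (unrescaled) coordinates, to arguments lying in an $O(\varepsilon)$-neighbourhood of $0$. Hence the diagonal entries of $A(\xc)$ for $x\in\neighborhood$ are the original continuous eigenvalue functions evaluated at points of modulus $O(\varepsilon)$. By continuity of each of these finitely many functions at $0$, for every $\eta>0$ there exists $\varepsilon_0>0$ such that, whenever $\varepsilon<\varepsilon_0$ and $x\in\neighborhood$, all of $|\muu(\xc)-\muu|$, $|\muss(\xc)-\muss|$, $|\mus{\ell}(\xc)-\mus{\ell}|$ are below $\eta$.

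Given $0<\alpha<1$, I would then set $\eta=(1-\alpha)\min(\muu,\muss,\mus{1},\ldots,\mus{N})>0$ and take the corresponding $\varepsilon_0$. For any one of the eigenvalues, say $\muu$, this choice gives $\alpha\muu=\muu-(1-\alpha)\muu\le\muu-\eta<\muu(\xc)<\muu+\eta\le(2-\alpha)\muu$, and dividing by $\muu>0$ yields $\alpha\le\muu(\xc)/\muu\le 2-\alpha\le\alpha^{-1}$, the last step because $(1-\alpha)^2\ge 0$. The identical computation applies to $\muss$ and to each $\mus{\ell}$, so (\ref{eqEigenvalueBounds}) holds for all $\varepsilon<\varepsilon_0$ and $x\in\neighborhood$.

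I do not anticipate a genuine obstacle here; this proposition is a bookkeeping lemma. The only point requiring care is the interplay between the rescaling and the $\xc$-dependence: one must verify that the eigenvalue functions appearing in (\ref{eqGeneralLocalVectorFieldLinNonlin}) are indeed the original $\mathcal{C}^{k-1}$ functions composed with the rescaling of the $\xc$-direction, so that shrinking $\varepsilon$ genuinely forces the admissible range of $\xc$ to collapse onto $0$ --- rather than fixed functions on the fixed interval $(-2,2)$, for which no such bound could hold uniformly. Everything else is the standard estimate that a continuous function stays within a prescribed multiplicative factor of a positive value on a sufficiently small neighbourhood.
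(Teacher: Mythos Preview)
Your proof is correct and follows essentially the same route as the paper: both observe that the rescaling collapses the admissible $\xc$-range to an $O(\varepsilon)$-neighbourhood of the origin and then invoke regularity of the eigenvalue functions at $0$. The paper uses differentiability rather than mere continuity and thereby records the derivative bound $|\mu'(\xc)|<\varepsilon C$ (equation (\ref{eqEigenvalueGradients})), which it reuses later in Lemma~\ref{thLinearizedVectorFieldEstimates}; your continuity argument suffices for the proposition as stated.
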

\begin{proof}
Due to the invariant subspaces 
(\ref{eqCondLocSubspaces}, \ref{eqGeneralInvManifolds}),
the linearisation of the system at eigenvalues close to the origin remains diagonal,
and the eigenvalues depend differentiably on $\xc$, 
For the rescaled system (\ref{eqGeneralLocalVectorFieldLinNonlin}) 
with small $\varepsilon_0$ this provides bounds in $\neighborhood$: 
Indeed, there exists a constant $C>0$ independent of 
$\varepsilon_0$, $\varepsilon$, such that
\begin{equation}\label{eqEigenvalueGradients}
\textstyle
\left| \frac{\diff}{\diff \xc}\muu(\xc) \right|,\;
\left| \frac{\diff}{\diff \xc}\muss(\xc) \right|,\;
\left| \frac{\diff}{\diff \xc}\mus{\ell}(\xc) \right| \;<\; \varepsilon C.
\end{equation}
\end{proof}

The scalar function 
$\theta(x):= \muu \left(\muu(\xc) + \varepsilon\gu(x)\right)^{-1}$ 
is therefore $\mathcal{C}^{k-1}$ and close to $1$.
The vector field 
\[
  x' \;=\; \theta(x)f(x) \;=\; \frac{\muu}{\muu(\xc) + \varepsilon\gu(x)} f(x)
\]
has the same trajectories as the original vector field 
and all previous considerations remain valid. 
Thus we can assume, without loss of generality, 
that $\theta(x) \equiv 1$ in $\neighborhood$, i.e.
\begin{equation}\label{eqEulerMultiplier}
 \muu(\xc) \;\equiv\; \muu, \qquad \gu(x) \equiv 0.
\end{equation}
At this step we have made use of the fact that the origin possesses exactly 
one unstable eigenvalue. 
The vector field to consider then has the form
\begin{equation}\label{eqSplitLocalSystem}
\begin{array}{lcrcll}
\dot\xu       &=& \muu\xu \\
\dot\xss      &=& -\muss(\xc)\xss           &+& \varepsilon \gss(x)\xss, \\
\dot\xs{\ell} &=& -\mus{\ell}(\xc)\xs{\ell} &+& \varepsilon \gs{\ell}(x)\xs{\ell}, 
  & \ell=1,\ldots N, \\
\dot\xc    &=& && \varepsilon \left( 
  \gcss(x)\xss + \sum_{\ell=1}^N \gcs{\ell}(x)\xs{\ell} \right) \xu.
\end{array}
\end{equation}

\begin{lem}\label{thLocalBounds}
For all $0<\alpha<1$ there exists an $\varepsilon_0>0$ such that 
for all $\varepsilon<\varepsilon_0$, $x(0)\in\neighborhood$ and $t\ge 0$, 
as long as $x(t)$ remains in $\neighborhood$ under the flow to 
the vector field (\ref{eqSplitLocalSystem}),
we can estimate:
\begin{eqnarray}\label{eqLocalUnstableBound}
\xu(t) &=& \exp(\muu t) \, \xu(0),
\\\label{eqLocalStrongStableBound} 
\xss(t) &\in& \textstyle
  \left[ \exp\left(-\frac{1}{\alpha}\muss t\right),
    \exp\left(-\alpha\muss t\right) \right] \xss(0),
\\\label{eqLocalStableBound} 
\xs{\ell}(t) &\in& \textstyle
  \left[ \exp\left(-\frac{1}{\alpha}\mus{\ell} t\right),
    \exp\left(-\alpha\mus{\ell} t\right) \right] \xs{\ell}(0),
  \qquad \ell=1,\ldots,N,
\\\label{eqLocalCenterBound}
|\xc(t) - \xc(0)| &\le& \frac{2\varepsilon C}{\alpha} \left( 
  \frac{1}{\muss-\muu}|\xu(0)| + 
  \sum_{\ell=1}^N \frac{1}{\mus{\ell}}|\xs{\ell}(0)| \right).
\end{eqnarray}
Here, $C$ is the uniform (in $x$ and $\varepsilon$) bound 
from (\ref{eqGeneralHotBounds}).
\end{lem}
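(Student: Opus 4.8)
The plan is to integrate the differential equations in \eqref{eqSplitLocalSystem} componentwise, exploiting the triangular structure: the $\xu$-equation is autonomous, the $\xss$- and $\xs{\ell}$-equations are linear in the respective variable once $\xc$ is regarded as a (slowly varying) parameter, and the $\xc$-equation is driven by the product of already-controlled quantities. First I would dispatch \eqref{eqLocalUnstableBound}: since $\dot\xu=\muu\xu$ exactly after the normalisation \eqref{eqEulerMultiplier}, we get $\xu(t)=\exp(\muu t)\,\xu(0)$ with no error term, and in particular $|\xu(t)|$ is monotone, bounded by $|\xu(0)|\le 2$ only for a bounded time — but the estimates are stated to hold "as long as $x(t)$ remains in $\neighborhood$," so this is all we need.

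Next, for \eqref{eqLocalStrongStableBound} and \eqref{eqLocalStableBound} I would write, along a trajectory staying in $\neighborhood$,
\[
\frac{\diff}{\diff t}\log|\xss(t)| \;=\; -\muss(\xc(t)) + \varepsilon\,\gss(x(t)),
\]
and similarly for each $\xs{\ell}$. By Proposition \ref{thEigenvalueBounds} we have $\alpha\muss\le\muss(\xc)\le\alpha^{-1}\muss$ for $\varepsilon<\varepsilon_0$, and by \eqref{eqGeneralHotBounds} the term $\varepsilon\gss(x)$ is bounded by $\varepsilon C$ times the (bounded) size of the coordinates; choosing $\varepsilon_0$ small enough that this perturbation is absorbed into a slightly worse constant — i.e.\ replacing $\alpha$ by a marginally smaller value and relabelling — gives
\[
-\tfrac1\alpha\muss \;\le\; \frac{\diff}{\diff t}\log|\xss(t)| \;\le\; -\alpha\muss,
\]
and integrating from $0$ to $t$ yields exactly \eqref{eqLocalStrongStableBound}; the argument for $\xs{\ell}$ is identical. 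One should be slightly careful about the sign of $\xss$, but since the equation is linear in $\xss$ the sign is preserved and the bound on $|\xss(t)|$ transfers to the stated inclusion $\xss(t)\in[\,\cdot\,,\,\cdot\,]\xss(0)$.

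Finally, for the centre estimate \eqref{eqLocalCenterBound} I would integrate the last line of \eqref{eqSplitLocalSystem} directly:
\[
|\xc(t)-\xc(0)| \;\le\; \varepsilon\!\int_0^t\! \Bigl( |\gcss(x(s))|\,|\xss(s)| + \sum_{\ell=1}^N |\gcs{\ell}(x(s))|\,|\xs{\ell}(s)| \Bigr)\,|\xu(s)|\,\diff s,
\]
bound the $\gcss,\gcs{\ell}$ factors by $C$ via \eqref{eqGeneralHotBounds}, substitute $|\xu(s)|=\exp(\muu s)|\xu(0)|$, and substitute the upper bounds $|\xss(s)|\le\exp(-\alpha\muss s)|\xss(0)|$ and $|\xs{\ell}(s)|\le\exp(-\alpha\mus{\ell}s)|\xs{\ell}(0)|$. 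Then each summand becomes an elementary exponential integral: $\int_0^\infty\exp((\muu-\alpha\muss)s)\,\diff s = (\alpha\muss-\muu)^{-1}$, provided $\alpha\muss>\muu$ — which holds for $\alpha$ close enough to $1$ precisely by the crucial eigenvalue condition \eqref{eqCondLocEigenvalues}, $\muu/\muss<1$ — and $\int_0^\infty\exp((\muu-\alpha\mus{\ell})s)\,\diff s$ is likewise finite once $\varepsilon_0$ is small and $\alpha$ close to $1$ (here one also needs $\xu(s)$ to stay small, which it does while the trajectory is in $\neighborhood$). Bundling the $\alpha$-factors into the prefactor $2/\alpha$ and using $\muss-\muu$ rather than $\alpha\muss-\muu$ in the final bound (valid after a further harmless shrinking of $\alpha_0$) produces \eqref{eqLocalCenterBound}. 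The main obstacle — and the reason the hypothesis \eqref{eqCondLocEigenvalues} is essential rather than cosmetic — is guaranteeing convergence of the $\xss$-integral: without $\muu<\muss$ the factor $\exp((\muu-\alpha\muss)s)$ would grow, the $\xc$-drift would not be uniformly controllable over the (long) passage time, and the whole local-map construction would collapse; everything else is bookkeeping with Gronwall-type exponential estimates and a careful-but-routine chase of constants through the choice of $\varepsilon_0$ and $\alpha$.
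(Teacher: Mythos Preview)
Your treatment of \eqref{eqLocalUnstableBound}, \eqref{eqLocalStrongStableBound}, and \eqref{eqLocalStableBound} is essentially identical to the paper's and is fine.

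There is, however, a genuine gap in your argument for the centre drift \eqref{eqLocalCenterBound}. After substituting $|\xu(s)|=\exp(\muu s)|\xu(0)|$ you are led to the integral $\int_0^\infty \exp\bigl((\muu-\alpha\mus{\ell})s\bigr)\diff s$, and you assert it is finite for $\varepsilon_0$ small and $\alpha$ close to $1$. This is false in the setting of the lemma: the relaxed eigenvalue condition \eqref{eqCondLocEigenvalues} constrains only $\muu/\muss$, and the entire point of the paper (see the discussion around the ``second linearisation condition'' in Section~\ref{secBasicSetup} and the remark following Conditions~\ref{condLocalPassage}) is to allow $\mus{\ell}<\muu$. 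In that regime your integral diverges, no matter how small $\varepsilon_0$ or how close $\alpha$ is to $1$. Your parenthetical about $\xu(s)$ staying small points in the right direction but contradicts the substitution you just made.

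The fix --- and this is exactly what the paper does --- is to treat the $\xss$-term and the $\xs{\ell}$-terms \emph{asymmetrically}. For the $\xss$-term keep the factor $\exp(\muu s)|\xu(0)|$ from $\xu$, bound $|\xss(0)|\le 2$ from $x(0)\in\neighborhood$, and integrate $\exp\bigl((\muu-\alpha\muss)s\bigr)$; this is where \eqref{eqCondLocEigenvalues} is genuinely used. For each $\xs{\ell}$-term do the opposite: bound $|\xu(s)|\le 2$ from $x(s)\in\neighborhood$ and integrate only $\exp(-\alpha\mus{\ell}s)$, which converges because $\mus{\ell}>0$ with no relation to $\muu$ required. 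This asymmetry is already visible in the target inequality \eqref{eqLocalCenterBound}: the first summand carries $|\xu(0)|/(\muss-\muu)$ while the later summands carry $|\xs{\ell}(0)|/\mus{\ell}$, with no $\muu$ in the denominator and no $|\xu(0)|$ in the numerator.
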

\begin{proof}
The unstable component (\ref{eqLocalUnstableBound}) is given directly 
by the vector field.
The estimates of the stable components 
(\ref{eqLocalStrongStableBound}, \ref{eqLocalStableBound})
follow from the vector field and the uniform bounds 
(\ref{eqEigenvalueBounds},  \ref{eqGeneralHotBounds}).
Indeed for arbitrary $0<\tilde{\alpha}<1$ and $\varepsilon < \varepsilon_0$
small enough, we have
\[
\dot\xss \;\in\; \left[-\frac{1}{\tilde\alpha}\muss - \varepsilon C,
  -\tilde\alpha\muss + \varepsilon C\right] \xss.
\]
Thus for arbitrary $0<\alpha<1$ we find suitable 
$\alpha < \tilde\alpha < 1$ and $\varepsilon_0$ small enough such that
\[
\dot\xss \;\in\; \left[ -\frac{1}{\alpha}\muss, -\alpha\muss \right] \xss.
\]
Integration yields the claim. Bounds on $\xs{\ell}$ are obtained analogously.
The centre component (\ref{eqLocalCenterBound}) is then estimated by plugging
(\ref{eqLocalUnstableBound}, \ref{eqLocalStrongStableBound}, 
\ref{eqLocalStableBound}) into the vector field (\ref{eqSplitLocalSystem}) 
and integrating:
\[
\begin{array}{rcl} 
\displaystyle |\xc(t) - \xc(0)| 
   &=&   \displaystyle \varepsilon \left|\int_0^t 
           \gcss(x(s))\xss(s)\xu(s) + 
           \sum_{\ell=1}^N \gcs{\ell}(x(s))\xs{\ell}(s)\xu(s) 
         \diff s\right|
\\ &\le& \displaystyle \varepsilon C \int_0^t 
           \exp(\muu - \tilde\alpha\muss) |\xss(0)\xu(0)| +
           \sum_{\ell=1}^N \exp(-\tilde\alpha\mus{\ell}) |\xs{\ell}(0)\xu(s)| \diff s
\\ &\le& 2\displaystyle \varepsilon C \int_0^t 
           \exp(\muu - \tilde\alpha\muss) |\xu(0)| +
           \sum_{\ell=1}^N \exp(-\tilde\alpha\mus{\ell}) |\xs{\ell}(0)| \diff s
\\ &\le& 2\displaystyle \varepsilon C \int_0^t 
           \exp(-\alpha(\muss-\muu)) |\xu(0)| +
           \sum_{\ell=1}^N \exp(-\alpha\mus{\ell}) |\xs{\ell}(0)| \diff s
\end{array}
\]
The last inequality needs a slight adjustment of $\tilde\alpha\leadsto\alpha$ 
and uses the eigenvalue condition (\ref{eqCondLocEigenvalues}). 
Indeed, for all $0<\alpha<1$, we find a suitable $0<\tilde\alpha<1$ with 
$0 < \alpha(\muss-\muu) < \tilde\alpha\muss - \muu$.
\end{proof}

The local map 
\begin{equation}\label{eqLocalMap}
  (\xuIn, \xsIn{}, \xcIn) \;\longmapsto\; 
  (\xssOut, \xsOut{}, \xcOut) \;=\; 
    \MapLocal (\xuIn, \xsIn{}, \xcIn)
\end{equation}
is given by the first intersection of the solution of 
(\ref{eqSplitLocalSystem}) 
to the initial value $(\xuIn, \xssIn=1, \xsIn{}, \xcIn)$
with the out-section $\{\xu=1\}$. See figure \ref{figLocalMap}.
The local map $\MapLocal$ is well-defined on the in-section
\begin{equation}\label{eqInSection}
\SectionIn \;=\; 
  \{\; (\xuIn, \xssIn, \xsIn{}, \xcIn) \;|\;
  \xssIn = 1,\; 0 < \xuIn < 1,\; \|\xsIn{}\| < 1,\;|\xcIn| < 1 \;\}
\end{equation}
see lemma \ref{thLocalPassage} below.
The singular points in the intersection of the stable manifold 
of the equilibrium line with the in-section
are mapped to the respective points 
in the intersection of the unstable manifold of the equilibrium line 
with the out-section:
\begin{equation}\label{eqLocalMapExtension}
\MapLocal(\xuIn=0, \xsIn{}, \xcIn) \;=\; 
  (\xssOut, \xsOut{}, \xcOut) \,:=\; (0,0,\xcIn).
\end{equation}
Note that there is no drift in $\xc$ at the boundary 
due to the invariant fibres (\ref{eqGeneralInvManifolds}).

\begin{lem}\label{thLocalPassage}
There exists an $\varepsilon_0>0$ such that for all $\varepsilon<\varepsilon_0$ 
and $x(0)=\xIn$ in the in-section $\SectionIn$, see (\ref{eqInSection}),
the trajectory $x(t)$ under the flow to the vector field (\ref{eqSplitLocalSystem})
remains in $\neighborhood$ as long as $|\xu| \le 1$, 
i.e.~all along the passage defining the local map $\MapLocal$. 
The passage time $\tloc$ is given by
\begin{equation}\label{eqLocalTime}
 \tloc \;=\; \frac{1}{\muu} \ln \frac{1}{|\xuIn|}.
\end{equation}
\end{lem}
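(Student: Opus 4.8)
The plan is to derive both assertions from Lemma~\ref{thLocalBounds} by a standard continuation (bootstrap) argument, exploiting that the only coordinate which can move away from its initial value before the trajectory reaches $\SectionOut$ is $\xc$, and that this motion is controlled by $\varepsilon$. The passage time is immediate: the first line of (\ref{eqSplitLocalSystem}) is decoupled and linear, so $\xu(t)=\exp(\muu t)\,\xuIn$ for as long as the solution exists, and since $0<\xuIn<1$ and $\muu>0$ the function $\xu(\cdot)$ is strictly increasing and attains the value $1$ for the first time exactly at $\tloc=\muu^{-1}\ln(1/\xuIn)=\muu^{-1}\ln(1/|\xuIn|)$, which is (\ref{eqLocalTime}). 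It therefore remains only to show that, starting from an arbitrary $\xIn\in\SectionIn$, the trajectory stays in $\neighborhood=(-2,2)^{N+3}$ on the whole interval $[0,\tloc]$; this is exactly what makes the definition of $\MapLocal$ via the first hit of $\SectionOut=\{\xu=1\}$ legitimate, and then (\ref{eqLocalMapExtension}) follows from the invariance of the fibres (\ref{eqGeneralInvManifolds}).

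Fix some $\alpha\in(0,1)$, say $\alpha=\frac12$. The eigenvalue condition (\ref{eqCondLocEigenvalues}) gives $\muss-\muu>0$, so, with $C$ the constant of (\ref{eqGeneralHotBounds}), the number
\[
C' \;:=\; \frac{2C}{\alpha}\left(\frac{1}{\muss-\muu} + \sum_{\ell=1}^{N}\frac{1}{\mus{\ell}}\right)
\]
is well defined; choose $\varepsilon_0>0$ no larger than the threshold furnished by Lemma~\ref{thLocalBounds} for this $\alpha$ and small enough that $\varepsilon_0 C'<1$. Fix $\varepsilon<\varepsilon_0$ and $\xIn\in\SectionIn$. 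Then $\xIn\in\neighborhood$, since $\xssIn=1<2$, $|\xuIn|<1$, $\|\xsIn{}\|<1$ and $|\xcIn|<1$; moreover the vector field is at least $\mathcal{C}^1$, so the solution from $\xIn$ exists and is unique for as long as it stays within $\overline{\neighborhood}$. Set
\[
T \;:=\; \sup\bigl\{\, t\in[0,\tloc] \;:\; x(s)\in\neighborhood \textrm{ for all } s\in[0,t] \,\bigr\},
\]
which is positive because $\xIn\in\neighborhood$ and $\neighborhood$ is open. On $[0,T)$ the trajectory lies in $\neighborhood$, so Lemma~\ref{thLocalBounds} applies and gives, for every $s\in[0,T)$: $\xu(s)=\exp(\muu s)\,\xuIn\in(0,1]$ because $s\le\tloc$; $\xss(s)\in(0,1]$ by (\ref{eqLocalStrongStableBound}) with $\xss(0)=1$; $|\xs{\ell}(s)|\le|\xs{\ell}(0)|<1$ by (\ref{eqLocalStableBound}) and $\|\xsIn{}\|<1$; and $|\xc(s)-\xcIn|\le\varepsilon C'<1$ by (\ref{eqLocalCenterBound}) together with $|\xuIn|<1$ and $|\xs{\ell}(0)|<1$, whence $|\xc(s)|<2$. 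Thus each coordinate of $x(s)$ stays at distance at least $1-\varepsilon_0 C'>0$ from $\pm2$; letting $s\to T^{-}$, the same strict bounds hold at $s=T$, so $x(T)\in\neighborhood$. If $T<\tloc$, openness of $\neighborhood$ would let the trajectory remain in $\neighborhood$ on some $[T,T+\delta]$ with $T+\delta\le\tloc$, contradicting the maximality of $T$. Hence $T=\tloc$, and $x(\tloc)\in\neighborhood$ by the same continuity estimate. So the passage defining $\MapLocal$ takes place entirely within $\neighborhood$ and lasts exactly time $\tloc$, which is what had to be shown.

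The only step that is not purely formal is the self-improvement of the bound on $\xc$: a priori $\xc$ is the one coordinate that could escape $(-2,2)$, but (\ref{eqLocalCenterBound}) shows its total drift over the passage is at most $\varepsilon C'$, uniformly over $\xIn\in\SectionIn$, so shrinking $\varepsilon_0$ until $\varepsilon_0 C'<1$ confines it. The remaining coordinates cost nothing --- $\xss$ and the $\xs{\ell}$ by the contraction estimates (\ref{eqLocalStrongStableBound})--(\ref{eqLocalStableBound}), and $\xu$ by its explicit exponential --- so I expect this to be essentially the only obstacle, the rest being routine bookkeeping of the continuation argument.
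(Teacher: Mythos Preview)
Your proof is correct and follows essentially the same approach as the paper: both argue from Lemma~\ref{thLocalBounds} that the stable components contract, the centre drift is $O(\varepsilon)$, and $\xu$ grows explicitly, so choosing $\varepsilon_0$ small enough to make the total $\xc$-drift less than $1$ keeps the trajectory in $\neighborhood$ until $\xu=1$. The paper states this tersely and gives the explicit threshold $\varepsilon_0<\frac{\alpha}{2(N+1)C}\min\{\muss-\muu,\mus{1},\ldots,\mus{N}\}$, which is precisely your condition $\varepsilon_0 C'<1$; you have additionally spelled out the continuation argument that the paper leaves implicit.
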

\begin{proof}
We choose $\varepsilon_0$ smaller than
$\frac{1}{2(N+1)C}\alpha\min\{\muss-\muu, \mus{1}, \ldots, \mus{N}\})$, 
see lemma \ref{thLocalBounds}.
Then trajectories starting in $\SectionIn$ cannot leave $\neighborhood$ 
unless $\xu$ becomes larger than 1, see (\ref{eqLocalStrongStableBound}, 
\ref{eqLocalStableBound}, \ref{eqLocalCenterBound}).
Furthermore, (\ref{eqLocalUnstableBound}) ensures that $\xu$ must grow beyond 1.
Thus every trajectory starting in $\SectionIn$ intersects 
the out-section $\SectionOut=\{x_u=1\}$ before leaving $\neighborhood$.
Setting $\xu(\tloc) = 1$ in (\ref{eqLocalUnstableBound}) 
determines the passage time $\tloc$.
\end{proof}

\begin{cor}\label{thContinuousLocalMap}
The local map $\MapLocal$ (\ref{eqLocalMap}, \ref{eqLocalMapExtension}), 
i.e.~the local passage on the closed in-section $\overline\SectionIn$ 
including the singular boundary $\{\xuIn = 0\}$, 
is continuous.
For all $0<\alpha<1$ there exists an $\varepsilon_0>0$ such that 
for all $\varepsilon<\varepsilon_0$ the following estimates hold
\[\begin{array}{rcl}
|\xcOut - \xcIn| &\le& \varepsilon C 
  \left( |\xuIn| + |\xsIn{1}| + \cdots + |\xsIn{N}| \right), \\
|\xssOut| &\le& |\xuIn|^{\alpha\muss/\muu - 1} |\xuIn|, \\ 
|\xsOut{\ell}| &\le& |\xuIn|^{\alpha\mus{\ell}/\muu} |\xsIn{\ell}|,
\qquad \ell=1,\ldots,N,
\end{array}\] 
with $C$ independent of $\varepsilon$ and $\xIn$.

Thus the drift along the line of equilibria is arbitrarily small and the 
distance from the orbit to the union of the stable and unstable manifolds 
shrinks arbitrarily fast, close to the critical orbit.
\end{cor}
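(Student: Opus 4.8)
The plan is to prove Corollary~\ref{thContinuousLocalMap} directly from the quantitative estimates of Lemma~\ref{thLocalBounds} together with the passage time formula of Lemma~\ref{thLocalPassage}. First I would fix $0<\alpha<1$ and choose $\varepsilon_0$ as in Lemma~\ref{thLocalBounds}, possibly shrinking it further as needed below. Given an initial condition $\xIn=(\xuIn,1,\xsIn{},\xcIn)\in\SectionIn$ with $\xuIn>0$, Lemma~\ref{thLocalPassage} guarantees that the trajectory stays in $\neighborhood$ and reaches $\SectionOut=\{\xu=1\}$ at the explicit time $\tloc=\muu^{-1}\ln(1/|\xuIn|)$. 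The strategy is then to substitute $t=\tloc$ into the three estimates (\ref{eqLocalStrongStableBound}), (\ref{eqLocalStableBound}), (\ref{eqLocalCenterBound}) of Lemma~\ref{thLocalBounds} and simplify using $\exp(-\mu\tloc)=|\xuIn|^{\mu/\muu}$.

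For the $\xc$-component: plugging $\tloc$ into (\ref{eqLocalCenterBound}) and using $\xssIn=1$, $|\xuIn|<1$ gives $|\xcOut-\xcIn|\le \frac{2\varepsilon C}{\alpha}\bigl(\frac{1}{\muss-\muu}|\xuIn|+\sum_\ell\frac{1}{\mus{\ell}}|\xsIn{\ell}|\bigr)$; absorbing the fixed constants $\frac{2}{\alpha}$, $\frac{1}{\muss-\muu}$, $\frac{1}{\mus{\ell}}$ into a single constant (still called $C$, and independent of $\varepsilon$ and $\xIn$) yields the stated bound $|\xcOut-\xcIn|\le\varepsilon C(|\xuIn|+|\xsIn{1}|+\cdots+|\xsIn{N}|)$. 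For the strong stable component: from the upper bound in (\ref{eqLocalStrongStableBound}) at $t=\tloc$, and $\xssIn=1$, we get $|\xssOut|\le\exp(-\alpha\muss\tloc)=|\xuIn|^{\alpha\muss/\muu}=|\xuIn|^{\alpha\muss/\muu-1}|\xuIn|$, which is exactly the claimed estimate. For the remaining stable components: the upper bound in (\ref{eqLocalStableBound}) at $t=\tloc$ gives $|\xsOut{\ell}|\le\exp(-\alpha\mus{\ell}\tloc)|\xsIn{\ell}|=|\xuIn|^{\alpha\mus{\ell}/\muu}|\xsIn{\ell}|$, as required.

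It remains to handle continuity, including on the singular boundary $\{\xuIn=0\}$ where $\MapLocal$ is defined by (\ref{eqLocalMapExtension}) as $(\xuIn=0,\xsIn{},\xcIn)\mapsto(0,0,\xcIn)$. On the open part $\{\xuIn>0\}$, continuity (indeed smoothness) of $\MapLocal$ follows from smooth dependence of solutions on initial data together with the transversality of the flow to $\SectionOut$, which holds since $\dot\xu=\muu\xu>0$ there. For continuity up to the boundary, I would let $\xIn_n=(\xuIn_n,1,\xsIn{,n},\xcIn_n)\to(0,1,\xsIn{*},\xcIn_*)$ with $\xuIn_n>0$ and apply the three displayed estimates: since $\alpha\muss/\muu-1$ can be made positive (by the eigenvalue condition $\muu<\muss$ together with choosing $\alpha$ close enough to $1$) and $\alpha\mus{\ell}/\muu>0$ automatically, the bounds force $|\xssOut_n|\to 0$ and $|\xsOut{\ell,n}|\to 0$, while $|\xcOut_n-\xcIn_n|\le\varepsilon C(|\xuIn_n|+\sum_\ell|\xsIn{\ell,n}|)\to 0$ — wait, the $\xsIn{}$ terms need not vanish, but they are multiplied by $\varepsilon C$ and, more to the point, one compares $\xcOut_n$ with $\xcIn_n\to\xcIn_*$; since the difference bound only shows closeness of $\xcOut_n$ to $\xcIn_n$, I instead note that on the boundary the extension value is $\xcIn_*$ and $|\xcOut_n-\xcIn_*|\le|\xcOut_n-\xcIn_n|+|\xcIn_n-\xcIn_*|\to 0$ provided the estimate $|\xcOut-\xcIn|\le\varepsilon C(\cdots)$ is small — but it need not be small for fixed $\xsIn{}$. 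The honest fix, which I would carry out, is to observe that the $\xc$-estimate shows $\xcOut$ is continuous as a function of $\xIn$ on all of $\overline\SectionIn$ by a standard Gronwall/continuous-dependence argument on the compact closure, and the displayed inequality is merely the quantitative version; matching at the boundary is then automatic from (\ref{eqLocalMapExtension}). The one genuinely delicate point — and the main obstacle — is ensuring the exponent $\alpha\muss/\muu-1$ is strictly positive so that $\xssOut\to 0$ as $\xuIn\to 0$; this is precisely where the eigenvalue condition (\ref{eqCondLocEigenvalues}) enters, and it requires choosing $\alpha$ in the interval $(\muu/\muss,1)$, which is non-empty exactly because $\muu/\muss<1$. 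With that choice fixed at the outset, all estimates and the continuity claim follow, and the concluding remark about vanishing drift and contraction toward the invariant manifolds is an immediate reading of the three displayed bounds as $\xuIn\to 0$.
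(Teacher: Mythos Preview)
Your approach is exactly the paper's: substitute the passage time $\tloc=\muu^{-1}\ln(1/|\xuIn|)$ from Lemma~\ref{thLocalPassage} into the bounds of Lemma~\ref{thLocalBounds}, and read off the three displayed inequalities. The paper's own proof is a four-sentence sketch to this effect.

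Where your write-up actually goes beyond the paper is in noticing that the stated $\xc$-estimate does \emph{not} by itself force $\xcOut\to\xcIn_*$ as $\xuIn\to 0$ with $\xsIn{}\to\xsIn{*}\ne 0$: the bound $\varepsilon C(|\xuIn|+\sum_\ell|\xsIn{\ell}|)$ has a nonvanishing limit. The paper glosses over this. However, your proposed ``honest fix'' --- invoking a standard Gronwall/continuous-dependence argument on the compact closure --- does not work, because the passage time $\tloc\to\infty$ as $\xuIn\to 0$, so there is no fixed compact time interval on which to run continuous dependence. The clean repair is instead to sharpen the $\xc$-estimate: in the proof of Lemma~\ref{thLocalBounds} the factor $|\xu(s)|=\exp(\muu s)|\xuIn|$ in $\dot\xc$ was bounded crudely by $2$; keeping it yields
\[
\int_0^{\tloc}|\xs{\ell}(s)\xu(s)|\,ds
\;\le\; C\,|\xsIn{\ell}|\,|\xuIn|^{\min(1,\,\alpha\mus{\ell}/\muu)}
\quad(\text{with a }|\ln|\xuIn||\text{ factor in the borderline case}),
\]
which does vanish as $\xuIn\to 0$. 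With this refinement the continuity of $\xcOut$ up to the boundary follows directly from the estimates, as the paper tacitly assumes. Everything else in your proposal --- the derivation of the three bounds, the use of $\muu/\muss<1$ to make $\alpha\muss/\muu-1>0$, and smoothness on $\{\xuIn>0\}$ from transversality --- is correct and matches the paper.
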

\begin{proof}
The estimates follow directly from lemma \ref{thLocalBounds}
applied to the local passage time given by lemma \ref{thLocalPassage}.
They also establish continuity of the local map $\MapLocal$ 
at the singular boundary $\{ \xuIn = 0 \}$.
Note $0 < \xuIn < 1$ on the in-section. 
Note further that $\alpha\muss/\muu - 1 > 0$ 
for $\alpha$ chosen close enough to $1$.
\end{proof}

Unfortunately, there is no hope to obtain Lipschitz estimates for the
local map $\MapLocal$ with respect to the standard metric. 
Even for $g=0$, the linear vector field (\ref{eqGeneralLocalVectorFieldLinNonlin})
yields a non-Lipschitz local passage, for $\mus{\ell} < \muu$.

To obtain Lipschitz bounds for the local map $\MapLocal$, 
we have to introduce a non-Euclidean metric on the in- and out-sections.
We define the Riemannian metrics
\begin{equation}\label{eqMetricIn}
\diff s^2_* \;=\;
    \frac{\|\xus\|_2^2}{\xu^2}\diff \xu^2
  + \sum_{\ell=1}^N \frac{\|\xus\|_2^2}{\xs{\ell}^2}\diff \xs{\ell}^2
  + \diff \xc^2
\end{equation}
on the in-section $\SectionIn$ and
\begin{equation}\label{eqMetricOut}
\diff s^2_* \;=\;
    \frac{\|\xsss\|_2^2}{\xss^2}\diff \xss^2
  + \sum_{\ell=1}^N \frac{\|\xsss\|_2^2}{\xs{\ell}^2}\diff \xs{\ell}^2
  + \diff \xc^2
\end{equation}
on the out-section $\SectionOut$. We denoted the Euclidean norms
$\|\xus\|_2^2=\|(\xu,\xs{})\|_2^2 = \xu^2 + \sum_{\ell=1}^N \xs{\ell}^2$,
$\|\xsss\|_2^2=\|(\xss,\xs{})\|_2^2 = \xss^2 + \sum_{\ell=1}^N \xs{\ell}^2$.
The distance in $\SectionIn$, $\SectionOut$ is then given by the length of 
the shortest connecting paths and denoted by $\dist_*$.
On fibres $\{\xc\;\textrm{fixed}\}$ we define the metric analogously.

Let us discuss the new metric in the cone $\xus = (\xu,\xs{}) \in [0,\infty)^{N+1}$
in $\SectionIn$, ignoring the $\xc$-direction that remains unchanged.
The metric becomes singular along the invariant boundaries 
$\{\xu=0\}, \{\xs{\ell}=0\}$, see (\ref{eqCondLocSubspaces}). 
Inside the open cone $(0,\infty)^{N+1}$,
the new metric $\diff s_*$ is locally equivalent to the Euclidean metric $\diff s$,
\begin{equation}\label{eqDistanceLocalEquivalence}
  \diff s^2 \;\le\; \diff s^2_* \;\le\; 
  (N+1) \frac{\max\{|\xu|,|\xs{1}|,\ldots,|\xs{N}|\}}
             {\min\{|\xu|,|\xs{1}|,\ldots,|\xs{N}|\}} \diff s^2,
\end{equation}
and thus induces the same topology. The origin can be included.
In fact the distance of any point to the origin is bounded by
\begin{equation}\label{eqDistanceToOrigin}
  \|\xus\|_2 \;\le\; \dist_*( 0, \xus ) \;\le\;
  (N+1)^{3/2} \|\xus\|_2.
\end{equation}
(The upper bound can easily obtained by connecting the origin to $\xus$
with a piecewise linear path along the space diagonals with respect 
to suitable coordinate directions.)
Every curve in the open cone hitting the boundary away from the origin 
has infinite length.

In particular, the new metric is uniformly equivalent to the Euclidean metric
in any closed cone that has finite, nonzero angle to the boundaries, i.e.
\[ 
\{\;(\xus,\xc)\;|\;
\|\xus\|_2 \ge c\max\{|\xu|,|\xs{1}|,\ldots,|\xs{N}|\}\;\},
\qquad c > 1.
\]
Thus Lipschitz estimates with respect to the new metric carry over to the 
Euclidean metric.

We denote the in- and out-sections without the singular boundaries but with the 
origin by
\begin{equation}\label{eqNonsingularSections}
\begin{array}{rcl}
\SectionInInterior 
&:=&  \SectionIn \cap ((0,\infty)^{N+1}\cup\{0\})\times\setR)
\\&=& \{\; (\xus,\xc)\in\SectionIn \;|\; 
            \xus=0 \mbox{ or } \xu\xs{1}\cdots\xs{N} \ne 0 \;\},
\\
\SectionOutInterior 
&:=&  \SectionOut \cap ((0,\infty)^{N+1}\cup\{0\})\times\setR)
\\&=& \{\; (\xsss,\xc)\in\SectionOut \;|\; 
            \xsss=0 \mbox{ or } \xss\xs{1}\cdots\xs{N} \ne 0 \;\}.
\end{array}
\end{equation}

Corollary \ref{thContinuousLocalMap} yields Lipschitz continuity 
of the local passage $\MapLocal$ at the origin 
with respect to the Euclidean metric, 
and by (\ref{eqDistanceToOrigin}) also with respect to the new metrics 
(\ref{eqMetricIn}, \ref{eqMetricOut}).
To obtain Lipschitz estimates away from the invariant boundaries 
with respect to the new metric,
we consider the linearisation of the vector field (\ref{eqSplitLocalSystem}) 
along a trajectory $x(t)$ from the in- to the out-section 
to obtain bounds on $D\MapLocal$.

We start with a tangent vector $\dxIn=(\dxuIn,\dxssIn=0,\dxsIn{},\dxcIn)$ 
of unit length with respect to the metric (\ref{eqMetricIn}) 
at a point $\xIn=(\xusIn,\xcIn)\in\SectionIn$,
\begin{equation}\label{eqNormedTangentVector}
  1 \;=\; \|\dxIn\|_*^2 \;=\; 
  \frac{\|\xusIn\|_2^2}{|\xuIn|^2} |\dxuIn|^2
  + \sum_{\ell=1}^N \frac{\|\xusIn\|_2^2}{|\xsIn{\ell}|^2} 
      |\dxsIn{\ell}|^2
  + |\dxcIn|^2.
\end{equation}
First, we project $\dxIn$ along the vector field $f$ into 
the hyperplane $\{\dxu=0\}$, as this remains invariant under 
the linearised flow and corresponds to the out-section.
The projected vector
\begin{equation}\label{eqProjectedTangentVector}
  \dx(0) \;=\; \dxIn - \frac{\dxuIn}{\muu\xuIn} f(\xIn)
\end{equation}
thus represents our initial condition to the linearised flow
\begin{equation}\label{eqLinearizedSystem}
\begin{array}{rcl}
\dot\dxss &=& \left(-\muss(\xc) + \varepsilon\gss(x) 
                    + \varepsilon(\partial_{\xss}\gss(x))\xss\right) \dxss
\\&&~       + \varepsilon \sum_{\ell=1}^N  
                      (\partial_{\xs{\ell}}\gss(x))\xss \dxs{\ell}
\\&&~       + \left(-\muss'(\xc) 
                    + \varepsilon(\partial_{\xc}\gss(x))\xss\right) \dxc,
\\
\dot\dxs{\ell} &=& \left(-\mus{\ell}(\xc) + \varepsilon\gs{\ell}(x) 
                    + \varepsilon(\partial_{\xs{\ell}}\gs{\ell}(x))\xs{\ell}\right) \dxs{\ell}
\\&&~       + \varepsilon (\partial_{\xss}\gs{\ell}(x))\xs{\ell} \dxss
\\&&~       + \varepsilon \sum_{\tilde\ell\ne\ell}  
                      (\partial_{\xs{\tilde\ell}}\gs{\ell}(x))\xs{\ell} \dxs{\tilde\ell}
\\&&~       + \left(-\mus{\ell}'(\xc) 
                    + \varepsilon(\partial_{\xc}\gs{\ell}(x))\xs{\ell}\right) \dxc,
\\
\dot\dxc &=& \varepsilon \left( \partial_{\xc}\gcss(x)\xss\xu 
                    + \sum_{\ell=1}^N \partial_{\xc}\gcs{\ell}(x)\xs{\ell}\xu \right) \dxc
\\&&~       + \varepsilon \left( \gcss(x)\xu + \partial_{\xss}\gcss(x)\xss\xu 
                    + \sum_{\ell=1}^N \partial_{\xss}\gcs{\ell}(x)\xs{\ell}\xu \right) \dxss
\\&&~       + \varepsilon \sum_{\ell=1}^N \left( \partial_{\xs{\ell}}\gcss(x)\xss\xu 
                    + \gcs{\ell}(x)\xu
                    + \sum_{\tilde\ell=1}^N \partial_{\xs{\ell}}\gcs{\tilde\ell}(x)\xs{\tilde\ell}\xu \right) \dxs{\ell}.
\end{array}
\end{equation}
Here we already dropped the $u$-component.

\begin{lem}\label{thInitialTangentVectorEstimate}
Let a \textbf{unit} tangent vector $\dxIn$ to 
$\xIn=(\xusIn,\xcIn)\in\SectionIn$ with respect to the metric 
(\ref{eqMetricIn}) be given. 
The projection $\dx(0)$ of $\dxIn$ 
along the vector field (\ref{eqSplitLocalSystem})
into the plane $\{\dxu=0\}$ is estimated by
\[
\begin{array}{rcl}
|\dxss(0)|       &\le& C \|\xusIn\|_2^{-1}, \\
|\dxs{\ell}(0)|  &\le& C \|\xusIn\|_2^{-1}|\xsIn{\ell}|, \\
|\dxc(0)-\dxcIn| &\le& \varepsilon C,
\end{array}
\]
with a constant $C$ independent of $\xIn$, $\dxIn$, 
and $\varepsilon < \varepsilon_0$, 
provided $\varepsilon_0$ is chosen small enough.
\end{lem}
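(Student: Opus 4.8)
The plan is a direct substitution of the explicit vector field (\ref{eqSplitLocalSystem}) into the projection formula (\ref{eqProjectedTangentVector}), followed by term-by-term estimation using only the normalization (\ref{eqNormedTangentVector}), Proposition \ref{thEigenvalueBounds}, and the uniform nonlinearity bounds (\ref{eqGeneralHotBounds}). First I would record the three inequalities that (\ref{eqNormedTangentVector}) yields for free: $|\dxuIn/\xuIn| \le \|\xusIn\|_2^{-1}$, $|\dxsIn{\ell}/\xsIn{\ell}| \le \|\xusIn\|_2^{-1}$ for each $\ell$, and $|\dxcIn| \le 1$; since $|\xuIn| \le \|\xusIn\|_2$, the first of these also gives $|\dxuIn| \le 1$. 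It is worth noting at the outset that the $\xu$-component of $\dx(0)$ vanishes identically --- the $\xu$-component of $f(\xIn)$ is $\muu\xuIn$, which is precisely the term subtracted in (\ref{eqProjectedTangentVector}) --- which is why (\ref{eqLinearizedSystem}) already omits it; and that $\xIn\in\SectionIn$ implies $\xIn\in\neighborhood$, so that (\ref{eqGeneralHotBounds}) and, after fixing some $\alpha\in(0,1)$ and the attendant $\varepsilon_0$ from Proposition \ref{thEigenvalueBounds}, the eigenvalue bounds hold at $\xIn$.

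For the strong-stable component, $\xssIn = 1$ and $\dxssIn = 0$ give $\dxss(0) = (\dxuIn/(\muu\xuIn))(\muss(\xcIn) - \varepsilon\gss(\xIn))$, and combining $|\dxuIn/\xuIn| \le \|\xusIn\|_2^{-1}$ with $\muss(\xcIn) \le \alpha^{-1}\muss$ from Proposition \ref{thEigenvalueBounds} and the bound on $|\gss(\xIn)|$ from (\ref{eqGeneralHotBounds}) produces $|\dxss(0)| \le C\|\xusIn\|_2^{-1}$. For the stable components, the $\xs{\ell}$-component of $f(\xIn)$ is $(-\mus{\ell}(\xcIn) + \varepsilon\gs{\ell}(\xIn))\xsIn{\ell}$, hence $\dxs{\ell}(0) = \dxsIn{\ell} + (\dxuIn/(\muu\xuIn))(\mus{\ell}(\xcIn) - \varepsilon\gs{\ell}(\xIn))\xsIn{\ell}$; the first summand is at most $\|\xusIn\|_2^{-1}|\xsIn{\ell}|$ by the second inequality above, the second summand carries the factor $|\xsIn{\ell}|$ explicitly while $|\dxuIn/\xuIn| \le \|\xusIn\|_2^{-1}$, and Proposition \ref{thEigenvalueBounds} together with (\ref{eqGeneralHotBounds}) controls the remaining scalar factors, so $|\dxs{\ell}(0)| \le C\|\xusIn\|_2^{-1}|\xsIn{\ell}|$.

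The centre component is the one place where the structure of the last line of (\ref{eqSplitLocalSystem}) is used: since $\dot\xc$ carries an overall factor $\xu$, the $\xc$-component of $f(\xIn)$ equals $\varepsilon(\gcss(\xIn) + \sum_{\ell=1}^N \gcs{\ell}(\xIn)\xsIn{\ell})\xuIn$ (again using $\xssIn = 1$), so the $\xuIn$ in the denominator of (\ref{eqProjectedTangentVector}) cancels and $\dxc(0) - \dxcIn = -(\varepsilon\dxuIn/\muu)(\gcss(\xIn) + \sum_{\ell=1}^N \gcs{\ell}(\xIn)\xsIn{\ell})$. With $|\dxuIn| \le 1$, $|\xsIn{\ell}| < 1$ on $\SectionIn$, and $|\gcss|, |\gcs{\ell}| < C$ from (\ref{eqGeneralHotBounds}), the right-hand side is bounded by $\varepsilon(N+1)C/\muu$, i.e.\ $\varepsilon C$ after renaming the constant. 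I expect no genuine obstacle; the estimates are mechanical. The only two points requiring a moment's care are (a) reading the weighted bounds off the metric (\ref{eqMetricIn}) correctly --- it is exactly the weights $\|\xusIn\|_2/|\xuIn|$ and $\|\xusIn\|_2/|\xsIn{\ell}|$ that convert unit $*$-length of $\dxIn$ into the factors $\|\xusIn\|_2^{-1}$ in the conclusion --- and (b) noticing the cancellation of $\xuIn$ in the centre component, which is what keeps that term of order $\varepsilon$ rather than letting it blow up as $\xuIn\to 0$. In particular, the vanishing of $\gss,\gs{\ell}$ along the line of equilibria is not needed; only their boundedness on $\neighborhood$ enters.
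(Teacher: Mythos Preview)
Your proposal is correct and follows essentially the same approach as the paper: direct substitution of the components of $f(\xIn)$ from (\ref{eqSplitLocalSystem}) into the projection formula (\ref{eqProjectedTangentVector}), followed by term-by-term estimation using the metric normalization (\ref{eqNormedTangentVector}), the eigenvalue bounds of Proposition \ref{thEigenvalueBounds}, and the nonlinearity bounds (\ref{eqGeneralHotBounds}). The paper carries out exactly these computations, including the cancellation of $\xuIn$ in the centre component that you flag; your write-up is if anything slightly more explicit about reading off the weighted bounds $|\dxuIn/\xuIn|,\,|\dxsIn{\ell}/\xsIn{\ell}| \le \|\xusIn\|_2^{-1}$ from the metric.
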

\begin{proof}
Apply (\ref{eqProjectedTangentVector}) to (\ref{eqNormedTangentVector}) 
and use the bounds (\ref{eqGeneralHotBounds}) on the nonlinear terms of the 
vector field (\ref{eqSplitLocalSystem}).

Indeed, we find
\[
\begin{array}{rcl}
\dx(0) &=&
  \dxIn - \frac{\dxuIn}{\muu\xuIn} f(\xIn)
\\ &=&
  \left(\begin{array}{l}
     \dxuIn\\\dxssIn=0\\\dxsIn{\ell}\\\dxcIn
  \end{array}\right)
  - \displaystyle\frac{\dxuIn}{\muu\xuIn}
  \left(\begin{array}{l}
     \muu\xuIn\\
     \left(-\muss(\xcIn) + \varepsilon \gss(\xIn)\right)\,1\\
     \left(-\mus{\ell}(\xcIn) + \varepsilon \gs{\ell}(\xIn)\right)\xsIn{\ell}\\
     \varepsilon\left(\gcss(\xIn)\,1 + \sum_{\ell=1}^N(\xIn)\xsIn{\ell}\right)
       \xuIn
  \end{array}\right).
\end{array}
\]
Immediately, we have $\dxuIn(0)=0$.
For the other components we again use 
the uniform bounds (\ref{eqGeneralHotBounds}) on the nonlinearity $g$,
the bounds (\ref{eqEigenvalueBounds}) on the eigenvalues,
and the bounds (\ref{eqNormedTangentVector}) on the components of $\dxIn$.
We obtain for arbitrary $0<\alpha<1$, 
if $\varepsilon < \varepsilon_0$ is chosen small enough:
\[
|\dxss(0)| \;\le\; \frac{|\dxuIn|}{\muu|\xuIn|}\frac{1}{\alpha}\muss
\;\le\; \frac{\muss}{\alpha\muu} \frac{1}{\|\xusIn\|_2},
\]
for the component transverse to the in-section, 
\[
|\dxs{\ell}(0)| 
\;\le\; |\dxsIn{\ell}| 
   + \frac{|\dxuIn|}{\muu|\xuIn|}\frac{1}{\alpha}\mus{\ell}|\xsIn{\ell}|
\;\le\; \left(1+\frac{\mus{\ell}}{\alpha\muu}\right) 
  \frac{|\xsIn{\ell}|}{\|\xusIn\|_2},
\]
for each of the remaining $N$ stable components, and
\[
|\dxc(0)-\dxcIn| 
\;\le\; \frac{|\dxuIn|}{\muu|\xuIn|} \varepsilon C (N+1) |\xuIn|
\;\le\; \varepsilon \frac{1}{\muu} C \frac{|\xuIn|}{\|\xusIn\|_2}
\;\le\; \varepsilon \frac{1}{\muu} C,
\]
for the centre component. 
An obvious choice of a new constant $C$ yields all claimed estimates.
\end{proof}

\begin{lem}\label{thLinearizedVectorFieldEstimates}
For all $0<\alpha<1$ there exists an $\varepsilon_0>0$ such that 
for all $\varepsilon<\varepsilon_0$, 
the linearised flow (\ref{eqLinearizedSystem}) can be estimated:
\[
\begin{array}{rcl}
|\dxss|'      &\le& - \alpha \muss |\dxss| 
    + \varepsilon C |\xss| \left( |\dxc| + \sum_{\ell=1}^N|\dxs{\ell}| \right),
\\
|\dxs{\ell}|' &\le& - \alpha \mus{\ell} |\dxs{\ell}| 
    + \varepsilon C |\xs{\ell}| \left( |\dxc| + |\dxss| 
          + \sum_{\tilde\ell\ne\ell}|\dxs{\tilde\ell}| \right),
\\
|\dot\dxc|    &\le& \varepsilon C \left( |\xss| + \sum_{\ell=1}^N |\xs{\ell}| \right) |\xu| |\dxc|
    + \varepsilon C |\xu| \left( |\dxss| + \sum_{\ell=1}^N|\dxs{\ell}| \right),

\end{array}
\]
Here $C$ is a constant independent of $\xIn$, $\dxIn$, and $\varepsilon < \varepsilon_0$.
\end{lem}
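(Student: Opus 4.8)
The plan is to differentiate the linearised system (\ref{eqLinearizedSystem}) component by component, to split in each equation the \emph{diagonal} coefficient of $\dxss$, of $\dxs{\ell}$, resp.\ of $\dxc$, from the remaining off--diagonal terms, and to estimate the two pieces separately. For the diagonal part I would absorb the $\varepsilon$--size perturbations coming from the nonlinearity into a slightly worse contraction rate, using Proposition \ref{thEigenvalueBounds}; for the off--diagonal part I would simply insert the uniform bounds (\ref{eqGeneralHotBounds}) on $\gu,\gss,\gs{\ell},\gcss,\gcs{\ell}$ and their first derivatives, together with $|\muss'|,|\mus{\ell}'|<\varepsilon C$ from (\ref{eqEigenvalueGradients}). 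Since we work in $\neighborhood=(-2,2)^{N+3}$, each of $|\xu|,|\xss|,|\xs{\ell}|$ is $<2$, so such insertions cost only fixed constants. Throughout, the differential inequalities are meant in the sense $|y|'\le a(t)|y|+|b(t)|$ for a scalar equation $\dot y=a(t)y+b(t)$ with $a$ real; the isolated instants where $y=0$ are handled in the standard way, e.g.\ by replacing $|y|$ with $\sqrt{y^2+\sigma^2}$ and letting $\sigma\downarrow 0$.

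Concretely, for $\dxss$ the diagonal coefficient in (\ref{eqLinearizedSystem}) is $-\muss(\xc)+\varepsilon\gss(x)+\varepsilon(\partial_{\xss}\gss(x))\xss$. Given the target $\alpha\in(0,1)$ I would pick $\tilde\alpha=\frac12(1+\alpha)\in(\alpha,1)$, apply Proposition \ref{thEigenvalueBounds} with $\tilde\alpha$ so that $\muss(\xc)\ge\tilde\alpha\muss$ and $\mus{\ell}(\xc)\ge\tilde\alpha\mus{\ell}$ on $\neighborhood$, and then shrink $\varepsilon_0$ so that the $\varepsilon$--terms (bounded by a fixed multiple of $\varepsilon C$ by (\ref{eqGeneralHotBounds})) are smaller than $(\tilde\alpha-\alpha)\min\{\muss,\mus{1},\ldots,\mus{N}\}$; this gives diagonal coefficient $\le-\alpha\muss$, and likewise $\le-\alpha\mus{\ell}$ in the $\dxs{\ell}$--equation. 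For the remaining terms of $\dot\dxss$ the key observation is that they all carry a common factor $\xss$ --- in particular the a priori merely $\xc$--dependent coefficient $\muss'(\xc)$ appears multiplied by $\xss$, because the whole right--hand side of $\dot\xss$ in (\ref{eqSplitLocalSystem}) has $\xss$ as a common factor; this is precisely where the invariant subspace $\{\xss=0\}$ of (\ref{eqCondLocSubspaces}) is used. Bounding $|\partial_{\xs{\ell}}\gss|,|\partial_{\xc}\gss|<C$ and $|\muss'|<\varepsilon C$ then yields the off--diagonal contribution $\varepsilon C|\xss|\big(|\dxc|+\sum_{\ell}|\dxs{\ell}|\big)$, i.e.\ the asserted estimate. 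The $\dxs{\ell}$--equation is identical, the off--diagonal terms sharing the factor $\xs{\ell}$ (again including the $\mus{\ell}'(\xc)$ term), giving $\varepsilon C|\xs{\ell}|\big(|\dxc|+|\dxss|+\sum_{\tilde\ell\ne\ell}|\dxs{\tilde\ell}|\big)$.

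For $\dxc$ I would linearise $\dot\xc=\varepsilon\big(\gcss(x)\xss+\sum_{\ell}\gcs{\ell}(x)\xs{\ell}\big)\xu$ directly. Its $\xc$--derivative is $\varepsilon\xu\big(\partial_{\xc}\gcss\cdot\xss+\sum_{\ell}\partial_{\xc}\gcs{\ell}\cdot\xs{\ell}\big)$, which with $|\partial_{\xc}\gcss|,|\partial_{\xc}\gcs{\ell}|<C$ is bounded by $\varepsilon C\big(|\xss|+\sum_{\ell}|\xs{\ell}|\big)|\xu|$ --- here I keep the sharper factor $|\xss|+\sum_\ell|\xs{\ell}|$ rather than a crude constant, as the statement demands. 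The $\xss$-- and $\xs{\ell}$--derivatives each consist of an undifferentiated $\gcss$ or $\gcs{\ell}$ times $\xu$ plus terms with an extra factor $\xss$ or $\xs{\ell}$, hence are bounded simply by $\varepsilon C|\xu|$, summing to $\varepsilon C|\xu|\big(|\dxss|+\sum_{\ell}|\dxs{\ell}|\big)$. There is no $\dxu$--contribution because the hyperplane $\{\dxu=0\}$, into which the initial tangent vector was projected, is invariant under the linearised flow (as $\dot\dxu=\muu\dxu$); this is why that component was dropped before (\ref{eqLinearizedSystem}). Replacing $C$ by the maximum of the finitely many constants produced along the way gives the uniform constant of the statement.

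I do not expect a real obstacle here: each estimate is a direct insertion of already--established uniform bounds into the explicit linear system. The only point requiring attention is bookkeeping --- verifying that \emph{every} term except the diagonal one genuinely carries the factor $\xss$, $\xs{\ell}$, resp.\ $\xu$ claimed, and in particular that the eigenvalue--derivative terms $\muss'(\xc),\mus{\ell}'(\xc)$ do so by virtue of the common factor forced by the invariant subspaces (\ref{eqCondLocSubspaces}). These factors are not cosmetic: they are exactly what will make the flow contract in the non--Euclidean metric (\ref{eqMetricIn}, \ref{eqMetricOut}) in the remainder of the section, so the precise form of the bounds must be retained.
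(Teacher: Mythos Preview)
Your proposal is correct and follows exactly the route the paper takes: the paper's own proof consists of three lines, saying only to insert the bounds (\ref{eqGeneralHotBounds}) on the nonlinearities and (\ref{eqEigenvalueGradients}) on the eigenvalue derivatives, noting $x\in\neighborhood$. Your write-up is a fully fleshed-out version of that same argument, and your observation that the $\muss'(\xc)$ and $\mus{\ell}'(\xc)$ contributions inherit the factors $\xss$ resp.\ $\xs{\ell}$ from the common factor in (\ref{eqSplitLocalSystem}) --- a consequence of the invariant subspaces (\ref{eqCondLocSubspaces}) --- is exactly the point one has to check to obtain the stated form of the off-diagonal bounds.
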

\begin{proof}
Use the bounds (\ref{eqGeneralHotBounds}) on the nonlinear terms of the 
vector field (\ref{eqSplitLocalSystem}) and the bounds 
(\ref{eqEigenvalueGradients}) on the derivatives of the eigenvalues.
Note that $x\in\neighborhood=[-2,2]^{N+3}$.
This immediately yields the claimed estimates.
\end{proof}

\begin{lem}\label{thLinearizedFlowEstimates}
For all $0<\alpha<1$ there exists an $\varepsilon_0>0$ such that 
for all $\varepsilon<\varepsilon_0$ the following statement holds:
Let a trajectory $x(t)$ of local passage, 
$x(0)=\xIn\in\SectionInInterior$, $\xusIn\ne0$ be given.
Let a unit tangent vector $\dxIn$ to $\xIn\in\SectionIn$ 
with respect to the metric (\ref{eqMetricIn}), 
and its projection $\dx(0)$ be given.
Then the evolution of $\dx$ under the linearised flow is estimated by
\[
\begin{array}{rcl}
|\dxss(t)|       &\le& \exp(-\alpha^2\muss t) C \|\xusIn\|_2^{-1}, \\
|\dxs{\ell}(t)|  &\le& \exp(-\alpha^2\mus{\ell} t) C \|\xusIn\|_2^{-1} |\xsIn{\ell}|, \\
|\dxc(t)-\dxcIn| &\le& \varepsilon C,
\end{array}
\]
all along the local passage, $t\in[0,\tloc]$.
The constant $C$ is independent of $\xIn$, $\dxIn$, $t$, 
and $\varepsilon < \varepsilon_0$.
\end{lem}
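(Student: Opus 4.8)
The plan is to run a continuity (bootstrap) argument along the passage interval $[0,\tloc]$ supplied by Lemma~\ref{thLocalPassage}, feeding the differential inequalities of Lemma~\ref{thLinearizedVectorFieldEstimates} with the trajectory bounds of Lemma~\ref{thLocalBounds} and the initial estimate of Lemma~\ref{thInitialTangentVectorEstimate}. Even though the lemma only asks for decay rate $\alpha^2$, it is cleaner --- and, as we shall see, necessary for the $\dxc$-component --- to prove decay with a rate strictly between $\max(\alpha^2,\muu/\muss)$ and $1$. So first I would fix an auxiliary exponent $\gamma$ with $\max(\alpha^2,\muu/\muss)<\gamma<1$, which exists precisely because $\muu/\muss<1$ by~(loc-iv), together with a $\gamma'\in(\gamma,1)$, and apply all the earlier lemmas with parameter $\gamma'$. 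Since $\xusIn\ne0$ and $\xIn\in\SectionInInterior$, all of $\xuIn,\xsIn1,\ldots,\xsIn N$ are nonzero, so the rescalings below are legitimate; and on $\SectionIn$ one has $\xssIn=1$, $\|\xusIn\|_2<\sqrt2$, $|\xuIn|,|\xsIn\ell|\le\|\xusIn\|_2$, and $|\dxcIn|\le1$ by~(\ref{eqNormedTangentVector}).

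Let $C_0$ denote the constant of Lemma~\ref{thInitialTangentVectorEstimate}, set $C_1:=2(C_0+1)$, and let $T\in(0,\tloc]$ be maximal such that, on $[0,T]$, all of (A)~$|\dxss(t)|\le C_1\|\xusIn\|_2^{-1}e^{-\gamma\muss t}$, \ (B)~$|\dxs\ell(t)|\le C_1\|\xusIn\|_2^{-1}|\xsIn\ell|\,e^{-\gamma\mus\ell t}$, and \ (C)~$|\dxc(t)-\dxcIn|\le 1/2$ hold; by Lemma~\ref{thInitialTangentVectorEstimate} they hold with strict inequality at $t=0$, so $T>0$, and (C) yields $|\dxc(t)|\le2$ on $[0,T]$. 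To improve (A) I would put $w(t):=\|\xusIn\|_2\,e^{\gamma\muss t}|\dxss(t)|$ and differentiate: combining the bound on $|\dxss|'$ from Lemma~\ref{thLinearizedVectorFieldEstimates}, the bound $|\xss(t)|\le e^{-\gamma'\muss t}$ from Lemma~\ref{thLocalBounds} (since $\xssIn=1$), and the estimate $\|\xusIn\|_2\bigl(|\dxc|+\sum_\ell|\dxs\ell|\bigr)\le 2\sqrt2+NC_1$ that follows from (A)--(C) and $|\xsIn\ell|<1$, one obtains
\[
\dot w \;\le\; -(\gamma'-\gamma)\muss\,w \;+\; \varepsilon C\,(2\sqrt2+NC_1)\,e^{-(\gamma'-\gamma)\muss t}.
\]
Variation of constants together with $\sup_{t\ge0}t\,e^{-\lambda t}=1/(e\lambda)$ gives $w(t)\le C_0+\varepsilon B$ with $B$ independent of $\varepsilon$ and $\xIn$, hence $w(t)\le C_0+1<C_1$ once $\varepsilon_0$ is small enough --- a strict improvement of (A). The argument for (B) is identical, using the $|\dxs\ell|'$-inequality and $|\xs\ell(t)|\le e^{-\gamma'\mus\ell t}|\xsIn\ell|$.

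To improve (C), I would integrate the $|\dot\dxc|$-inequality of Lemma~\ref{thLinearizedVectorFieldEstimates} over $[0,t]\subseteq[0,\tloc]$. Its first term $\varepsilon C(|\xss|+\sum_\ell|\xs\ell|)|\xu|\,|\dxc|$ is harmless: with $|\xu|\le1$, $|\dxc|\le2$ and the exponential decay of $|\xss|,|\xs\ell|$ from Lemma~\ref{thLocalBounds} it integrates to at most $2\varepsilon C\bigl(\frac{1}{\gamma'\muss}+\sum_\ell\frac{1}{\gamma'\mus\ell}\bigr)$. For the second term $\varepsilon C|\xu|(|\dxss|+\sum_\ell|\dxs\ell|)$ I would use (A), (B) and $|\xu(s)|=e^{\muu s}|\xuIn|$: here $\gamma\muss>\muu$ is essential, yielding $\int_0^{\tloc}|\xu|\,|\dxss|\,\diff s\le C_1\|\xusIn\|_2^{-1}|\xuIn|(\gamma\muss-\muu)^{-1}\le C_1(\gamma\muss-\muu)^{-1}$, whereas for each $\dxs\ell$-term the factor $|\xsIn\ell|/\|\xusIn\|_2\le1$ already makes the crude bound $|\xu|\le1$ suffice, so $\int_0^{\tloc}|\xu|\,|\dxs\ell|\,\diff s\le C_1(\gamma\mus\ell)^{-1}$ --- this is exactly where the relaxed eigenvalue configuration ($\muu<\muss$, with no constraint on the $\mus\ell$) is exploited. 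Adding $|\dxc(0)-\dxcIn|\le\varepsilon C_0$ gives $|\dxc(t)-\dxcIn|\le\varepsilon C$ with $C$ depending only on $\alpha$, $N$ and the data of the earlier lemmas, which is $<1/2$ for $\varepsilon_0$ small --- a strict improvement of (C). Taking $\varepsilon_0$ below all thresholds used above and below those of the invoked lemmas, the three improved estimates are strict on $[0,T]$, so continuity of $x(\cdot)$ and $\dx(\cdot)$ rules out $T<\tloc$. Hence (A)--(C) hold on all of $[0,\tloc]$; since $\gamma\ge\alpha^2$ the exponents in (A), (B) may be replaced by $\alpha^2$, which are the first two assertions (with $C=C_1$), and the displayed $\dxc$-bound is the third. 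Take the larger of the two constants.

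The main obstacle is the $\dxc$-estimate. The forcing term $\varepsilon C|\xu|\,|\dxss|$ contains $|\xu(s)|=e^{\muu s}|\xuIn|$, which grows up to size $1$ over the passage, so the integral converges only if $|\dxss|$ decays strictly faster than $e^{-\muu s}$; this forces the bootstrap to be run with the auxiliary exponent $\gamma>\muu/\muss$ --- available solely because of~(loc-iv) --- rather than with $\alpha^2$. It also forces (C) to be taken as the crude bound $|\dxc|\le2$ instead of the sharp $O(\varepsilon)$ bound one eventually proves, because inserting the sharp bound into the bootstrap would make the improvement constant depend circularly on itself. Everything else is routine Gr\"onwall bookkeeping.
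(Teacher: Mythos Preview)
Your argument is correct and follows the same line as the paper's: a bootstrap on $[0,\tloc]$ that feeds the differential inequalities of Lemma~\ref{thLinearizedVectorFieldEstimates} with the trajectory bounds of Lemma~\ref{thLocalBounds} and the initial data of Lemma~\ref{thInitialTangentVectorEstimate}, then closes the $\dxc$-estimate using the gap $\muss>\muu$. The only real difference is organizational: you carry the exponential decay with an explicit auxiliary exponent $\gamma>\max(\alpha^2,\muu/\muss)$ in the bootstrap hypothesis, whereas the paper assumes only the crude bounds $|\dxss|\le 2C\|\xusIn\|_2^{-1}$, $|\dxs\ell|\le 2C$, $|\dxc|\le 2$, derives the $\alpha^2$-decay from them, and then notes that $\alpha^2\muss-\muu>0$ ``for $\alpha$ close enough to~$1$'' (implicitly using that the claim for $\alpha$ near~$1$ implies it for all smaller~$\alpha$).
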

\begin{proof}
Assume
\begin{equation}\label{eqTemporaryBounds}
|\dxss(\tau)| \;\le\; 2 C \|\xusIn\|_2^{-1}, \qquad
|\dxs{\ell}(\tau)| \;\le\; 2 C, \qquad
|\dxc(\tau)| \;\le\; 2,
\end{equation}
on $\tau\in[0,t]$. This assumption holds for small $t$ due the estimates of 
the initial values in lemma \ref{thInitialTangentVectorEstimate}. 
The constant $C$ is taken from the lemma.

Then the estimates of lemmata \ref{thLinearizedVectorFieldEstimates} 
and \ref{thLocalBounds} yield
\[
\begin{array}{rcl}
|\dxss(\tau)|'      &\le& -\alpha\muss|\dxss(\tau)| 
      + \varepsilon C \exp(-\alpha\muss\tau) \left( 2 + 2 N C \right)
\\                  &\le& -\alpha\muss|\dxss(\tau)| 
      + \varepsilon \tilde{C} \exp(-\alpha\muss\tau),
\\
|\dxs{\ell}(\tau)|' &\le& -\alpha\mus{\ell}|\dxs{\ell}(\tau)| 
      + \varepsilon C \exp(-\alpha\mus{\ell}\tau)|\xsIn{\ell}| 
        \left(2 + 2 C \|\xusIn\|_2^{-1} + 2(N-1)C \right)
\\                  &\le& -\alpha\mus{\ell}|\dxs{\ell}(\tau)| 
      + \varepsilon \tilde{C} \exp(-\alpha\mus{\ell}\tau)
        \|\xusIn\|_2^{-1}|\xsIn{\ell}|.
\end{array}
\]
The last inequality uses $\|\xusIn\|_2 < \sqrt{N}$. 
The new constant is bounded by $\tilde{C} \le (2+2NC)\sqrt{N}C$ 
and will again be denoted by $C$.

Then we can integrate
\footnote{This can also be seen as a Gronwall estimate 
for $\exp(\alpha\muss t)\dxss(t)$ and $\exp(\alpha\mus{\ell} t)\dxs{\ell}(t)$.} 
the above estimates to obtain
\[
\begin{array}{rcl}
|\dxss(t)| &\le& \exp(-\alpha\muss t) ( |\dxss(0)| + \varepsilon C t )
\\
|\dxs{\ell}(t)| &\le& 
    \exp(-\alpha\mus{\ell} t) ( |\dxs{\ell}(0)| 
    + \varepsilon C t \|\xusIn\|_2^{-1}|\xsIn{\ell}|)
\end{array}
\]
For small enough $\varepsilon_0$ this yields
\[
\begin{array}{rcl}
|\dxss(t)|
   &\le& \exp(-\alpha^2\muss t) ( |\dxss(0)| + 1 )
\\ &\le& \exp(-\alpha^2\muss t) ( C \|\xusIn\|_2^{-1} + 1 )
\\ &\le& \exp(-\alpha^2\muss t) ( C + 1 ) \|\xusIn\|_2^{-1},
\\
|\dxs{\ell}(t)| 
   &\le& \exp(-\alpha^2\mus{\ell} t) 
         ( |\dxs{\ell}(0)| + \|\xusIn\|_2^{-1}|\xsIn{\ell}| )
\\ &\le& \exp(-\alpha^2\mus{\ell} t) 
         ( C + 1 ) \|\xusIn\|_2^{-1}|\xsIn{\ell}|
\end{array}
\]
In particular, assumptions (\ref{eqTemporaryBounds}) and the first two claims 
hold as long as $|\dxc(\tau)| \le 2$,
if the original constant was chosen larger than $1$.

We use the new estimates of $\dxss, \dxs{\ell}$, the assumption on $\dxc$, 
and the bound on the trajectory given by lemma \ref{thLocalBounds}
to estimate the centre component:
\[
\begin{array}{rcl}
|\dot\dxc(\tau)|
   &\le& \varepsilon C \left( |\xss| + \sum_{\ell=1}^N |\xs{\ell}| \right) |\xu| |\dxc|
         + \varepsilon C |\xu| \left( |\dxss| + \sum_{\ell=1}^N|\dxs{\ell}| \right)
\\ &\le& 2 \varepsilon C \left( |\xss| + \sum_{\ell=1}^N |\xs{\ell}| \right) |\xu|
\\ &&       + \varepsilon C (C+1) \|\xusIn\|_2^{-1} |\xu| 
              \left( \exp(-\alpha^2\muss \tau) 
              + \sum_{\ell=1}^N \exp(-\alpha^2\mus{\ell} \tau)|\xsIn{\ell}| \right)
\\ &\le& 2 \varepsilon C 
              \left( \exp(-(\alpha\muss-\muu)\tau) |\xuIn| 
              + \sum_{\ell=1}^N \exp(-\alpha\mus{\ell}\tau)|\xsIn{\ell}| \right)
\\ &&       + \varepsilon C (C+1) \|\xusIn\|_2^{-1} 
              \left( \exp(-(\alpha^2\muss-\muu) \tau)|\xuIn| 
              + \sum_{\ell=1}^N \exp(-\alpha^2\mus{\ell} \tau)|\xsIn{\ell}| \right)
\\ &\le& \varepsilon\tilde{C} \left( \exp(-(\alpha^2\muss-\muu) \tau) 
                     + \sum_{\ell=1}^N \exp(-\alpha^2\mus{\ell} \tau) \right),
\end{array}
\]
with new constant $\tilde{C} < C(C+3)$. 
Thus $|\dot\dxc(\tau)|$ decays exponentially for $\alpha$ close enough to 1. 
Integration yields
\[
\int_0^t |\dot\dxc(\tau)| \;\le\; \varepsilon \hat{C},
\]
with 
$\hat{C} < \tilde{C} ( (\alpha^2\muss-\muu)^{-1} 
              + \sum_{\ell=1}^N (\alpha^2\mus{\ell})^{-1} )$.
If $\varepsilon_0$ is chosen small enough, 
this shows that the assumption (\ref{eqTemporaryBounds})
indeed holds all along the passage and the claimed estimates are valid.
\end{proof}

\begin{thm}[local Lipschitz map]\label{thLocalMapLipschitz}
The local passage 
$\MapLocal:\SectionInInterior\to\SectionOutInterior$
is Lipschitz continuous with respect to the metrics 
(\ref{eqMetricIn}, \ref{eqMetricOut}).

There exist $\beta>0$, $\varepsilon_0>0$ and $C>0$ such that 
for all $\varepsilon<\varepsilon_0$ the following estimates hold for all
$\xIn, \tilde\xIn$ with $0 \le \tilde\xuIn \le \xuIn$:
\[
\begin{array}{rcl}
\dist_*(\tilde\xsssOut, \xsssOut)
  &\le& |\xuIn|^\beta C \;\dist_*(\tilde\xIn,\xIn),
\\
|(\tilde\xcOut-\xcOut) - (\tilde\xcIn-\xcIn)| 
  &\le& \varepsilon C \;\dist_*(\tilde\xIn,\xIn).
\end{array}
\]

The domain $\SectionInInterior$, as defined in (\ref{eqNonsingularSections}),
is given by the local section without the invariant singular boundaries
but including the line $(0,\xc)$ representing the cap of heteroclinic orbits.

The drift in the centre direction can be made arbitrarily small by choosing 
a sufficiently small local neighbourhood.
The contraction in the transverse directions is arbitrarily strong by 
restricting the in-section to the part close to the primary object, 
i.e.~the stable manifold of the origin.
\end{thm}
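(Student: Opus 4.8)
The analytic content is already packaged in corollary~\ref{thContinuousLocalMap} and lemmata~\ref{thLocalBounds}, \ref{thInitialTangentVectorEstimate}, \ref{thLinearizedFlowEstimates}; the remaining task is to turn those pointwise estimates on $D\MapLocal$ into Lipschitz estimates for $\MapLocal$ itself. The plan is to bound the operator norm of $D\MapLocal$ with respect to the metrics~(\ref{eqMetricIn}), (\ref{eqMetricOut}) --- by $|\xuIn|^{\beta}C$ on the transverse components and by $\varepsilon C$ on the $\xc$-component measured as a deviation from the identity --- and then to integrate along connecting paths, using that the (incomplete but path-connected) sections carrying the $*$-metrics are length spaces. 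Given $\xIn,\tilde\xIn$ with $0\le\tilde\xuIn\le\xuIn$, I would join them by a near-minimising $*$-path $\gamma$ inside $\SectionInInterior$ and then \emph{clamp} its $\xu$-component from above at $\xuIn$: on the clamped arcs $\diff\xu$ vanishes while the weights $\|\xus\|_2^2/\xs{\ell}^2$ can only shrink, so clamping never lengthens $\gamma$ and keeps its endpoints; hence $\gamma$ may be taken inside $\{0<\xu\le\xuIn\}$, and integrating the operator-norm bound along $\MapLocal\circ\gamma$ yields both asserted inequalities, with the claimed overall Lipschitz continuity of $\MapLocal$ following by combining them.

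\textbf{Two special arcs.}
On the cap $\{\xusIn=0\}$, where lemma~\ref{thLinearizedFlowEstimates} does not apply, I would instead quote corollary~\ref{thContinuousLocalMap} directly: it gives $\|\xsssOut\|_2\le C|\xuIn|^{\beta}\|\xusIn\|_2$ for a suitable $\beta>0$ (using $\alpha\muss/\muu-1>0$, valid for $\alpha$ near $1$ by~(\ref{eqCondLocEigenvalues}), and $\mus{\ell}>0$), and combining this with $\dist_*(0,\xsssOut)\le(N+1)^{3/2}\|\xsssOut\|_2$ from~(\ref{eqDistanceToOrigin}) and with $\dist_*(\tilde\xIn,\xIn)\ge\|\xusIn\|_2$ gives the transverse estimate; the $\xc$-estimate on the cap is the first line of that corollary. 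Away from the cap, the Euler-multiplier normalisation~(\ref{eqEulerMultiplier}) makes $\dot\xu=\muu\xu$ exact, so the passage time~(\ref{eqLocalTime}) depends on $\xuIn$ only and $\MapLocal$ is the time-$\tloc(\xuIn)$ flow composed with dropping the ($\equiv1$) unstable coordinate; differentiating this composition and using that the linearised flow commutes with $f$ and preserves $\{\dxu=0\}=T\SectionOut$, one sees that $D\MapLocal(\xIn)\dxIn=\dx(\tloc)$ with $\dx(0)$ the projected vector~(\ref{eqProjectedTangentVector}) solving~(\ref{eqLinearizedSystem}) --- exactly the recipe the paper has already prepared.

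\textbf{Assembling the transverse bound --- the main obstacle.}
For a $*$-unit $\dxIn$ at $\xIn$, lemma~\ref{thLinearizedFlowEstimates} bounds $|\dxss(\tloc)|$, $|\dxs{\ell}(\tloc)|$ and $|\dxc(\tloc)-\dxcIn|$, with $\exp(-\alpha^2\muss\tloc)=|\xuIn|^{\alpha^2\muss/\muu}$ etc.\ from~(\ref{eqLocalTime}), while lemma~\ref{thLocalBounds} at time $\tloc$ bounds the base point, $|\xuIn|^{\muss/(\alpha\muu)}\le\xssOut\le|\xuIn|^{\alpha\muss/\muu}$ (similarly for $\xsOut{\ell}$ up to a factor $|\xsIn{\ell}|$), hence $\|\xsssOut\|_2^2\le|\xuIn|^{2\alpha\muss/\muu}+\sum_\ell|\xuIn|^{2\alpha\mus{\ell}/\muu}|\xsIn{\ell}|^2$. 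The $\xc$-bound is then immediate: $|\dxc(\tloc)-\dxcIn|\le\varepsilon C$ says precisely that $D(\xcOut-\xcIn)$ has $*$-operator norm at most $\varepsilon C$. For the transverse part I would write its squared out-metric length as
\[
\|\xsssOut\|_2^2\Bigl(\frac{\dxss(\tloc)^2}{(\xssOut)^2}+\sum_\ell\frac{\dxs{\ell}(\tloc)^2}{(\xsOut{\ell})^2}\Bigr),
\]
insert the bounds just collected, cancel the $|\xsIn{\ell}|$-factors against the matching denominators, and control the leftover Euclidean factors via $|\xsIn{\ell}|^2\le\|\xusIn\|_2^2$ and $\|\xusIn\|_2\ge|\xuIn|$. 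Every resulting summand then carries a power $|\xuIn|^{2p}$ whose exponent, as $\alpha\uparrow1$, tends to $2(\muss/\muu-1)$ or to $2\mus{\ell}/\muu$, all strictly positive --- the first by the relaxed eigenvalue condition~(\ref{eqCondLocEigenvalues}), i.e.\ $\muss>\muu$, the others because $\mus{\ell}>0$. This exponent bookkeeping is the one delicate step: one has to check that the $\|\xusIn\|_2^{-1}$ blow-up produced by the projection in lemma~\ref{thInitialTangentVectorEstimate}, together with the division by the exponentially small $\xssOut$ in the out-metric, is still beaten by the contraction $|\xuIn|^{\alpha^2\muss/\muu}$ of lemma~\ref{thLinearizedFlowEstimates}, and this works only for $\alpha$ sufficiently close to $1$ (then $\varepsilon_0$ small enough), which is exactly why the Euclidean metric cannot be used here. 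Fixing such an $\alpha$ and $\beta>0$ below all the limiting exponents gives the transverse operator-norm bound $|\xuIn|^{\beta}C$; integrating along the clamped path finishes the proof, and the closing qualitative remarks --- arbitrarily small centre drift, arbitrarily strong transverse contraction near the primary orbit --- are read off from the prefactors $\varepsilon C$ and $|\xuIn|^{\beta}C$.
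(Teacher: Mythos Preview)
Your proposal is correct and follows essentially the same route as the paper: bound $D\MapLocal$ in the $*$-metrics using lemmata~\ref{thLocalBounds} and~\ref{thLinearizedFlowEstimates}, carry out the exponent bookkeeping (the paper organises this via an auxiliary parameter $\xi$ rather than your direct inequalities $|\xsIn{\ell}|\le\|\xusIn\|_2$ and $|\xuIn|\le\|\xusIn\|_2$, but arrives at the same admissible range $0<\beta<\min\{\muss/\muu-1,\mus{\ell}/\muu\}$), and then integrate along a connecting curve, invoking corollary~\ref{thContinuousLocalMap} for the extension to the line $\{\xus=0\}$. Your clamping step, forcing the path to stay in $\{\xu\le\xuIn\}$ without increasing its $*$-length, is a small but genuine improvement over the paper's terse ``take a geodesic'', which does not explicitly justify why the pointwise factor $|\xu|^\beta$ along the curve can be replaced by the endpoint value $|\xuIn|^\beta$.
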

\begin{proof}
This is a corollary of lemma \ref{thLinearizedFlowEstimates} by applying 
the passage time (\ref{eqLocalTime}) and the metric (\ref{eqMetricOut}). 
Extension to the line $(0,\xc)$ is given by corollary \ref{thContinuousLocalMap}.

Indeed, in the out-section, the estimates of lemma \ref{thLinearizedFlowEstimates},
read
\[
\begin{array}{rclcl}
|\dxssOut|      
  &=& |\dxss(\tloc)|       
  &\le& |\xuIn|^{\alpha^2\muss/\muu} C \|\xusIn\|_2^{-1}, \\
|\dxsOut{\ell}|  
  &=& |\dxs{\ell}(\tloc)| 
  &\le& |\xuIn|^{\alpha^2\mus{\ell}/\muu} C \|\xusIn\|_2^{-1} |\xsIn{\ell}|, \\
|\dxcOut-\dxcIn| 
  &=& |\dxc(\tloc)-\dxcIn| 
  &\le& \varepsilon C.
\end{array}
\]
With respect to the modified metric (\ref{eqMetricOut}) we find 
using the estimates of lemma \ref{thLocalBounds}:
\[
\begin{array}{rcl}
\displaystyle \left(\frac{\|\xsssOut\|_2}{|\xssOut|}|\dxssOut|\right)^2
&\le& \displaystyle
        \left( 1 + \sum_{\ell=1}^N 
                   |\xuIn|^{2(\alpha\mus{\ell}/\muu - \muss/\alpha\muu)}
                   |\xsIn{\ell}|^2 
        \right) |\dxssOut|^2 
\\&\le& \displaystyle C
        \frac{\displaystyle |\xuIn|^{2\xi} 
              + \sum_{\ell=1}^N 
                |\xuIn|^{2(\xi + \alpha\mus{\ell}/\muu - \muss/\alpha\muu)}
                |\xsIn{\ell}|^2}
             {\displaystyle |\xuIn|^2
              + \sum_{\ell=1}^N |\xsIn{\ell}|^2}
             |\xuIn|^{2(\alpha^2\muss/\muu-\xi)}
\\&\le& |\xuIn|^{2\beta} C.
\end{array}
\]
In the second inequality, we introduced a parameter $\xi$. 
The last inequality then needs
\[
\begin{array}{rclcl}
0 &<& \alpha^2\muss/\muu-\xi &=:& \beta,
\\
1 &\le& \xi,
\\
0 &\le& \xi + \alpha\mus{\ell}/\muu - \muss/\alpha\muu,
&& \ell=1,\ldots,N.
\end{array}
\]
For $\alpha$ close to $1$, a suitable $\xi$ exists. 
In fact we can obtain arbitrary
\[
  0 \;<\; \beta \;<\; \min \{\; \muss/\muu-1, \; \mus{\ell}/\muu \;\}.
\]
Similarly we find for $1\le\ell\le N$,
\[
\begin{array}{rcl}
&& \hspace{-3em}
\displaystyle \left(\frac{\|\xsssOut\|_2}{|\xsOut{\ell}|}|\dxsOut{\ell}|\right)^2
\\[2ex]&\le& \displaystyle
        \left( |\xuIn|^{2(\alpha\muss/\muu - \mus{\ell}/\alpha\muu)} 
               + \sum_{\tilde\ell=1}^N 
                 |\xuIn|^{2(\alpha\mus{\tilde\ell}/\muu - \mus{\ell}/\alpha\muu)}
                 |\xsIn{\tilde\ell}|^2 \right) 
        \frac{|\dxsOut{\ell}|^2}{|\xsIn{\ell}|^2} 
\\&\le& \displaystyle C
        \frac{\displaystyle |\xuIn|^{2(\xi + \alpha\muss/\muu - \mus{\ell}/\alpha\muu)} 
              + \sum_{\tilde\ell=1}^N 
                |\xuIn|^{2(\xi + \alpha\mus{\tilde\ell}/\muu - \mus{\ell}/\alpha\muu)}
                |\xsIn{\tilde\ell}|^2}
             {\displaystyle |\xuIn|^2
              + \sum_{\tilde\ell=1}^N |\xsIn{\tilde\ell}|^2}
             |\xuIn|^{2(\alpha^2\mus{\ell}/\muu-\xi)}
\\&\le& |\xuIn|^{2\tilde\beta} C.
\end{array}
\]
This time we need for the last inequality
\[
\begin{array}{rclcl}
0 &<& \alpha^2\mus{\ell}/\muu-\xi &=:& \tilde\beta,
\\
1 &\le& \xi + \alpha\muss/\muu - \mus{\ell}/\alpha\muu,
\\
0 &\le& \xi + \alpha\mus{\tilde\ell}/\muu - \mus{\ell}/\alpha\muu,
&&\tilde{\ell}=1,\ldots,N.
\end{array}
\]
Again, for $\alpha$ close to $1$, a suitable $\xi$ exist.
In fact, we can again obtain arbitrary
\[
  0 \;<\; \tilde\beta \;<\; \min \{\; \muss/\muu-1, \; \mus{\ell}/\muu \;\}.
\]

Now, take a geodesic curve in $\SectionIn$ that defines $\dist_*(\tilde\xIn,\xIn)$.
The image of this curve under the passage $\MapLocal$ provides an upper bound 
on $\dist_*(\tilde\xOut,\xOut)$. 
In both sections the $\xc$-component can be separated.
Therefore the above estimates on the evolution of the tangent vectors 
immediately yield the claims of the theorem.
\end{proof}

\begin{rem}\label{thUniformLocalEstimates}
In theorem \ref{thLocalMapLipschitz}, the constant $C$ only depends on the 
$\mathcal{C}^1$ bounds on the nonlinear part of the vector field 
and the derivatives of the eigenvalues of the linearisation 
along the line of equilibria.
The exponent $\beta$ only depends on the spectral gaps. 
In fact, it can be taken arbitrarily in the interval
\[
  0 \;<\; \beta \;<\; \min \{\; \muss/\muu-1, \; \mus{\ell}/\muu \;\},
\]
by choosing $\varepsilon_0$ small enough.
\end{rem}

The last remark provides uniform Lipschitz estimates for the local passages
near the Kasner circle in Bianchi models, provided they keep a uniform distance 
from the Taub points at which the spectral gap shrinks to zero.


\section{Return map and graph transform}
\label{secReturnMap}

In this section we define a global excursion map for trajectories near 
a primary heteroclinic orbit to the Kasner circle, 
that is from the out section of a local passage to the in section of 
another local passage,
both local passages as discussed in the previous section.
Combining local passage and global excursion we obtain a return map from 
one in-section to the next,
\begin{equation}\label{eqReturnMap}
\MapCombined_k \,:=\; \MapGlobal_k \circ \MapLocal_k \,:\; 
\SectionIn_k \longrightarrow \SectionIn_{k+1},
\end{equation}
see figure \ref{figCombinedMap}.
The given heteroclinic orbit corresponds to a fixed origin of this map.

\begin{figure}
\centering
\setlength{\unitlength}{0.008\textwidth}
\begin{picture}(120,48)(-60,-5)
\put(-60,-5){\makebox(0,0)[bl]{\includegraphics[width=120\unitlength]{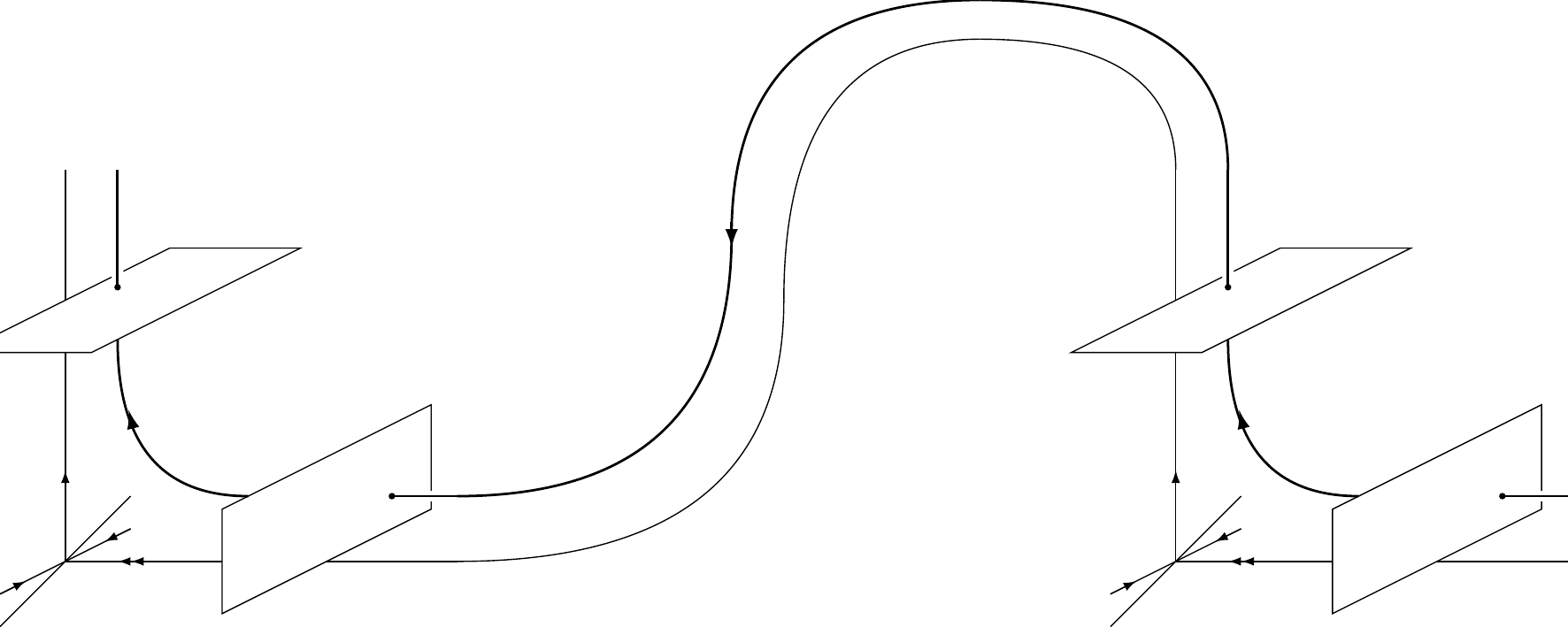}}}
\put(30,0){
\put(0,-1){\makebox(0,0)[tl]{$\,p_{k}$}}
\put(28,8){\makebox(0,0)[l]{$\,\SectionIn_{k}$}}
\put(13,24.5){\makebox(0,0)[b]{$\SectionOut_{k}$}}
\put(7,9){\makebox(0,0)[bl]{$\MapLocal_k$}}
\put(25,5){\makebox(0,0)[r]{\small$\xIn_k$}}
\put(4,21){\makebox(0,0)[l]{\small$\xOut_k$}}
}
\put(-55,0){
\put(0,-1){\makebox(0,0)[tl]{$\,p_{k+1}$}}
\put(28,8){\makebox(0,0)[l]{$\,\SectionIn_{k+1}$}}
\put(13,24.5){\makebox(0,0)[b]{$\SectionOut_{k+1}$}}
\put(7,9){\makebox(0,0)[bl]{$\MapLocal_{k+1}$}}
\put(25,5){\makebox(0,0)[r]{\small$\xIn_{k+1}$}}
\put(4,21){\makebox(0,0)[l]{\small$\xOut_{k+1}$}}
}
\put(0,0){\makebox(0,20)[tl]{$\,h_{k}(\cdot)$}}
\put(-5,25){\makebox(0,0)[r]{$\MapGlobal_k$}}
\end{picture}
\caption{\label{figCombinedMap}
The return map $\MapCombined_k \,:=\; \MapGlobal_k \circ \MapLocal_k$.}
\end{figure}

We prove uniform Lipschitz- and cone properties of the return map,
independently of the given heteroclinic orbit, 
as long as the orbit keeps a uniform distance from the Taub points.
In fact, we prove uniform Lipschitz- and cone properties of the return map
under the conditions \ref{condGlobalExcursion} on the global excursion.

This yields a sequence of return maps, with uniform estimates, 
to every sequence of heteroclinic orbits to the Kasner circle that does not 
accumulate to any Taub point and satisfies the 
local conditions \ref{condLocalPassage} at every equilibrium.

Due to their cone properties, the return maps induce a contracting map 
on a suitable space of sequences of Lipschitz curves.
The fixed point provided by the contraction mapping theorem then yields
the stable manifold of the heteroclinic sequence as claimed in theorems 
\ref{thPeriod3Manifold}, \ref{thAbstractSequenceManifold}.

Take a sequence $p^k$, $k\in\setN$ of equilibria on the Kasner circle,
not accumulating at any Taub point and connected by heteroclinic orbits $h^k(t)$,
$\lim_{t\to\infty}h^{k-1}(t)=p^k=\lim_{t\to-\infty}h^k(t)$, 
as in theorem \ref{thAbstractSequenceManifold}.
Assume that the local conditions \ref{condLocalPassage} hold uniformly at all $p^k$,
in particular $\sup_{k\in\setN}\muu(p^k)/\muss(p^k) < 1$ and 
$\inf_{k\in\setN}\mus{\ell}(p^k) > 0$.
In the Bianchi VI${}_0$ system (\ref{eqBianchiIIMagnetic}) with magnetic field, 
these conditions are satisfied exactly for a chain of heteroclinic orbits 
not accumulating at Taub points and not containing heteroclinic orbits of 
the magnetic family to equilibria in the intervals $\mathcal{K}_2$, $\mathcal{K}_5$, 
see figure \ref{fig3cycle}. 
In particular, the conditions hold for the period 3 cycle.

The previous section then applies to all $p^k$ and the coefficients
$\varepsilon_0, \beta, C$ of the local estimates of 
theorem \ref{thLocalMapLipschitz} can be taken independent of $k$, 
see remark \ref{thUniformLocalEstimates}.

Note the order of fixing the rescaling parameters:
First $\varepsilon_0$ resp.~$\varepsilon$ is fixed small enough to yield our 
estimates of the 
local passages $\MapLocal_k$ with small Lipschitz constants, 
in particular $\varepsilon C \ll 1$ in theorem \ref{thLocalMapLipschitz}. 
This amounts to a choice of the sections $\SectionIn(p_k)$ and $\SectionOut(p_k)$ 
in the original (unscaled) coordinates and also fixes 
the global excursion maps $\MapGlobal_k$.

Due to the non-Euclidean metric used in theorem \ref{thLocalMapLipschitz},
we have to restrict our local passage map to
$\MapLocal_k:\SectionIn_k\to\SectionOut_k$ by 
\begin{equation}\label{eqModifiedSections}
\begin{array}{rclcl}
\SectionIn_k 
&=& \SectionInInterior(p_k)
&=& \{\; (\xus,\xc)\in\SectionIn(p_k) \;|\; 
            \xus=0 \mbox{ or } \xu\xs{1}\cdots\xs{N} \ne 0 \;\},
\\
\SectionOut_k 
&=& \SectionOutInterior(p_k)
&=& \{\; (\xsss,\xc)\in\SectionOut(p_k) \;|\; 
            \xsss=0 \mbox{ or } \xss\xs{1}\cdots\xs{N} \ne 0 \;\},
\end{array}
\end{equation}
see (\ref{eqNonsingularSections}).
Then a sufficiently small upper bound for $\xuIn$ is chosen, 
i.e.~$\MapLocal_k$ are restricted to smaller sections
\begin{equation}\label{eqReducedInSection}
\tilde\SectionIn_k \;=\; 
\{\; x\in\SectionIn_k \;|\; 
     0\le\xu\le\delta,\; 0\le\xs{\ell}\le\delta,\; |\xc| \le \delta \;\} 
\;=\; ((0,\delta]^{N+1} \cup {0}) \times [-\delta,\delta].
\end{equation}
This makes the contraction of the local passage as strong as we like without 
changing $\MapLocal_k$, $\MapGlobal_k$.
It also ensures that trajectories of interest stay close to the Kasner caps 
of heteroclinic orbits and 
therefore the global excursions $\MapGlobal_k$ on the domain of interest are 
as close to the Kasner map as we like.
It also ensures that all non-singular trajectories in these domains indeed 
return to the following in-sections $\SectionIn_{k+1}$.

The global conditions \ref{condGlobalExcursion} hold accordingly: 
the invariant subspaces, (glob-i), are those of the Bianchi system;
the uniform bound, (glob-ii), is fixed by choice of a uniform size $\varepsilon_0$ 
of all local neighbourhoods; and expansion, (glob-iii), is given by the Kasner map.
Uniform expansion again needs a uniform distance from the Taub points.

The following lemma relates the global excursions to the new metric 
used for the local estimates.

\begin{lem}\label{thGlobalBoundsInNewMetric}
Let the global conditions \ref{condGlobalExcursion} be satisfied for the sequence
$\MapGlobal_k:\SectionOut_k\to\SectionIn_{k+1}$ of global excursions.
Then condition $(glob-ii)$ also holds with respect to the new metrics 
(\ref{eqMetricIn}, \ref{eqMetricOut}).
\end{lem}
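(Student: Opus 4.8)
The plan is to show that the $\mathcal{C}^2$ bounds on $\MapGlobal_k$ and its inverse with respect to the Euclidean metric (condition (glob-ii)) translate into bounds of the same type with respect to the weighted metrics $\diff s^2_*$ on the in- and out-sections, possibly with a larger constant. The key point is that the weighted metrics are built from the Euclidean metric by multiplying the differentials in the non-central directions by the scalar factors $\|\xsss\|_2^2/\xss^2$, $\|\xsss\|_2^2/\xs{\ell}^2$ on $\SectionOut$ and $\|\xus\|_2^2/\xu^2$, $\|\xus\|_2^2/\xs{\ell}^2$ on $\SectionIn$. By (glob-i), $\MapGlobal_k$ maps the union of singular boundaries $\{\xss=0\}\cup\bigcup_\ell\{\xs{\ell}=0\}$ of $\SectionOut_k$ onto the corresponding union $\{\xu=0\}\cup\bigcup_\ell\{\xs{\ell}=0\}$ of $\SectionIn_{k+1}$, and maps the origin to the origin. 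So the ray structure of the cones is preserved, and the two weighted metrics differ from $\MapGlobal_k$-pullbacks of each other only by how fast each coordinate vanishes as one approaches the boundary.

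The first step is to make this quantitative: since $\MapGlobal_k$ is a diffeomorphism sending each coordinate hyperplane to a coordinate hyperplane, near the boundary each target coordinate is, up to a bounded nonvanishing factor, a nonnegative combination (more precisely, a $\mathcal{C}^1$ function vanishing exactly on the corresponding source hyperplane) of the corresponding source coordinate. By the $\mathcal{C}^2$ bounds and the bounds on the inverse in (glob-ii), there is a constant $M'$ depending only on $M$ (and on the dimension $N$) such that on $\SectionOut_k$ one has, for the image coordinates $(\xuIn,\xsIn{\ell})$ expressed through $(\xssOut,\xsOut{\ell})$,
\[
\frac{1}{M'}\,|\xssOut| \;\le\; |\xuIn| \;\le\; M'\,|\xssOut|,
\qquad
\frac{1}{M'}\,|\xsOut{\ell}| \;\le\; |\xsIn{\ell}| \;\le\; M'\,|\xsOut{\ell}|,
\]
and likewise $\frac{1}{M'}\|\xsssOut\|_2 \le \|\xusIn\|_2 \le M'\|\xsssOut\|_2$ by (\ref{eqDistanceToOrigin}) applied in both sections, or simply by equivalence of the Euclidean distance to the origin with the respective $\dist_*$. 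Consequently the weight factors $\|\xusIn\|_2^2/\xuIn^2$, $\|\xusIn\|_2^2/\xsIn{\ell}^2$ at the image point are comparable, within a factor $M'^4$, to $\|\xsssOut\|_2^2/\xssOut^2$, $\|\xsssOut\|_2^2/\xsOut{\ell}^2$ at the source point.

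The second step combines this comparison of weights with the Euclidean bound on $D\MapGlobal_k$. A tangent vector at $x\in\SectionOut_k$ has $\|\,\cdot\,\|_*$-norm comparable (via the weights at $x$) to a Euclidean norm of its rescaled components; its image under $D\MapGlobal_k$ has Euclidean norm within $M$ of that; and the target $\|\,\cdot\,\|_*$-norm is computed with the weights at $\MapGlobal_k(x)$, which by the first step are within $M'^4$ of the rescaled source weights. Chaining these inequalities gives $\|D\MapGlobal_k\|_* \le M\cdot M'^2$ (the extra $M'^2$ absorbing the square root of the weight ratio), and the symmetric argument applied to $\MapGlobal_k^{-1}$ gives the bound on $\|D(\MapGlobal_k)^{-1}\|_*$. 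For the second derivatives one differentiates once more: $D^2$ in the weighted metric picks up, beyond $M\cdot\|D^2\MapGlobal_k\|$, extra terms coming from the derivatives of the (smooth, nonvanishing away from the boundary but here only entering through the ratio, which is bounded) weight ratios; these are again controlled by $M$, $M'$ and $N$, using that $\MapGlobal_k$ preserves the coordinate hyperplanes so that $\partial_{\xss}\xuIn$ etc.\ are bounded below away from zero near the relevant boundary. One obtains $\|D^2\MapGlobal_k\|_*$, $\|D^2(\MapGlobal_k)^{-1}\|_* < M''$ with $M''$ depending only on $M$ and $N$, and since $M$ is uniform in $k$ by (glob-ii), so is $M''$.

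The main obstacle is the behaviour near the singular boundary: the weight factors blow up there, so one must be careful that the comparison in the first step is genuinely \emph{uniform} up to the boundary. This is exactly where (glob-i) is essential — because $\MapGlobal_k$ carries each hyperplane $\{\xsOut{\ell}=0\}$ precisely onto $\{\xsIn{\sigma(\ell)}=0\}$ for the appropriate permutation $\sigma$, the vanishing orders match and the ratio of weights stays bounded all the way to the boundary, rather than degenerating. Without the hyperplane-preserving property the lemma would be false. Granting (glob-i), the argument is the routine chain of inequalities sketched above, and one records that all constants are functions of $M$ and $N$ alone, hence independent of $k$, which is the content of the lemma.
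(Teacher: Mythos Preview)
Your step 1 is correct and is essentially the content of the paper's proof. The gap is in step 2. The ``chaining'' you describe amounts to the inequality
\[
\|D\MapGlobal_k v\|_{*,\supIn}^2 \;=\; \sum_b w_b^\supIn(y)\,[(D\MapGlobal_k v)_b]^2
\;\stackrel{?}{\le}\; C\sum_a w_a^\supOut(x)\,v_a^2 \;=\; C\|v\|_{*,\supOut}^2,
\]
but the Euclidean bound $\|D\MapGlobal_k v\|_2\le M\|v\|_2$ together with the entrywise weight comparison $w_{\sigma(a)}^\supIn(y)\sim w_a^\supOut(x)$ does not give this. The problem is the off-diagonal entries of $D\MapGlobal_k$: a contribution $(\partial y_b/\partial x_a)\,v_a$ with $a\ne\sigma^{-1}(b)$ gets multiplied by the (possibly enormous) weight $w_b^\supIn(y)\sim \|\xsss\|_2^2/x_{\sigma^{-1}(b)}^2$, while on the right-hand side $v_a$ only carries $w_a^\supOut(x)\sim\|\xsss\|_2^2/x_a^2$. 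When $x_{\sigma^{-1}(b)}\ll x_a$ these weights are not comparable, and the bare Euclidean bound $|\partial y_b/\partial x_a|\le M$ is too weak to close the estimate.

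What rescues the argument is the factorisation you already recorded in step~1: since $\{x_{\sigma^{-1}(b)}=0\}$ maps onto $\{y_b=0\}$, one has $y_b=\tilde\Psi_b(x)\,x_{\sigma^{-1}(b)}$ with $\tilde\Psi_b$ smooth, bounded, and bounded away from zero (with bounded first derivative, by the $\mathcal{C}^2$ bounds in (glob-ii)). Differentiating gives $\partial y_b/\partial x_a=(\partial_{x_a}\tilde\Psi_b)\,x_{\sigma^{-1}(b)}$ for $a\ne\sigma^{-1}(b)$, and this extra factor $x_{\sigma^{-1}(b)}$ exactly cancels the blow-up of $w_b^\supIn(y)$. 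With that structural information the weighted estimate goes through. The paper's proof takes exactly this route: it writes $[\MapGlobal_k(x)]_{*}=[\tilde\MapGlobal_k(x)]_{*}\,x_{*}$ and uses bounds on $\tilde\MapGlobal_k$ and $D\tilde\MapGlobal_k$ --- not merely the Euclidean bound on $D\MapGlobal_k$ --- to control the derivative in the new metric. Your step~2 should do the same rather than invoking the Euclidean $M$ directly.
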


\begin{proof}
Due to the invariant subspaces, (glob-iii), 
the linearisation $D\MapGlobal_k$ are diagonal at the origin and close 
to diagonal in the neighbourhoods of interest. 
Moreover, the transverse components of $\MapGlobal_k$ have the form
\[
[\MapGlobal_k(x)]_{*} = [\tilde\MapGlobal_k(x)]_{*}x_{*},
\qquad * = \mathrm{ss}, \mathrm{s1}, \ldots, \mathrm{sN},
\]
with smooth $\tilde\MapGlobal_k$.
Now note the definitions (\ref{eqMetricIn}, \ref{eqMetricOut}) of the new metric.
The bounds (glob-ii), on first and second derivatives of $\MapGlobal_k$,
yield uniform bounds on $\tilde\MapGlobal_k$ and their first derivatives.
Thus, the ratio of the coefficients of the metric at an arbitrary
$x\in\SectionOut_k$ to the coefficients at $\MapGlobal_k(x)\in\SectionIn_{k+1}$ 
is between $\|(\tilde\MapGlobal_k)^{-1}\|^{-1}\|\tilde\MapGlobal_k\|^{-1}$ and 
$\|(\tilde\MapGlobal_k)^{-1}\|\|\tilde\MapGlobal_k\|$, 
that is between $M^{-2}$ and $M^2$, for a uniform constant $M$.
This immediately yields new uniform bounds on the derivatives of $\MapGlobal$ 
with respect to the new metric.
\end{proof}

Now we can proceed along the lines of 
\cite{LiebscherHaerterichWebsterGeorgi2011-BianchiA} (section 4)
to establish the existence of stable manifold by a graph-transform approach.

\begin{lem}\label{thLipschitzReturnMap}
Assume conditions \ref{condLocalPassage} on the local passages and 
conditions \ref{condGlobalExcursion} on the global excursions.

Then the return maps (\ref{eqReturnMap}) are Lipschitz continuous with respect 
to the metric (\ref{eqMetricIn}). 
Furthermore, there exist $\varepsilon > 0$, $\delta>0$, $0<\sigma<1$, 
$K_{u,s} > 1$, and $K_\subC > (1-\sigma^2)^{-1} > 1$,
such that the following cone conditions hold for
\[
\MapCombined_k \;=\; \MapGlobal_k \circ \MapLocal_k 
  \;:\; \tilde\SectionIn_k \to \SectionIn_{k+1}.
\]

Here $\SectionIn_k$ are the in-sections (\ref{eqModifiedSections}) 
corresponding to the choice of $\varepsilon$, 
and $\tilde\SectionIn_k$ are suitable subsets of the form 
(\ref{eqReducedInSection}).

The cones are defined for $x\in\tilde\SectionIn$ (omitting the index $k$) as
\begin{equation}\label{eqCones}
\begin{array}{rcl}
C_x^\subC &=& \{\tilde{x}\in\tilde\SectionIn\;|\; 
    \dist_*(\tilde\xus, \xus) \le \sigma |\tilde\xc-\xc| \},
\\
C_x^\subUS &=& \{\tilde{x}\in\tilde\SectionIn\;|\; 
    |\tilde\xc-\xc| \le \sigma \dist_*(\tilde\xus-\xus) \}.
\end{array}
\end{equation}
The cone conditions are
\begin{enumerate}
\item[(i)] 
Invariance: 
$\MapCombined(C_x^\subC) \cap \tilde\SectionIn 
  \subset (\mathrm{int\,} C_{\MapCombined x}^\subC) \cup \{\MapCombined x\}$ 
and 
$\MapCombined^{-1}(C_{\MapCombined x}^\subUS) \cap \tilde\SectionIn 
  \subset (\mathrm{int\,} C_x^\subUS) \cup \{x\}$;
\item[(ii)] 
Contraction \& Expansion: 
For all $\tilde{x} \in C_x^\subC$ 
we have expansion in the centre direction: 
$|(\MapCombined\tilde{x})_c-(\MapCombined x)_c| \ge K_\subC |\tilde{x}_c-x_c|$
and for all $\MapCombined\tilde{x}\in C_{\MapCombined x}^\subUS$ 
we have contraction in the transverse directions:
$\dist_*(\tilde{x}_{u,s}, x_{u,s}) \ge 
  K_{u,s} \dist_*((\MapCombined\tilde{x})_{u,s}, (\MapCombined x)_{u,s})$.
\end{enumerate}
They hold for all, $x,\tilde{x},\MapCombined x,\MapCombined\tilde{x} \in \tilde\SectionIn$.
See also figure \ref{figConeConditions}.

The coefficients $\sigma, \delta$ only depend on $\varepsilon_0$ and the 
uniform expansion (glob-iii), that is the distance to the Taub point in the Bianchi system.
\end{lem}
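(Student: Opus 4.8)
The plan is to combine the local estimates of Theorem \ref{thLocalMapLipschitz} with the global conditions \ref{condGlobalExcursion}, transported to the new metric by Lemma \ref{thGlobalBoundsInNewMetric}, and to verify the cone conditions by a direct computation on $\MapCombined_k = \MapGlobal_k \circ \MapLocal_k$. First I would fix the constants in the right order: choose $\varepsilon_0$ (hence $\varepsilon$) so small that Theorem \ref{thLocalMapLipschitz} gives, for the local passage, a transverse Lipschitz bound of the form $\dist_*(\tilde\xsssOut,\xsssOut) \le |\xuIn|^\beta C\,\dist_*(\tilde\xIn,\xIn)$ with $\beta>0$ depending only on the spectral gaps (uniform in $k$ by Remark \ref{thUniformLocalEstimates}), and a centre-drift bound $|(\tilde\xcOut-\xcOut)-(\tilde\xcIn-\xcIn)| \le \varepsilon C\,\dist_*(\tilde\xIn,\xIn)$ with $\varepsilon C \ll 1$. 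This choice fixes the sections $\SectionIn(p_k)$, $\SectionOut(p_k)$ in the original coordinates and hence the global maps $\MapGlobal_k$, which by (glob-ii) and Lemma \ref{thGlobalBoundsInNewMetric} have derivatives bounded by a uniform $M$ with respect to the new metric, and by (glob-iii) expand the centre direction on the boundary with inverse-Lipschitz constant $L<1$. Only after that do I shrink to $\tilde\SectionIn_k$ via the parameter $\delta$: on $\{0\le\xu\le\delta\}$ the factor $|\xuIn|^\beta$ is at most $\delta^\beta$, which can be taken arbitrarily small, so the local transverse contraction dominates the bounded distortion $M$ coming from $\MapGlobal_k$.

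Next I would assemble the cone estimates. Write $\MapCombined = \MapGlobal\circ\MapLocal$ and split every displacement into its $\xus$-part (transverse, measured by $\dist_*$) and its $\xc$-part. For the centre expansion, on the boundary $\{\xus=0\}$ the local map is the identity in $\xc$ (see (\ref{eqLocalMapExtension})) and $\MapGlobal$ expands by at least $L^{-1}>1$ in $\xc$; for interior points the additional centre drift is controlled by $\varepsilon C\,\dist_*(\tilde\xIn,\xIn)$, which inside the cone $C_x^\subC$ is at most $\varepsilon C\sigma|\tilde\xc-\xc|$, a small perturbation of the boundary expansion. Choosing $\sigma$ and $\varepsilon$ small this yields $|(\MapCombined\tilde x)_c-(\MapCombined x)_c|\ge K_\subC|\tilde x_c-x_c|$ with $K_\subC$ close to $L^{-1}$, in particular $K_\subC>(1-\sigma^2)^{-1}$ once $\sigma$ is small enough. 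For the transverse contraction, the composition gives $\dist_*((\MapCombined\tilde x)_{u,s},(\MapCombined x)_{u,s}) \le M\,\dist_*(\MapLocal\tilde x,\MapLocal x) \le M\,|\xuIn|^\beta C\,\dist_*(\tilde x,\xIn) \le M C\delta^\beta\,\dist_*(\tilde x,\xIn)$, and taking $\delta$ small makes $MC\delta^\beta<K_{u,s}^{-1}$ for a chosen $K_{u,s}>1$. The cone invariance (i) then follows from (ii) together with the mixed-term bound: the centre drift pushes $\xus$-displacements into $\xc$-displacements only by the factor $\varepsilon C$, and symmetrically the transverse map feeds back into $\xc$ only through the small drift, so a vector in $C_x^\subC$ has its image in the interior of $C_{\MapCombined x}^\subC$, and preimages of $C^\subUS$ land in $C^\subUS$; this is the standard verification that a hyperbolic-type map with dominated off-diagonal blocks preserves the complementary cones.

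The main obstacle is the non-Euclidean metric: the transverse displacement is measured by $\dist_*$, which degenerates near the invariant boundaries $\{\xu=0\},\{\xs{\ell}=0\}$, so one cannot simply use coordinate differences. The key is that Theorem \ref{thLocalMapLipschitz} already delivers the contraction \emph{in the metric $\dist_*$} on $\SectionInInterior$, including the singular boundary set where $\xus=0$ (the heteroclinic cap), via Corollary \ref{thContinuousLocalMap}; and Lemma \ref{thGlobalBoundsInNewMetric} shows $\MapGlobal_k$ is bi-Lipschitz for $\dist_*$ with the same uniform constant $M$. Care is needed because $\dist_*$ on $\SectionIn_{k+1}$ is defined with respect to the splitting $(\xu,\xs{})$ there, while $\MapGlobal_k$ may permute the invariant subspaces (glob-i) — but that permutation is precisely the reason the metric transforms correctly, since it matches $\{\xss=0\}\mapsto\{\xu=0\}$ on the out/in sections. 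Once this bookkeeping is in place, everything reduces to the scalar inequalities $MC\delta^\beta<K_{u,s}^{-1}$ and $K_\subC\approx L^{-1}>(1-\sigma^2)^{-1}$, whose simultaneous solvability for small $\sigma,\varepsilon,\delta$ is the content of the final sentence of the lemma: $\sigma$ and $\delta$ depend only on $\varepsilon_0$ and the uniform expansion rate $L$ from (glob-iii), i.e.\ on the distance to the Taub points.
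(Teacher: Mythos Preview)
Your overall strategy is the paper's strategy and the order in which you fix $\varepsilon$ and then $\delta$ is exactly right. But there is a genuine gap in your transverse-contraction chain
\[
\dist_*((\MapCombined\tilde x)_{u,s},(\MapCombined x)_{u,s}) \;\le\; M\,\dist_*(\MapLocal\tilde x,\MapLocal x) \;\le\; M\,|\xuIn|^\beta C\,\dist_*(\tilde x,x).
\]
The second inequality is false: Theorem \ref{thLocalMapLipschitz} only contracts the \emph{transverse} component $\xsssOut$ by $|\xuIn|^\beta C$, while the centre component satisfies $|\tilde\xcOut-\xcOut|\approx|\tilde\xcIn-\xcIn|$ and is not small. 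So $\dist_*(\MapLocal\tilde x,\MapLocal x)$ is of order $\dist_*(\tilde x,x)$, not $|\xuIn|^\beta\dist_*(\tilde x,x)$. Applying the crude Lipschitz bound $M$ for $\MapGlobal$ then gives no contraction at all. The same omission affects your centre-expansion estimate: you account for the local drift $\varepsilon C$ but not for the deviation of the $\xc$-expansion of $\MapGlobal$ from its boundary value $\Phi'$ when the input $(\xssOut,\xsOut{})$ is nonzero.

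What is missing is precisely the use of (glob-i) together with the $C^2$ bound in (glob-ii): since $\MapGlobal$ maps the boundary $\{\xsss=0\}$ to $\{\xus=0\}$, one has $(\MapGlobal(\xsss,\xc))_{u,s}=\tilde\MapGlobal(\xsss,\xc)\,\xsss$ and hence $\partial_{\xc}(\MapGlobal)_{u,s}=O(\|\xsss\|)$. The paper makes this quantitative by writing $\MapGlobal=(0,0,\Phi(\xc))+\tilde\MapGlobal\,\xsss$ and integrating along geodesic paths in the new metric to obtain the key combined estimate
\[
\bigl|\,\dist_*(\MapCombined(\tilde{x}),\MapCombined(x)) - |\Phi(\MapLocal(\tilde{x})_\subC) - \Phi(\MapLocal(x)_\subC)|\,\bigr| \;\le\; C^\supReturn |\xu|^\beta \dist_*(\tilde{x},x),
\]
where the factor $|\xu|^\beta$ enters because Corollary \ref{thContinuousLocalMap} makes $\|\MapLocal(x)_\subSSS\|$ small. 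From this single inequality both the centre expansion and the transverse contraction (and then cone invariance) follow by the scalar manipulations you sketch. Your plan becomes correct once you insert this decoupling step; without it, the bounded distortion $M$ of $\MapGlobal$ mixes the large centre displacement back into the transverse direction and the argument breaks down.
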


\begin{figure}
\setlength{\unitlength}{0.009\textwidth}
\centering
\begin{picture}(100,40)(-50,-5)
\put(-50,-5){\makebox(0,0)[bl]{\includegraphics[width=100\unitlength]{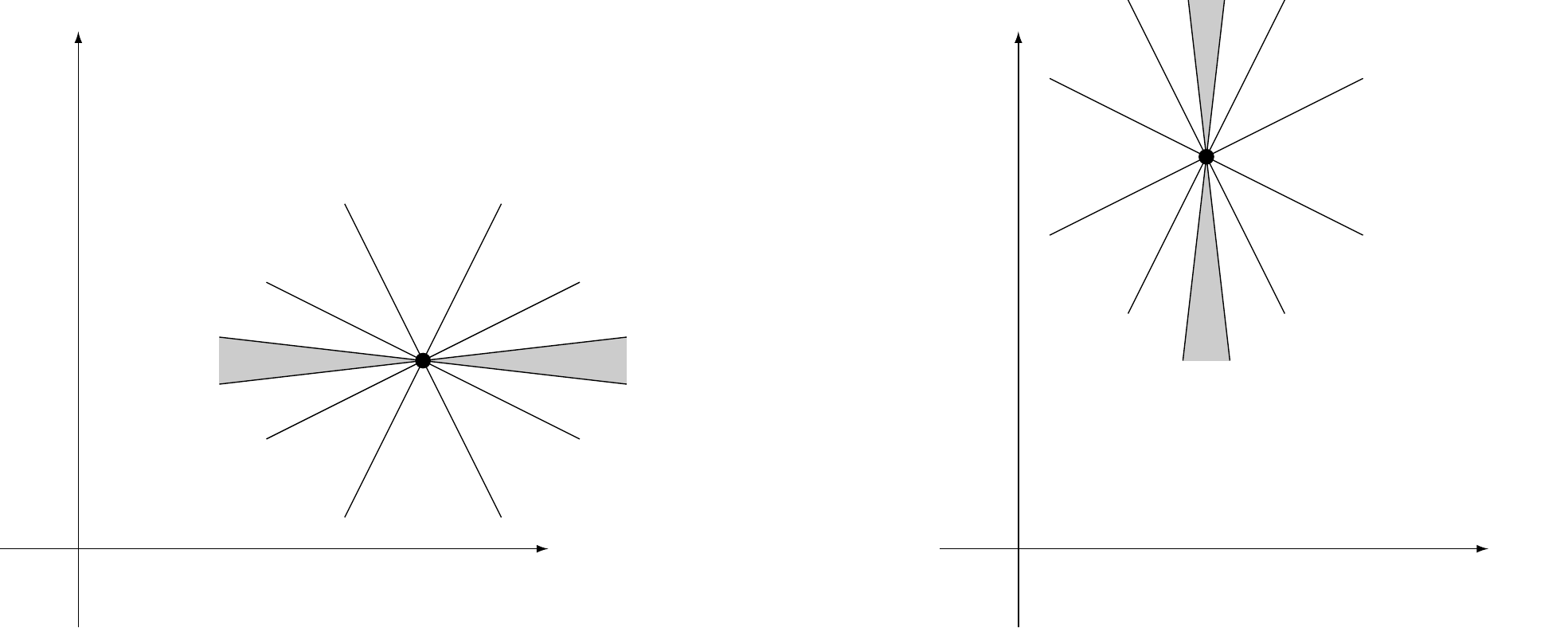}}}
\put(-45,  0){%
\put(0,33){\makebox(0,0)[b]{$x_c$}}
\put(30,0){\makebox(0,0)[l]{$x_{u,s}$}}
\put( 22, 12){%
\put(  0, -2){\makebox(0,0)[t]{$x$}}
\put(-4.2, 8){\makebox(0,0)[bl]{$C^\subC_x$}}
\put(  8,4.2){\makebox(0,0)[tl]{$C^\subUS_x$}}
\put(-13,  0){\makebox(0,0)[r]{$\MapCombined^{-1}(C^\subUS_{\MapCombined x})$}}
}
}
\put( 15,  0){%
\put(0,33){\makebox(0,0)[b]{$\MapCombined(x)_\subC$}}
\put(30,0){\makebox(0,0)[l]{$\MapCombined(x)_\subUS$}}
\put( 12, 25){%
\put( -2,  0){\makebox(0,0)[r]{$\MapCombined(x)$}}
\put(-4.2, 8){\makebox(0,0)[bl]{$C^\subC_{\MapCombined x}$}}
\put(  8,4.2){\makebox(0,0)[tl]{$C^\subUS_{\MapCombined x}$}}
\put(  0,-13){\makebox(0,0)[t]{$\MapCombined(C^\subC_x)$}}
}
}
\put(0,0){\makebox(0,0)[b]{\begin{minipage}{16\unitlength}\centering
{\Huge$\stackrel{\mathbf{\MapCombined}}{\longrightarrow}$}\\
Expansion of $x_c$
Contraction of $x_{u,s}$
\end{minipage}}}
\end{picture}
\caption{\label{figConeConditions}Cone properties of the return map $\MapCombined$.}
\end{figure}

\begin{proof}
Lipschitz continuity of the return map $\MapCombined_k$ follows directly from 
Lipschitz continuity of the local passage $\MapLocal_k$,
see theorem \ref{thLocalMapLipschitz}, as the global excursion $\MapGlobal_k$ is smooth.
To simplify notation, we drop the index $k$ from now on. 
All estimates will be uniform in $k$.

The cone conditions require the expansion in $\xc$-direction given by (glob-iii),
corresponding to the expansion Kasner circle induced by the Kasner map 
of the Bianchi system.
In fact, (glob-iii) states that we have
\begin{equation}\label{eqGeneralKasnerMap}
 \MapGlobal (\xssOut = 0, \xsOut{} = 0, \xcOut) 
  \;=\; (\xuIn = 0, \xsIn{} = 0, \Phi(\xcOut)),
\end{equation}
with Lipschitz continuous $\Phi, \Phi^{-1}$. 
The Lipschitz constant of $\Phi^{-1}$ is less than $L<1$ independent of $k$.
Note again the invariant boundaries, (loc-v), (glob-i).
Therefore we can write, as in the proof of lemma \ref{thGlobalBoundsInNewMetric},
\[
\MapGlobal (\xss, \xs{}, \xc) \;=\; (0,0,\Phi(\xc)) 
    + \tilde\MapGlobal(\xss, \xs{}, \xc) \xsss,
\]
with a smooth matrix $\tilde\MapGlobal(\xss, \xs{}, \xc)$ and vector $\xsss=(\xss,\xs{})$.

Consider now two points $\tilde{x}, x \in\SectionOut$. 
Choose geodesic paths 
$\gamma_1$ from 0 to $\xsss$
and $\gamma_2$ from $\xsss$ to $\tilde\xsss$,
both with respect to the new metric (\ref{eqMetricOut}).
Then 
\[
\begin{array}{rcl}
&&\hspace{-4em}
\dist_*(\MapGlobal(\tilde{x}),\MapGlobal(x))
\\ &\le& 
|\Phi(\tilde\xc) - \Phi(\xc)|
\\&&\displaystyle
+ \int_0^{\dist_*(0,\xsss)} \frac{\diff}{\diff s}
  \dist_*(\MapGlobal(\gamma_1(s),\tilde\xc), \MapGlobal(\gamma_1(s),\xc)) \diff s
\\&&\displaystyle
+ \int_0^{\dist_*(\tilde\xsss,\xsss)} \frac{\diff}{\diff s}
  \dist_*(\MapGlobal(\gamma_2(s),\tilde\xc), \MapGlobal(x)) \diff s
\\ &\le&
|\Phi(\tilde\xc) - \Phi(\xc)|
\\&&\displaystyle
+ \int_0^{\dist_*(0,\xsss)} \frac{\diff}{\diff s}
  \int_{\xc}^{\tilde\xc} \frac{\diff}{\diff t}
  \dist_*(\MapGlobal(\gamma_1(s),t), \MapGlobal(\gamma_1(s),\xc)) \diff t \diff s
\\&&\displaystyle
+ \int_0^{\dist_*(\tilde\xsss,\xsss)} \frac{\diff}{\diff s}
  \dist_*(\MapGlobal(\gamma_2(s),\tilde\xc), \MapGlobal(x)) \diff s,
\end{array}
\]
and we obtain the following Lipschitz estimate
\[
\begin{array}{rcl}
&&\hspace{-4em}
|\;\dist_*(\MapGlobal(\tilde{x}),\MapGlobal(x)) - |\Phi(\tilde\xc) - \Phi(\xc)|\;|
\\ &\le& 
\tilde{C}^\supGlobal\left( \dist_*(0,\xsss)|\tilde\xc-\xc| + \dist_*(\tilde\xsss,\xsss) \right),
\\ &\le& 
C^\supGlobal\left( \|\xsss\||\tilde\xc-\xc| + \dist_*(\tilde\xsss,\xsss) \right),
\end{array}
\]
with $C^\supGlobal$ only depending on the uniform bounds on 
$\|D\MapGlobal\|, \|D^2\MapGlobal\|$ with respect to the new metric 
provided by lemma \ref{thGlobalBoundsInNewMetric}.
The last inequality used the trivial upper bound (\ref{eqDistanceToOrigin})
on the distance from the origin in the new metric.

Using $\MapLocal(\tilde{x})$ and $\MapLocal(x)$ instead of $\tilde{x}$ and $x$
we get a similar estimate for the return map $\MapCombined(x)=\MapGlobal(\MapLocal(x))$:
\begin{equation}\label{eqEstimateReturnMap}
\begin{array}{rcl}
&&\hspace{-4em}
|\;\dist_*(\MapCombined(\tilde{x}),\MapCombined(x)) 
- |\Phi(\MapLocal(\tilde{x})_\subC) - \Phi(\MapLocal(x)_\subC)|\;|
\\ &=& 
|\;\dist_*(\MapGlobal(\MapLocal(\tilde{x})) - \MapGlobal(\MapLocal(x)))
- |\Phi(\MapLocal(\tilde{x})_\subC) - \Phi(\MapLocal(x)_\subC)|\;|
\\ &\le& 
C^\supGlobal\left( 
  \|\MapLocal(x)_\subSSS\||\MapLocal(\tilde{x})_\subC-\MapLocal(x)_\subC| 
  + \dist_*(\MapLocal(\tilde{x})_\subSSS,\MapLocal(x)_\subSSS) \right)
\\ &\le& 
C^\supGlobal\left( |\xu|^\beta \|x\| (1+\varepsilon C) \dist_*(\tilde{x},x)
  + |\xu|^\beta C \dist_*(\tilde{x},x) \right)
\\ &\le& 
C^\supReturn |\xu|^\beta \dist_*(\tilde{x},x).
\end{array}
\end{equation}
The second last inequality uses the estimates of the local passage of 
corollary \ref{thContinuousLocalMap} and theorem \ref{thLocalMapLipschitz} 
for the choice (w.l.o.g.) $0 \le \tilde\xu \le \xu$. 
Note that the estimates of theorem \ref{thLocalMapLipschitz} are used in the form
\[
\begin{array}{rcl}
| \MapLocal(\tilde{x})_\subC - \MapLocal(x)_\subC| 
&\leq& \varepsilon C \dist_*(\tilde{x},x) + |\tilde\xc-\xc| 
\;\leq\; (\varepsilon C+1) \dist_*(\tilde{x},x), 
\\
\dist_*( \MapLocal(\tilde{x})_\subSSS , \MapLocal(x)_\subSSS ) 
&\leq& |\xu|^\beta C \dist_*(\tilde{x},x).
\end{array}
\]
The constant $C^\supReturn$ is uniform in $x$, $\tilde{x}$ in the in-section, 
and the omitted number $k$ of the section along the heteroclinic chain. 
Because $0 < \beta < \min \{\muss/\muu-1, \mus{\ell}/\muu\}$, 
we have an arbitrarily strong contraction  
for $\xu < \delta$, if we choose $\delta$ small enough.

The map $\Phi$ given by (\ref{eqGeneralKasnerMap}),
i.e. the Kasner map in the original Bianchi system,
is expanding, see condition (glob-iii):
\[
  | \Phi(a) - \Phi(b) |    \;\ge\;   L^{-1} |a-b|,
\]
for some uniform constant $L < 1$.

Now choose $K_\subC$ with $1 < K_\subC < L^{-1}$, and $\sigma$ with $0 < \sigma < 1$ 
such that $K_\subC(1-\sigma^2) > 1$.
(The last relation is needed to obtain a contraction in theorem \ref{thStableSet}.)

Consider the cone in centre direction with opening $\vartheta>0$, 
i.e.~$\dist_*(\tilde\xus,\xus) \le \vartheta |\tilde\xc-\xc|$.
Then (\ref{eqEstimateReturnMap}) using the local Lipschitz estimate of 
theorem \ref{thLocalMapLipschitz} yields
\begin{equation}\label{eqEstimateExpansion}
\begin{array}{rcl}
|(\MapCombined(\tilde{x}) - \MapCombined(x))_c| 
&\ge& 
| \Phi(\MapLocal(\tilde{x})_\subC) - \Phi(\MapLocal(x)_\subC) | 
  - C^\supReturn |\xu|^\beta \dist_*(\tilde{x},x)
\\ &\ge& 
L^{-1} | \MapLocal(\tilde{x})_\subC - \MapLocal(x)_\subC | 
  - C^\supReturn |\xu|^\beta \dist_*(\tilde{x},x)
\\ &\ge& 
L^{-1} | \tilde\xc - \xc | 
  - L^{-1} \varepsilon C \dist_*(\tilde{x},x)
  - C^\supReturn |\xu|^\beta \dist_*(\tilde{x},x)
\\ &\ge& 
\left( L^{-1} - \left( L^{-1} \varepsilon C + 
                       C^\supReturn |\xu|^\beta \right)(1+\vartheta) \right) 
   | \tilde\xc - \xc |.
\end{array}
\end{equation}
For $\varepsilon$ and $\delta$ chosen small enough, using $|\xu| \le \delta$, 
we can achieve
\[
 K_\subC \; < \; 
L^{-1} - \left( L^{-1} \varepsilon C + C^\supReturn |\xu|^\beta \right)(1+1/\sigma),
\]
yielding the expansion not only in the cone $C_x^\subC$, with $\vartheta=\sigma<1$, 
but also outside the cone $C_x^\subUS$, with $\vartheta=1/\sigma$.

Furthermore, using again (\ref{eqEstimateReturnMap}), 
we see the invariance of the cones. 
Indeed, assume again $\dist_*(\tilde\xus,\xus) \le \vartheta |\tilde\xc-\xc|$, 
then we have
\[
\begin{array}{rcl}
\dist_*(\MapCombined(\tilde{x})_\subUS , \MapCombined(x)_\subUS ) 
   &\le& C^\supReturn |x_u|^\beta \dist_*(\tilde{x},x)
\\ &\le& C^\supReturn |x_u|^\beta (1+\vartheta) | \tilde{x}_c - x_c |
\\ &\le& C^\supReturn |x_u|^\beta (1+\vartheta) K_\subC^{-1} 
         | \MapCombined(\tilde{x})_c - \MapCombined(x)_c |.
\end{array}
\]
The last inequality uses the expansion in $\xc$, 
thus it is valid for $\vartheta \le 1/\sigma$.
We choose $\delta$ small enough 
such that $C^\supReturn |x_u|^\beta K_\subC^{-1} < \sigma/(1+\sigma)$.
Due to the monotone increase of $\vartheta/(1+\vartheta)$ we also have 
$C^\supReturn |x_u|^\beta K_\subC^{-1} < \vartheta/(1+\vartheta)$ 
for all $\vartheta \ge \sigma$.
Thus we obtain the cone invariance
\begin{equation}\label{eqEstimateCone}
\dist_*(\MapCombined(\tilde{x})_\subUS , \MapCombined(x)_\subUS ) 
\;\le\; \vartheta | \MapCombined(\tilde{x})_\subC - \MapCombined(x)_\subC |
\end{equation}
for all $\sigma \le \vartheta \le 1/\sigma$.

The choice $\vartheta=\sigma$ yields (forward) invariance 
of the cone $C_x^\subC$ 
and the choice $\vartheta = 1/\sigma$ yields (backward) invariance 
of the cone $C_{\MapCombined x}^\subUS$.
Note that the cone invariances are in fact strict as claimed in the lemma.
The above estimates are strict inequalities for $x\ne\tilde{x}$.

Now consider the cone in transverse direction, 
that is $\MapCombined(\tilde{x}) \in C_{\MapCombined x}^\subUS$,
which amounts to 
$|\MapCombined(\tilde{x})_\subC - \MapCombined(x)_\subC| \le 
\sigma \dist_*( \MapCombined(\tilde{x})_\subUS , \MapCombined(x)_\subUS )$. 
We have already established invariance. 
Thus $|\tilde\xc-\xc| \le \sigma \|\tilde\xus-\xus\|$ 
and estimate (\ref{eqEstimateReturnMap}) yields
\[
\begin{array}{rcl}
\dist_*( \MapCombined(\tilde{x})_\subUS , \MapCombined(x)_\subUS )
   &\le& C^\supReturn |x_u|^\beta \dist_*(\tilde{x},x)
\\ &\le& C^\supReturn |x_u|^\beta (1+\sigma) \dist_*(\tilde\xus,\xus).
\end{array}
\] 
This is the claimed contraction, 
$K_\subUS^{-1} = C^\supReturn \delta^\beta (1+\sigma) < 1$, 
for $\delta$ small enough.
\end{proof}

\begin{thm}\label{thStableSet}
Assume conditions \ref{condLocalPassage} on the local passages and 
conditions \ref{condGlobalExcursion} on the global excursions.

The (local) stable set of the origin under the sequence of 
return maps $\MapCombined_k$ is given by
\[
\mathcal{W}_k^\supLocal = \{ (\xu^k, \xs{\ell}^k, \xc^k) \,|\, \xc^k = \xc^k(\xu^k, \xs{\ell}^k) \}.
\]
The functions $\xc^k$ are Lipschitz continuous with respect to the metric (\ref{eqMetricIn}). 
Furthermore, $\xc^k(0) = 0$ and 
$\MapCombined(\mathcal{W}_k^\supLocal) \subset \mathcal{W}_{k+1}^\supLocal$.
\end{thm}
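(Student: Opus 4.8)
The strategy is the graph-transform method of \cite{LiebscherHaerterichWebsterGeorgi2011-BianchiA}(section 4), all of whose hypotheses are furnished by the cone conditions of lemma \ref{thLipschitzReturnMap}: the return maps $\MapCombined_k$ expand the $\xc$-direction strictly on the centre cones $C^\subC$ and contract transverse to it strictly on the complements of the cones $C^\subUS$. First I would fix $\varepsilon,\delta,\sigma,K_\subC,K_\subUS$ as in that lemma and work in the complete metric space of sequences of Lipschitz graphs
\[
\Gamma \;=\; \Bigl\{\,(g_k)_{k\in\setN}\;\Big|\;
  g_k:B_k\to[-\delta,\delta],\ g_k(0)=0,\
  |g_k(\xus)-g_k(\tilde\xus)|\le\sigma\,\dist_*(\xus,\tilde\xus)\,\Bigr\},
\]
where $B_k=(0,\delta]^{N+1}\cup\{0\}$ is the transverse base of $\tilde\SectionIn_k$, see (\ref{eqReducedInSection}), and the distance on $\Gamma$ is $\sup_k\|g_k-\tilde g_k\|_\infty$. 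Completeness holds because a uniform limit of $\sigma$-Lipschitz maps into $[-\delta,\delta]$ vanishing at the origin again has these three properties.

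The graph transform $\mathcal{T}:\Gamma\to\Gamma$ assigns to $(g_k)$ the sequence whose $k$-th term is the graph of all $x\in\tilde\SectionIn_k$ with $\MapCombined_k(x)\in\mathrm{graph}(g_{k+1})$. Over a fixed transverse coordinate $\xus^*\in B_k$, two such points differ only in $\xc$, hence lie in each other's centre cone, on which $\MapCombined_k$ expands the $\xc$-component strictly; so the point is unique. For existence I would note that, by corollary \ref{thContinuousLocalMap}, theorem \ref{thLocalMapLipschitz} and conditions \ref{condGlobalExcursion}, the $\xc$-component $\phi(t)$ of $\MapCombined_k(\xus^*,t)$, $|t|\le\delta$, equals the Kasner map $\Phi$ of (\ref{eqGeneralKasnerMap}) evaluated at $t$ up to an error small compared with $\delta$ once $\varepsilon,\delta$ are small; since $\Phi(0)=0$ and $\Phi$ expands by $L^{-1}>K_\subC>1$ this gives $\phi(-\delta)<-\delta<\delta<\phi(\delta)$, so an intermediate-value argument applied to $t\mapsto\phi(t)-g_{k+1}\bigl((\MapCombined_k(\xus^*,t))_\subUS\bigr)$ yields the required $t_0\in(-\delta,\delta)$, while the invariant subspaces (loc-v), (glob-i) together with the strong transverse contraction of $\MapLocal_k$ keep the image inside $\tilde\SectionIn_{k+1}$. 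The resulting graph is again $\sigma$-Lipschitz in $\dist_*$ precisely by the backward invariance of $C^\subUS$ in lemma \ref{thLipschitzReturnMap}(i), and it vanishes at $0$ because $\MapCombined_k(0)=0\in\mathrm{graph}(g_{k+1})$. Hence $\mathcal{T}(\Gamma)\subset\Gamma$.

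Next I would verify that $\mathcal{T}$ is a contraction. Fix $k$, $\xus^*$, set $t=\mathcal{T}(g)_k(\xus^*)$, $\tilde t=\mathcal{T}(\tilde g)_k(\xus^*)$ and write $\MapCombined_k(\xus^*,t)=(a,g_{k+1}(a))$, $\MapCombined_k(\xus^*,\tilde t)=(\tilde a,\tilde g_{k+1}(\tilde a))$. Forward invariance of the centre cone gives $\dist_*(a,\tilde a)\le\sigma\,|g_{k+1}(a)-\tilde g_{k+1}(\tilde a)|$, and the $\sigma$-Lipschitz bound on $g_{k+1}$ gives $|g_{k+1}(a)-\tilde g_{k+1}(\tilde a)|\le\sigma\,\dist_*(a,\tilde a)+\|g_{k+1}-\tilde g_{k+1}\|_\infty$, whence $|g_{k+1}(a)-\tilde g_{k+1}(\tilde a)|\le(1-\sigma^2)^{-1}\|g_{k+1}-\tilde g_{k+1}\|_\infty$; combined with the $\xc$-expansion $|g_{k+1}(a)-\tilde g_{k+1}(\tilde a)|\ge K_\subC|t-\tilde t|$ this yields $\sup_k\|\mathcal{T}(g)_k-\mathcal{T}(\tilde g)_k\|_\infty\le\bigl(K_\subC(1-\sigma^2)\bigr)^{-1}\sup_k\|g_k-\tilde g_k\|_\infty$, a genuine contraction since $K_\subC(1-\sigma^2)>1$. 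The Banach fixed-point theorem then supplies the unique $(\xc^k):=(g^*_k)\in\Gamma$; putting $\mathcal{W}_k^\supLocal:=\mathrm{graph}(\xc^k)$, the inclusion $\MapCombined_k(\mathcal{W}_k^\supLocal)\subset\mathcal{W}_{k+1}^\supLocal$, the normalisation $\xc^k(0)=0$, and Lipschitz continuity with respect to (\ref{eqMetricIn}) hold by construction.

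Finally I would identify $\mathcal{W}_k^\supLocal$ with the local stable set. If $x\in\mathcal{W}_k^\supLocal$, then since $0\in\mathcal{W}_k^\supLocal$ and the forward iterates of $x$ remain on the graphs $\mathcal{W}_{k+j}^\supLocal$, at each step $x$ and the origin lie in each other's $C^\subUS$ cone, so lemma \ref{thLipschitzReturnMap}(ii) forces the transverse distance to the origin to be divided by $K_\subUS>1$; with the $\sigma$-Lipschitz bound on the graphs this gives convergence of the forward orbit to the origin. Conversely, if $x\in\tilde\SectionIn_k\setminus\mathcal{W}_k^\supLocal$, comparing $x$ with the graph point above the same $\xus$ — a purely vertical offset — and using forward invariance of the centre cone, the $\xc$-expansion and the $\sigma$-Lipschitz bound on $\mathcal{W}_{k+1}^\supLocal$, the vertical distance of the iterate to the next graph is multiplied by $K_\subC(1-\sigma^2)>1$ at each return, so the forward orbit leaves $\tilde\SectionIn$ after finitely many steps. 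I expect the main obstacle to be the existence half of the graph transform: reconciling the slightly mismatched domains $\tilde\SectionIn_k$ and their images under $\MapCombined_k$, and carrying the sweeping/intermediate-value step through with the metric $\dist_*$ that degenerates on the invariant boundaries — this is exactly where one invokes the uniformly small $\xc$-drift of $\MapLocal_k$ from corollary \ref{thContinuousLocalMap}, the strong transverse contraction of theorem \ref{thLocalMapLipschitz}, and the strict Kasner expansion $L^{-1}>K_\subC$ from (glob-iii).
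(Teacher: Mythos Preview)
Your proposal is correct and follows essentially the same graph-transform approach as the paper: the same space of $\sigma$-Lipschitz graphs vanishing at the origin, the same contraction factor $\bigl(K_\subC(1-\sigma^2)\bigr)^{-1}$, and the same use of the cone conditions of lemma \ref{thLipschitzReturnMap} for well-definedness and invariance. The only cosmetic difference is in the existence step, where you reach back to the Kasner expansion $L^{-1}$ and the local estimates directly, whereas the paper argues purely through the cone invariance and expansion already packaged in lemma \ref{thLipschitzReturnMap}; your more detailed identification of the fixed point with the stable set (both inclusions) is an elaboration of what the paper states in one sentence.
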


\begin{proof}
The idea of the proof is to define a graph transformation on the space of 
sequences of Lipschitz-continuous graphs 
$\{\xus\mapsto\xc=\zeta_k(\xus))\;|\;k\in\setN\}$ 
by the inverse return maps $\MapCombined_k^{-1}$.
The uniform cone invariance provided by the previous lemma will ensure 
that the Lipschitz property of the graphs is preserved.
Due to the expansion/contraction conditions of the previous lemma, 
the graph transformation turns out to be a contraction
on the space of sequences of Lipschitz-continuous graphs. 
The fixed point of this contraction then yields the claim.

To make this idea precise, consider the Banach space of Lipschitz-continuous 
functions 
\[
\begin{array}{rcrll}
 X &=& \{ &
             \zeta\,:\;(0,\delta]^{1+N}\cup{0}\to[-\delta,\delta],\;
           \xus=(\xu,\xs{}) \mapsto \xc=\zeta(\xus)
\\ &&& \mbox{such that } \mathrm{Lip}(\zeta)\le\sigma 
       \;\mbox{ and }\; \zeta(0)=0       & \}
\end{array}
\]
with sup-norm. The parameters $\delta,\sigma < 1$ correspond 
to those of lemma \ref{thLipschitzReturnMap}.
Lipschitz continuity is considered with respect to the metric $\dist_*$ 
given by (\ref{eqMetricIn}).
Consider also the space of sequences
\[
X^\setN \;=\; \{ \; (\zeta_k)_{k\in\setN} \; | \; \zeta_k \in X \;\}
\]
with sup-norm.

Define maps $G_k:X\to X$ as 
$\mathrm{graph}(G_k\zeta_{k+1}) := \MapCombined_k^{-1}\mathrm{graph}(\zeta_{k+1})$,
i.e. as the transformations of the graphs of the functions in $X$.
More precisely
\[
\begin{array}{rcll}
G_k \zeta\left(\left(\MapCombined_k^{-1}(\xus,\zeta(\xus))\right)_\subUS\right) 
  &:=& \left(\MapCombined_k^{-1}(\xus,\zeta(\xus))\right)_c,
  & \mbox{for } \xus \ne 0,
\\
G\zeta(0) &:=& 0.
\end{array}
\]
The first equation implicitly assumes that $(\xus,\zeta(\xus))$ 
has a pre-image under $\MapCombined$ and that it lies in the domain.
The second equation just gives the pre-image of the origin under $\MapCombined$.
Note the restriction to non-negative $\xu,\xs{1},\xs{N}$ consistent with 
the invariant boundaries (loc-v), (glob-i).

We will prove the following claims, uniformly in the index $k$, 
(which is dropped from now on to simplify notation)
\begin{enumerate}
\item[(i)] 
domain of definition:
for all $\zeta\in X$ and $\xus \in (0,\delta]^{N+1}$ 
there exists $\tilde\xus \in (0,\delta]^{N+1}$, 
such that
$\left(\MapCombined^{-1}(\tilde\xus,\zeta(\tilde\xus))\right)_\subUS = \xus$.
\item[(ii)]
well-definedness:
for all $\zeta\in X$ and $\xus, \tilde\xus \in (0,\delta]^{N+1}$ the following holds.
If $\left(\MapCombined^{-1}(\xus,\zeta(\xus))\right)_\subUS = 
\left(\MapCombined^{-1}(\tilde\xus,\zeta(\tilde\xus))\right)_\subUS \in (0,\delta]^{N+1}$
then already $\xus = \tilde\xus$.
\end{enumerate}
Conditions (i) and (ii) yield a well defined function $G\zeta$ with 
$(G\zeta)(0)=0$ for every $\zeta\in X$. 
\begin{enumerate}
\item[(iii)] 
Lipschitz property:
for all $\zeta\in X$ the function $G\zeta$ is again Lipschitz continuous 
with Lipschitz constant $\mathrm{Lip}(G\zeta)\le \sigma$.
Note that the Lipschitz property is again considered 
with respect to the metric $\dist_*$. 
\item[(iv)] 
contraction: 
The exists a constant $0<\kappa<1$ such that
for all $\zeta,\tilde{\zeta} \in X$ the estimate 
$\|G\tilde{\zeta} - G\zeta \|_\mathrm{sup} \le 
\kappa \| \tilde{\zeta} - \zeta \|_\mathrm{sup}$ 
holds.
\end{enumerate}
Conditions (i)--(iii) prove that the graph transformation $G$ 
indeed maps Lipschitz continuous functions in $X$ to 
Lipschitz continuous functions in $X$, with respect to the metric $\dist_*$.
Condition (iv) provides a contraction.
Uniformity of bounds yield a contraction on the space $X^\setN$ of sequences.
If all four conditions hold, then by contraction-mapping theorem 
there is a unique fixed point, 
i.e.~a sequence of Lipschitz continuous function $\zeta_k^*\in X$ with 
$G_k\zeta_{k+1}^* = \zeta_k^*$.

Its graphs form a forward invariant set under $\MapCombined$ composed of local 
manifold. It is also the stable set of the origin due 
to the cone conditions of lemma \ref{thLipschitzReturnMap}. 
This yields the claim of the theorem.
Therefore it remains to prove (i)--(iv):

(i) Let $\zeta\in X$ and $\xus \in (0,\delta]^{N+1}$ be given. 
The straight line $\{\xus\} \times [-\delta,\delta]$ is contained in the cone
$C_{(\xus,0)}^\subC$. 
We use lemma \ref{thLipschitzReturnMap}: 
$\MapCombined(\xus,0) \in \tilde\SectionIn$ by invariance and contraction 
of the cone $C_0^\mathrm{u,s}$.
Thus, by invariance and Expansion of $C_{(\xus,0)}^\subC$, 
the image of the straight line $\{\xus\} \times [-\delta,\delta]$ under 
$\MapCombined$ contains a curve in $C_{\MapCombined(\xus,0)}^\subC$ 
connecting the extremal planes $\{\xc=\pm\delta\}$. 
By the intermediate value theorem this curve must intersect the graph of 
$\zeta$.

(ii) Let $\zeta\in X$ and $\xus,\tilde\xus \in (0,\delta]^{N+1}$ be given 
with $\left(\MapCombined^{-1}(\xus,\zeta(\xus))\right)_\subUS = 
\left(\MapCombined^{-1}(\tilde\xus,\zeta(\tilde\xus))\right)_\subUS 
\in (0,\delta]^{N+1}$. 

Then $\MapCombined^{-1}(\tilde\xus,\zeta(\tilde\xus)) \in 
C_{\MapCombined^{-1}(\xus,\zeta(\xus))}^\subC$,
and by cone invariance 
$(\tilde\xus,\zeta(\tilde\xus)) \in C_{(\xus,\zeta(\xus))}^\subC$.
The Lipschitz-bound on $\zeta\in X$ on the other hand implies 
$(\tilde\xus,\zeta(\tilde\xus)) \in C_{(\xus,\zeta(\xus))}^\subUS$,
thus $(\tilde\xus,\zeta(\tilde\xus)) = (\xus,\zeta(\xus))$.

(iii) Again, the Lipschitz-bound on $\zeta\in X$ translates to 
$(\tilde\xus, \zeta(\tilde\xus)) \in C_{(\xus, \zeta(\xus))}^\subUS$ 
for all $x,\tilde{x}$.
Cone invariance and lemma \ref{thLipschitzReturnMap}, 
immediately yield the Lipschitz bound on $G\zeta$.

(iv) The origin is fixed by construction, thus we only have to estimate 
the distance of the nonsingular part.
Let $\zeta,\tilde{\zeta}\in X$ and $\xus,\tilde\xus \in (0,\delta]^{N+1}$ 
be given with $(\MapCombined^{-1}(\xus,\zeta(\xus)))_\subUS = 
(\MapCombined^{-1}(\tilde\xus,\tilde{\zeta}(\tilde\xus)))_\subUS 
\in (0,\delta]^{N+1}$.

Again, this implies $\MapCombined^{-1}(\tilde\xus,\tilde{\zeta}(\tilde\xus)) 
\in C_{\MapCombined^{-1}(\xus,\zeta(\xus))}^\subC$,
and by cone invariance we have 
$(\tilde\xus,\tilde{\zeta}(\tilde\xus)) \in C_{(\xus,\zeta(\xus))}^\subC$.
Thus we can estimate
\[
\begin{array}{rcl}
|\tilde{\zeta}(\tilde\xus) - \zeta(\xus)| 
   &\le& \|\tilde{\zeta} - \zeta\|_\mathrm{sup} 
         + \sigma \dist_*(\tilde\xus, \xus)
\\ &\le& \|\tilde{\zeta} - \zeta\|_\mathrm{sup} 
         + \sigma^2 |\tilde{\zeta}(\tilde\xus) - \zeta(\xus)|
\end{array}
\]
The first inequality uses the Lipschitz bound on $\zeta\in X$ 
whereas the second one uses the aforementioned cone $C_{(\xus,\zeta(\xus))}^\subC$.
We obtain
\[
|\tilde{\zeta}(\tilde\xus) - \zeta(\xus)|  \;\le\;  
\frac{1}{1-\sigma^2} \|\tilde{\zeta} - \zeta\|_\mathrm{sup}.
\]
On the other hand, the expansion of 
$C_{\MapCombined^{-1}(\xus,\zeta(\xus))}^\subC$ under $\MapCombined$ yields
\[
\begin{array}{rcl}
&& \hspace{-5em} 
\left| (G\tilde{\zeta} - G\zeta) 
   \left(\left(\MapCombined^{-1}(\xus,\zeta(\xus))\right)_\subUS\right) \right|
\\ &=& 
\left| \left(\MapCombined^{-1}(\xus,\zeta(\xus))\right)_\subC 
   - \left(\MapCombined^{-1}(\tilde\xus,\tilde{\zeta}(\tilde\xus))\right)_\subC \right|
\\ &\le&
\frac{1}{K_\subC} \left| \zeta(\xus) - \tilde{\zeta}(\tilde\xus) \right|
\\ &\le&
\frac{1}{K_\subC (1-\sigma^2)} \|\tilde{\zeta} - \zeta\|_\mathrm{sup}.
\end{array}
\]
Lemma \ref{thLipschitzReturnMap} provides constants $K_\subC$, $\sigma$ 
with $K_\subC(1-\sigma^2) > 1$. 
Therefore the last estimates yield the claimed contraction, 
$\kappa = 1/(K_\subC(1-\sigma^2))$, and this finishes the proof.
\end{proof}

With theorem \ref{thStableSet} we have finally proved the main theorems 
\ref{thPeriod3Manifold}, \ref{thAbstractSequenceManifold} as 
formulated in section \ref{secMainResult}.


\section{Discussion and outlook}
\label{secDiscussion}

Unfortunately the metric used to obtain the contraction in the proof of
the main theorem of this paper is singular on the invariant subspaces. We have 
no result on the way in which the manifolds constructed approach these 
boundaries. The only exception is the heteroclinic cycle itself. The cap of 
heteroclinic orbits corresponds to the line $\{0,\xc\}$ and the new metric is 
regular there. In fact, the manifold is nicely attached to the given 
heteroclinic chain. Moreover, in the proof of theorem \ref{thStableSet} we 
could restrict to very small neighbourhoods of the primary heteroclinic chain, 
i.e. $\delta\to0$. Then we can choose arbitrarily small Lipschitz bounds on the 
functions considered, i.e. $\sigma\to0$. Thus the manifolds constructed are 
tangent to the fibre $\{\xc \;=\mbox{constant}\}$ at the heteroclinic chain. 

For completeness the following subtlety should be mentioned. The set of
points in the domain of definition of the dynamical system corresponding to 
Bianchi type IX vacuum solutions or Bianchi type VI${}_0$ solutions with 
magnetic field is an open subset bounded by invariant manifolds and it
lies on only one side of these manifolds. The fact that the global excursion
map has its image on the correct side of these manifolds is not mentioned
in the analytical treatment above. Nevertheless it follows immediately from
the nature of the underlying geometrical problem.  

Up to this point vacuum models of type IX were replaced by Einstein-Maxwell
models of type VI${}_0$ and one non-vanishing magnetic field component. Now
some generalizations will be mentioned. 
In \cite{LiebscherHaerterichWebsterGeorgi2011-BianchiA} some results
were obtained for type IX solutions with perfect fluids having a linear
equation of state $p=(\gamma-1)\rho$. Restrictions had to be imposed on the
value of $\gamma$. The techniques developed in this paper allow these 
results to be generalized to cases where these restrictions are relaxed.
The four-dimensional dynamical system is replaced by a five-dimensional
one and at each Kasner point there is an additional eigenvalue $3(2-\gamma)$.
This situation can be treated for all $\gamma<2$. In particular the method
applies for all values of $\gamma$ in the physical range $[1,2]$ except
for the case $\gamma=2$ where the dynamics is known to be very different.
Bianchi type VI${}_0$ solutions with a perfect fluid and a magnetic field can 
be treated in a very similar way. The additional eigenvalue arising from the 
fluid is the same as in the case without magnetic field 
\cite{LeBlancKerrWainwright1995-MagneticBianchiVI}.

It is possible to formulate the Bianchi type II models with a magnetic
field as a five-dimensional dynamical system 
\cite{LeBlanc1997-MagneticBianchiI}. 
In this approach the magnetic field has only one non-zero component in 
the frame used but the metric has a non-zero off-diagonal component in that 
frame. The eigenvalues of the linearisation about a Kasner solution are given 
by
\begin{equation}
3p_1, \quad 6p_2, \quad 3(p_3-p_1).
\end{equation}
An important qualitative difference to the model of type VI${}_0$ is 
that for some regions of the Kasner circle the stable manifold of 
the Kasner solution is two-dimensional. Thus in general the results of 
this paper do not apply to heteroclinic chains for the Bianchi type II model 
with magnetic field. In a similar way it is possible to 
formulate the type I models with a magnetic field as a five-dimensional 
dynamical system with only one component of the magnetic field being non-zero
\cite{LeBlanc1997-MagneticBianchiI}. 
In this case all the off-diagonal metric components are non-zero in general.
Again it happens that the stable manifold can be two-dimensional. 
Note that it has been shown in 
\cite{LeBlancKerrWainwright1995-MagneticBianchiVI} that it is not 
possible to have solutions of the Einstein-Maxwell equations of Bianchi type 
VIII or IX with a non-vanishing pure magnetic field.

Up to now there is no generalization of the results of 
\cite{LiebscherHaerterichWebsterGeorgi2011-BianchiA} to oscillatory models
of Bianchi class B. In fact it would be very interesting to have such
results for Bianchi type VI${}_{-\frac19}$ where oscillatory solutions are
expected to exist. One obstacle is the existence of stable manifolds of
dimension greater than one as in the examples with magnetic field above.
Another is that invariant manifolds of the type which played such an 
important role in the proofs of this paper do not appear to exist for
models of Bianchi class B. 

In the case of Bianchi type IX vacuum models it has been proved that the 
$\alpha$-limit set of each solution belongs to the union of points of type I
and type II \cite{Ringstroem2001-BianchiIX}. Interestingly it is not known if 
the corresponding statement holds for the superficially similar type VIII. 
This contrasts with the fact that the results for type IX in
\cite{LiebscherHaerterichWebsterGeorgi2011-BianchiA} extend
almost without change to type VIII. It is easy to formulate an analogue 
of the result of \cite{Ringstroem2001-BianchiIX} for solutions of type 
VI${}_0$ with magnetic field and it would be interesting to investigate 
whether it holds, especially since this might throw some new light on the 
unsolved Bianchi VIII problem.

To sum up, it is clear that the above complex of problems represents a 
promising opportunity to learn about the related questions of the BKL 
conjecture, the dynamics of Bianchi models near the initial singularity
and the stability of heteroclinic cycles in more general dynamical systems.

\smallskip

\noindent\textbf{Acknowledgement:} 
The research of SL and ADR was partially supported by the Collaborative Research
Centre 647 Space--Time--Matter of the German Research Foundation (DFG). SBT
thanks the Albert Einstein Institute for hospitality during a one month visit at
the start of this work.


\bibliographystyle{alpha}
\bibliography{LieRenTch-MaxwellBianchi}

\end{document}